\spnewtheorem{thm}{Theorem}[section]{\bfseries}{\itshape}
\DeclareMathAlphabet{\mathsl}{OT1}{cmr}{m}{sl}
\colorlet{shadecolor}{gray!20}
\newcommand{\red}[1]{\textcolor{Mahogany}{#1}}
\newcommand{\blue}[1]{\textcolor{MidnightBlue}{#1}}
\long\def\comment#1{}
\newcommand{\eg}{{\em e.g.}}
\newcommand{\ie}{{\em i.e.}}
\newcommand{\etal}{\emph{et al.}}
\newcommand\uSize{\ensuremath{k}}
\newcommand{\Ascr}{\mathcal{A}}
\newcommand{\Pscr}{\mathcal{P}}
\newcommand{\Cscr}{\mathcal{C}}
\newcommand{\Sscr}{\mathcal{S}}
\newcommand{\Rscr}{\mathcal{R}}
\newcommand{\Tscr}{\mathcal{T}}
\newcommand{\Wscr}{\mathcal{W}}
\newcommand{\CS}{\mathcal{CS}}
\newcommand{\lts}{l.t.s.~}
\newcommand{\ltss}{l.t.s.}
\newcommand{\newprop}{reliability}
\newcommand{\Newprop}{Reliability}
\newcommand{\pon}{point-of-no-re\-turn}
\newcommand{\PON}{Point-of-No-Re\-turn}
\newcommand{\pons}{points-of-no-return}
\newcommand{\npon}{breaking point}
\newcommand{\prop}{recoverability}
\newcommand{\PROP}{Recoverability}
\newcommand{\realiz}{$Z$~}
\newcommand{\sur}{$S$~}
\newcommand{\rel}{$L$~}
\newcommand{\recov}{$V$~}
\newcommand{\nrealiz}{\mbox{$n$-$Z$}~}
\newcommand{\nsur}{\mbox{$n$-$S$}~}
\newcommand{\nrel}{\mbox{$n$-$L$}~}
\newcommand{\quotes}[1]{``#1''}
\newcommand\next{\mathcal{N}}
\newcommand\critical{\mathcal{X}}
\newcommand\mustTick{\ensuremath{\mathcal{L}}}
\newcommand{\tup}[1]{\langle#1\rangle}
\newcommand\lra{\longrightarrow}
\newcommand\Dmax{D_{max}}
\title{On the Complexity of Verification of Time-Sensitive Distributed Systems:
Technical Report}
\author{Max Kanovich\inst{1,6} Tajana Ban Kirigin\inst{2} Vivek Nigam\inst{3,4} Andre Scedrov\inst{5} and Carolyn Talcott\inst{7}}
\institute{
University College London, London, UK,
\email{m.kanovich@ucl.ac.uk}
\and
Department of Mathematics University of Rijeka, Rijeka, Croatia,
\email{bank@uniri.hr}
\and
Federal University of Paraíba, Jo\~ao Pessoa, Brazil,
\email{vivek@ci.ufpb.br}
\and
Munich Research Center, Huawei, Munich, Germany
\and
University of Pennsylvania, Philadelphia,
USA,
\email{scedrov@math.upenn.edu}
\and 
HSE University, Computer Science Dept., Moscow, Russia
\and
SRI International, Menlo Park, USA, \email{clt@csl.sri.com} 
}
\begin{document}
\maketitle
\begin{abstract}
This paper develops a Multiset Rewriting language with explicit time for the specification and analysis of Time-Sensitive Distributed Systems (TSDS). 
Goals are often specified using explicit time constraints. A good trace is an infinite trace in which the goals are satisfied perpetually despite possible interference from the environment. In our previous work~\cite{kanovich16formats}, we discussed two desirable properties of TSDSes, \emph{realizability} (there exists a good trace) and \emph{survivability} (where, in addition, all admissible traces are good). Here we consider two additional properties, 
\emph{recoverability} (all compliant traces do not reach points-of-no-return) and \emph{reliability} (the system can always continue functioning using a good trace). Following~\cite{kanovich16formats}, we focus on a class of systems called \emph{ Progressing Timed Systems } (PTS), where intuitively only a finite number of actions can be carried out in a bounded time period. We prove that for this class of systems the properties of recoverability and reliability coincide and are PSPACE-complete.
 Moreover, if we impose a bound on time (as in bounded model-checking), we show that for PTS the reliability property is in the $\Pi_2^p$ class of the polynomial hierarchy, a subclass of PSPACE.
We also show that the bounded survivability is both NP-hard and
coNP-hard.
\end{abstract}

\vspace{1em}

\section{Introduction}

In our previous work~\cite{kanovich16formats}, we considered the verification of Time-Sensitive Distributed Systems (TSDS) motivated by applications with autonomous drones performing surveillance of an area.  The drones must always collectively have recent pictures, \ie, at most M time units old, of certain strategic locations. In attempting to achieve this goal, the drones consume energy and must return to the base station to recharge their batteries. In addition, the environment may interfere as there may be winds that move the drone in a certain direction, or other flying objects may block a drone's path.

In~\cite{kanovich16formats} we considered two verification properties, realizability and survivability. Here we introduce two more properties, reliability and recoverability. Let us explain all four properties in a little more detail.  
The \emph{realizability} problem consists of checking, whether under the given time constraints,
the specified system can achieve the assigned goal, \eg, always collect recent pictures of the sensitive locations. In many settings, the drones themselves or the environment may behave non-deterministically. For example, if a drone wants to reach a point in the northeast, it may initially move either north or east, both being equally likely. Similarly, there could be wind at a particular location, causing any drone under the influence of the wind to move in the direction of the wind.
A stronger property, \emph{survivability}, accounts for such nondeterminism and tests whether the specified system can achieve the assigned goal for all possible outcomes (of drone actions and environmental influences). 
The properties of realizability and survivability represent the two extremes w.r.t. requirements placed on a system. A system that is realizable can achieve the designed goal in some way. A system that satisfies survivability will always achieve the goal, under all circumstances. In some cases, realizability may not be satisfactory, while in others, survivability may be too costly or unattainable. For such systems, intermediate solutions are of interest.

To model such intermediate requirements in system design, in this paper we introduce additional properties, namely \emph{\newprop} ~and \emph{\prop}.  In order to ensure system goals, drones should always be able to function. In particular, drones should always be able to come back to recharge, both in terms of distance and energy. In other words, drones should never go too far and reach so-called \emph{points-of-no-return} where it may no longer be possible to safely return to home base. Engineers should strive to program drones to avoid reaching points-of-no-return. This property is referred to as \emph{recoverability}.

A system satisfies \emph{reliability} if the system is always able to successfully continue its expected performance, \ie, the system never gets stuck. For example, drones should always be able to ensure the system goals, regardless of the disturbances they have experienced in the environment. At any point in time, after the drones have successfully monitored sensitive locations for a certain period of time, they should be able to find a way to continue with their good performance. For example, considering possible technical failures and maintenance of the drones, it may be necessary for engineers to call in additional drones to collectively provide up-to-date images of the entire area of interest.

Following \cite{kanovich16formats}, we focus on a class of systems called \emph{ Progressing Timed Systems } (PTS), which are specified as timed multiset rewriting theories. In a PTS, only a finite number of actions can be carried out in a bounded time interval. 
In addition to formalizing the properties, we show that the following relations hold for PTS:
$$
S_{urvivability}
\ \implies \ R_{eliability}
\ \Longleftrightarrow  \ R_{ecoverability} 
\ \implies  \ R_{ealizability}\ . 
$$

In their spirit, these properties seem similar to safety and liveness properties~\cite{alpern87dc} or a combination of these properties. However, it is not straightforward to classify them in these terms. 
The properties we consider, defined in Section~\ref{sec:problems}, contain an alternation of quantifiers, which makes it more challenging to formally represent them as a combination of safety and liveness properties~\cite{alpern87dc}. 

In our previous work~\cite{kanovich.mscs,kanovich12rta,kanovich15post,kanovich17jcs}, we proposed a timed Multiset Rewriting (MSR) framework for specifying compliance properties similar to {quantitative safety properties}~\cite{alpern87dc,clarkson10jcs} and investigated the complexity of a number of decision problems. These properties were defined over sets of {finite traces}, \ie, executions of a finite number of actions. 
The above properties, on the other hand, are defined over \emph{infinite traces}. 

The transition to properties over infinite traces leads to many challenges, as one can easily fall into undecidable fragments of verification problems. The main challenge is to identify the syntactic conditions on specifications so that the verification problems fall into a decidable fragment and, at the same time, that interesting examples can be specified.

The  remainder of the paper is organized as follows: 

\begin{itemize}
\item
Following \cite{kanovich16formats}, in Section \ref{sec:timedmsr} we discuss  \emph{Progressing Timed Systems} (PTS). 
 In Section~\ref{sec:progdrones} we illustrate its expressiveness by encoding a simplified drone example as a PTS.

\item 
In Section~\ref{sec:timedprop} we define a language for specifying the relevant quantitative temporal properties of timed systems used to define the properties of \emph{realizability}, \emph{ \newprop, \prop} and \emph{survivability}.
\item
In Section~\ref{sec:relations-properties} we then formally compare the expressiveness of these properties.

\item Section~\ref{sec:complex} investigates
the complexity of verification problems that involve the above properties.
 While these problems are undecidable in general~\cite{kanovich.mscs}, we show that they are PSPACE-complete for PTSes. 
We also show that, when we bound time (as in bounded-model checking),  realizability of PTSes is NP-complete, survivability is in the 
$\Delta_2^p$ class of the polynomial hierarchy and the \newprop~is in the $\Pi_2^p$ class of the polynomial hierarchy~\cite{papadimitriou07book}. 
The upper bound results regarding realizability and survivability were obtained in \cite{kanovich16formats}, while here we obtain new complexity results for  the lower bound complexity results for the $n$-time bounded survivability and the  complexity results relating to reliability from Section~\ref{sec:complex}.
\item
We also provide a discussion on related and future work, Section~\ref{sec:related}. 

\end{itemize}

\paragraph{Relation to our previous work}
This technical report considerably extends the conference paper~\cite{kanovich16formats}. It also updates and subsumes the technical report \cite{kanovich16arxivformats}.
For ease of reference, we include some of the material from~\cite{kanovich16formats,kanovich16arxivformats}.
All the material involving properties of \newprop~and \prop~is new, including the investigation of the relations among all four properties from Section~\ref{sec:relations-properties}, the complexity results relating to reliability from Section~\ref{sec:complex}, and the lower bound complexity results for $n$-time bounded survivability are new.

\section{Multiset Rewriting Systems}
\label{sec:timedmsr}

Assume a finite first-order typed alphabet, $\Sigma$, with variables, constants, function and predicate symbols.
Terms and formulas are constructed as usual (see~\cite{enderton}) by applying symbols of correct type (or sort). 

\begin{definition}[Fact]
If $P$ is a predicate of type $\tau_1 \times \tau_2 \times \cdots \times \tau_n \rightarrow o$, where $o$ is the type for propositions, and $u_1, \ldots, u_n$ are terms of types $\tau_1, \ldots, \tau_n$, respectively, then $P(u_1, \ldots, u_n)$ is a \emph{fact}.
A fact is \emph{ground} if it contains no variables. 
\end{definition}

We assume that the alphabet contains the constant $z : Nat$ denoting zero and the function $s : Nat \to Nat$ denoting the successor function. Whenever it is clear from the context, we write $n$ for $s^n(z)$ and $(n + m)$ for $s^n(s^m(z))$. 

\vspace{0.5em}
In addition, we allow an unbounded number of fresh values~\cite{cervesato99csfw,durgin04jcs} to be involved.

\vspace{0.5em}
In order to specify timed systems,  we attach a timestamp 
to each fact.

\begin{definition}[Timestamped Fact]
\emph{Timestamped facts} are of the form $F@t$, where $F$ is a fact and $t \in \mathbb{N}$ is a natural number called {\em timestamp}.
\end{definition}

Note that timestamps are \emph{not} constructed by using the successor function. To obtain the complexity results, we use a symbolic representation of the problems and abstractions that can handle unbounded time values. For more insight see discussion after Definition~\ref{def:size-fact}.

There is a special predicate symbol $Time$ with arity zero that is used to represent global time.

For simplicity, we often just say facts instead of timestamped facts. Also, when we want to emphasize the difference between a fact $F$ and a timestamped fact $F@t$, we say that $F$ is an \emph{untimed fact}.

\begin{definition}[Configuration]
A {\em configuration} is a finite multiset of ground timestamped facts, 
~$\Sscr = \{~Time@t, ~F_1@t_1, \ldots, ~F_n@t_n~\}$ ~
with a single occurrence of a $Time$ fact. \\[3pt]
Given a configuration $\Sscr$ containing $Time@t$, we say that a fact $F@t_F$ in $\Sscr$ is a \emph{future fact} if its timestamp is greater than the global time $t$, \ie,~if ~$t_F>t$.
Similarly, a fact $F@t_F$ in $\Sscr$ is a \emph{past fact} if ~$t_F<t$, and a fact $F@t_F$ in $\Sscr$  is a \emph{present fact} if ~$t_F=t$.
\end{definition}

Configurations are to be interpreted as states of the system. Consider the following configuration where the global time is 4:
\[
\begin{small}
\Sscr_1 = \left\{\begin{array}{l}
Time@4, \,Dr(d1,1,2,10)@4,\,Dr(d2,5,5,8)@4,\,P(p1,1,1)@3,\,P(p2,5,6)@0
\end{array}\right \}
\end{small}
\label{conf-example-1}
\]
Fact $Dr(d_{Id},x,y,e)@t$ denotes that drone $d_{Id}$ is at position $(x,y)$ at time $t$ with $e$ energy units left in its battery; fact $P(p_{ID},x,y)@t$ denotes that a point to be monitored  is at position $(x,y)$ and that the last picture of it was taken at time $t$. Thus, the above configuration denotes a scenario with two drones located at positions $(1,2)$ and $(5,5)$ and with 10 and 8 energy units, and with two points to be monitored at positions $(1,1)$ and $(5,6)$, where the former was last photographed at time $3$ and the latter at time 0. 

Using variables, including time variables, we are able to represent  (sets of) configurations of particular form. 
 For example, 
$$~Time@(T+D), \,Dr(X,5,6,Y)@(T+D),\,P(p2,5,6)@T$$
specifies that some drone $X$ with $Y$ energy units is currently  at the position $(5,6)$ and that the point of interest at position $(4,6)$ was last photographed $D$ time units ago. This holds for any configuration containing the above facts for some instantiation of the variables $T,D,X$ and $Y$.

\vspace{0.5em}
Configurations are modified by multiset rewrite rules which can be interpreted as actions of the system. 
There is only one rule, $Tick$, which represents how global time advances
\begin{equation}
\label{eq:tick}
Time@T \lra Time @ (T+1)
\end{equation}
where $T$ is a time variable denoting the global time. 
With an application of a $Tick$ rule,  a configuration
$\{~Time@t, \,F_1@t_1, \ldots, \,F_n@t_n~\}$ representing the  state of a system at time \,$t$, is replaced with the configuration 
 \mbox{$\{~Time@(t +1 ), \,F_1@t_1, \ldots, \,F_n@t_n~\} $} \ representing the system at time ~$t+1$.

\vspace{0.5em}
The remaining rules are \emph{instantaneous}, since they do not modify global time, but may modify the remaining facts of configurations (those different from $Time$). Instantaneous rules have the form:
\begin{equation}
\begin{array}{l}
Time@T, \, 
\,W_1@T_1,\ldots,\,W_p@T_p,
\red{\,F_1@T_1'}, \ldots, \red{\,F_n@T_n'} \ \mid \ \,\Cscr \ \lra \\ \ 
\exists \vec{X}.\,[ \ Time@T,
\,W_1@T_1,\ldots,\,W_p@T_p, \,\blue{Q_1@(T + d_1)}, \ldots, \,\blue{Q_m@(T + d_m)} \, ]
\label{eq:instantaneous}
\end{array}
\end{equation}
where $d_1, \ldots, d_m$ are natural numbers, 
\, $W_1@T_1,\ldots,\,W_p@T_p, \, F_1@T_1', \ldots, {\,F_n@T_n'}$
are timestamped facts, possibly containing variables, and $\Cscr$ is the guard of the rule which is a set of constraints
involving the time variables that appear as timestamps of facts in the pre-condition of the rule, \ie,~the variables ~$T, T_1, \ldots, T_p, T_1', \ldots, T_n'$. 
The facts $W_i, F_j$ and $Q_k$ are all different from the fact $Time$ and $\vec{X}$ are variables that do not appear in $W_1,\ldots,\,W_p,\, F_1, \ldots, {\,F_n}$.

Constraints may be of the form:
\begin{equation}
\label{eq:constraints}
T > T' \pm d \quad \textrm{ or } \quad T = T' \pm d 
\end{equation}
where $T$ and $T'$ are time variables, and $d\in\mathbb{N}$ is a natural number. 

{Here and throughout the rest of the paper, the symbol $\pm$ stands for either $+$ or $-$, \ie, constraints may involve addition or subtraction.}

We use $T' \geq T' \pm d$ to denote the disjunction of $T > T' \pm d$ and $T = T' \pm d$. 
All variables in the guard of a rule are assumed to appear in the rule's pre-condition. 

\vspace{0,5em}
Finally, the variables $\vec{X}$ that are existentially quantified in a rule (Eq.~\ref{eq:instantaneous})
are to be replaced by fresh values, also called \emph{nonces} in the protocol security literature~\cite{cervesato99csfw,durgin04jcs}. 
As in our previous work~\cite{kanovich13ic}, we use nonces whenever unique identification is required, for example for drone identification.

Let \,$\Wscr$ \, and \, $\Wscr'$ \, be multisets of timestamped facts.
A rule 
~$\Wscr \mid \Cscr \lra \exists \vec{X}.\, \Wscr'$~
can be applied to a configuration $\Sscr$ if there is a ground substitution $\sigma$ such that \, $\Wscr\sigma \subseteq \Sscr$ \, and that\, $\Cscr\sigma$~is true.
The resulting configuration is~ 
$$\big(\,(\Sscr \setminus \Wscr) \cup \Wscr'\, \big)\sigma \ ,$$
where variables  $\vec{X}$ are fresh.
More precisely, given a rule $r$, an instance of a rule is obtained by substituting constants for all variables appearing in the pre- and post-condition of the rule. This substitution applies to variables appearing in terms inside facts, to variables representing fresh values, and to time variables used to specify timestamps of facts.

An instance of an instantaneous rule can only be applied if all the constraints in its guard are satisfied. 
For example, since~$ 0+2<5$ (when instantiating $T'$ as the timestamp of the fact $\,P(p2,5,6)@0$\,)  rule 
\[
\begin{small}
\begin{array}{l}
Time@T, \red{P(I,X,Y)@T'}, \red{Dr(Id,X,Y,E+1)@T} \mid \{~ T'+2<T~\}%
\lra
\\ \qquad \qquad \qquad Time@T, \blue{P(I,X,Y)@T}, \blue{Dr(Id,X,Y,E)@(T+1)}
\end{array}
\end{small}
\]
is applicable to configuration 
$$
\{\ Time@5, \,Dr(d1,1,2,10)@5,\,Dr(d2,5,6,7)@5,\,P(p1,1,1)@3,\,P(p2,5,6)@0~\} \, , 
$$
resulting in configuration \ 
$$
\{\ Time@5, \,Dr(d1,1,2,10)@5,\,Dr(d2,5,6,6)@6,\,P(p1,1,1)@3,\,P(p2,5,6)@5~\} \, , 
$$
but it is not applicable to the following configuration  
$$
\{\ Time@5, \,Dr(d1,1,2,10)@5,\,Dr(d2,5,5,8)@5,\,P(p1,1,1)@3,\,P(p2,5,6)@4 \ \} \ 
$$
because there are no $\Pscr(p,x,y)@T'$ facts in the configuration such that its timestamp $T'$ satisfies the given constraint, $~ T'+2<T$, involving the global time $T$. Namely, ~$3+2~{\not}{<}~5$ and ~$4+2~{\not}{<}~5$.

\vspace{0,5em}
Following \cite{durgin04jcs} we say that a timestamped fact $F@T$  is \emph{consumed} by a rule $r$ if that fact occurs more times on the left side than on the right side of the rule $r$. A timestamped fact $F@T$  is \emph{created} by some rule $r$ if this fact occurs more times on the right side than on the left side of the rule $r$. 
Hence, facts ${F_1@T_1'}, \ldots, {F_n@T_n'}$ are consumed by  rule (Eq.~\ref{eq:instantaneous}) while facts ${Q_1@(T + d_1)}, \ldots, {Q_m@(T + d_m)}$ are created by this rule.
Note that a fact $F$ can appear in a rule with different timestamps, but for the above notions we count instances of the same timestamped fact $F@T$.
 In a rule, we usually color \red{red} the consumed facts and \blue{blue} the created facts.

\begin{remark}
Using constraints we are able to formalize time-sensitive 
properties and problems that involve explicit time requirements. 
The set of constraints may, however,  be empty, \ie, rules may have no constraints attached.
\end{remark}

\vspace{1mm}
We write ~$\Sscr \lra_r \Sscr'$\ for the one-step relation where the configuration $\Sscr$ is rewritten into $\Sscr'$ using an instance of rule $r$. 
For a set of rules $\Rscr$, we define ~$\Sscr \lra_\Rscr^* \Sscr'$~ to be  the transitive reflexive closure of the one-step relation on all rules in $\Rscr$. We omit the subscript \ $\Rscr$, when it is clear from the context, and simply write ~$\Sscr \lra^* \Sscr'$.

\vspace{2mm}
Note that due to the nature of multiset rewriting, there are various aspects of non-determinism in the model. For example, different actions and even different instantiations of the same rule may apply to the same configuration $\Sscr$, leading to different resulting configurations $\Sscr'$. 

\begin{definition}[Timed MSR System]
A \emph{timed MSR system} 
$\Tscr$ is a set of rules containing only instantaneous rules (Eq.~\ref{eq:instantaneous}) and the $Tick$ rule (Eq.~\ref{eq:tick}).
\end{definition}

A trace of a timed MSR system 
is constructed by a sequence of its rules. 
In this paper, we consider both finite and infinite traces. 
A \emph{finite trace} of a timed MSR system $\Tscr$ starting from an initial configuration $\Sscr_0$ is a sequence
$$
\Sscr_0 \lra \Sscr_1 \lra \Sscr_2 \lra \cdots \lra \Sscr_n 
$$
and an \emph{infinite trace} of $\Tscr$ starting from an initial configuration $\Sscr_0$ is a sequence
$$
\Sscr_0 \lra \Sscr_1 \lra \Sscr_2 \lra \cdots \lra \Sscr_n \lra \cdots 
$$
where for all ~$i \geq 0$, \ $\Sscr_{i} \lra_{r_i} \Sscr_{i+1}$ \ for some $r_i \in \Tscr$.  When a configuration $\Sscr$ apperas in a trace $P$ we write ~$\Sscr \in P$.

\vspace{0,5em}
We will pay particular attention to periods of time represented by traces. 
Since time advances by one unit of time per $Tick$ rule, a finite (infinite) number of $Tick$ rules in a trace represents a finite (infinite) time period.
One can easily imagine traces containing a finite number of $Tick$ rules and an infinite number of instantaneous rules. Such traces would represent an infinite number of actions performed in a finite time interval. 
In this paper we are not interested in such traces and focus on so called \emph{infinite time traces}.

\begin{definition}[Infinite Time Trace]
\label{def:infinity}
A trace $P$ of a timed MSR 
$\Tscr$ is an ~\emph{infinite time trace} ~if the time tends to infinity in $P$, \ie, 
$(\forall n \in \mathbb{N}) \ (\exists~\Sscr\in P)$ such that \ $Time@T \in \Sscr $ and ~$T> n$. 
\end{definition}

Since in any trace, the global time ticks in single time units, it follows immediately that any infinite time trace is an infinite trace, and it contains an infinite number of $Tick$ rules.

We have shown in our previous work~\cite{kanovich11jar,kanovich13ic,kanovich.mscs,kanovich15post,kanovich17jcs} that problems involving MSR, such as checking whether a configuration can be reached, are undecidable if no further restrictions are imposed. These problems are undecidable already when considering only finite traces. However, these problems are decidable for balanced MSR systems~\cite{kanovich11jar,kanovich.mscs} that assume an upper-bound, $k$, on the size of facts formally defined below.

\begin{definition}[Balanced System]
\label{def:balanced}
A timed MSR system $\Tscr$ is \emph{balanced} if for all instantaneous rules $r \in \Tscr$, $r$ creates the same number of facts as it consumes, \ie, the instantaneous rules are of the form: 
\begin{equation}
\begin{array}{l}
Time@T, \,\Wscr, \red{\,F_1@T_1'}, \ldots, \red{\,F_n@T_n'} \ \mid \ \,\Cscr \ \lra \\ \ \
\exists \vec{X}.~[ \ Time@T,\, \Wscr, \blue{\,Q_1@(T + d_1)}, \ldots, \blue{\,Q_n@(T + d_n)} \, ] \ .
\label{eq:instantaneous-balanced}
\end{array}
\end{equation}
\end{definition}

By consuming and creating facts, rewrite rules can increase and decrease the number of facts in configurations throughout a trace. 
However, in balanced MSR systems, 
rule application does not affect the number of facts in a configuration. That is, enabling configuration has the same number of facts as the resulting configuration.
Hence, the number of facts in configurations is constant throughout a trace.

\begin{definition}[Size of a Fact]\label{def:size-fact}
The size of a timestamped fact $P@T$, written $|P@T|$ is the total number of alphabet symbols appearing in $P$.
\end{definition}

 For instance, $|P(s(z),f(a,X), a)@12| = 7$. For our complexity results, we assume a bound, $k$, on the size of facts. 
 Without this bound (among other restrictions), any interesting decision problem is shown undecidable by encoding the Post correspondence problem~\cite{durgin04jcs}. 
Note that the value of the timestamp is not considered in the size of facts. For the complexity results, the (unbounded) time values of timestamps are handled using the abstractions and the symbolic representation of the problems.

\subsection{Progressing Timed Systems} 
\label{sec:pts}

Following \cite{kanovich16formats}, we discuss a particular class of timed MSR systems, called \emph{progressing timed MSR systems}~(PTSes), in which only a finite number of actions can be carried out in a bounded time interval. This is a natural condition for many systems, similar  similar to the \emph{finite-variability assumption} used in the temporal logic and timed automata literature.

\begin{definition}[Progressing Timed System]
\label{def:progressing}
A timed MSR system $\Ascr$ is a \emph{progressing timed MSR system (PTS)} if $\Tscr $ is balanced and for all instantaneous rules $r \in \Tscr$:
\begin{itemize}
\item[i)] Rule $r$ creates \emph{at least one} fact with timestamp greater than the global time, \ie, in (Eq.~\ref{eq:instantaneous}), ~ $d_i \geq 1$~ for at least one ~$i \in \{1, \dots, n \}$;

\item[ii)] 
Rule $r$ consumes \emph{only} facts with timestamps in the past or at the current time, \ie, in (Eq.~\ref{eq:instantaneous}), the set of constraints ~$\Cscr$ contains the set 
$$\Cscr_r = \{~T \geq T_i' \mid F_i@T_i', ~1 \leq i \leq n~\} \ . $$
\end{itemize} 
\end{definition}

For the sake of readability, from this point on we assume that for all rules $r$ the set of their constraints implicitly contains the set ~$\Cscr_r$, as shown in Definition~\ref{def:progressing}, and do not always write ~$\Cscr_r$ explicitly in our specifications.

\vspace{0,5em}
The following rule, which denotes the action of a drone taking a photo of a point of interest, is an example of a rule in a PTS:
\[
\begin{array}{l}
Time@T, \red{~P(I,X,Y)@T'}, \red{~Dr(Id,X,Y,E+1)@T} ~ \mid ~ \{~T'<T ~\}
\lra
\\ \qquad \qquad \qquad Time@T, \blue{~P(I,X,Y)@T}, \blue{~Dr(Id,X,Y,E)@(T+1)}
\end{array}
\]
Note that the constraint $T' <T$ is used to prevent drones from repeatedly photographing the same point of interest at the same time to save energy. Also, the created future fact prevents the same drone from performing the same action in the same time unit.

\vspace{0,5em}
The following proposition~\cite{kanovich16formats} establishes a bound on the number of instances of instantaneous rules appearing between two consecutive instances of $Tick$ rules in a trace of a PTS. This bound is then used to formalize the intuition that PTSes always move things forward.

\begin{proposition}
\label{prop:bounded-length}
Let $\Tscr$ be a PTS, 
$\Sscr_0$ an initial configuration and $m$ the number of facts in $\Sscr_0$. For all traces $\Pscr$ of $\Tscr$ starting from $ \Sscr_0$, let 
$$
\Sscr_i \lra_{Tick}  \Sscr_{i+1} \lra \cdots \lra \Sscr_j \lra_{Tick}  \Sscr_{j+1} 
$$
be any subtrace of ~$\Pscr$ with exactly two instances of the $Tick$ rule, one at the beginning and the other at the end. Then ~$j - i < m$.
~\cite{kanovich16formats}
\end{proposition}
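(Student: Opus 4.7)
The plan is to exhibit a monovariant on the configurations between the two $Tick$s that strictly increases with each instantaneous step while being uniformly bounded in terms of $m$. The natural candidate is the count of ``future'' facts relative to the (constant) global time. Since only the $Tick$ rule modifies the $Time$ fact, the global time is constant along $\Sscr_{i+1}, \Sscr_{i+2}, \ldots, \Sscr_j$; denote this common value by $T^{\ast}$, and for each such $\Sscr_s$ let $f(\Sscr_s)$ denote the number of facts in $\Sscr_s$ whose timestamp is strictly greater than $T^{\ast}$.

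The key step is then to show that $f(\Sscr_{s+1}) \geq f(\Sscr_s) + 1$ for every instantaneous transition $\Sscr_s \lra \Sscr_{s+1}$ in this range. This is precisely where the two defining conditions of Definition~\ref{def:progressing} do their work: condition (ii) forces every consumed fact to carry a timestamp $\leq T^{\ast}$, so no future fact can ever be destroyed; condition (i) forces at least one created fact to have the form $Q_k@(T^{\ast} + d_k)$ with $d_k \geq 1$, hence to be strictly future. Preserved context facts $W_i$ and any created ``present'' facts (with $d_k = 0$) contribute nothing to the change, and balancedness keeps the total count equal to $m$.

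Combining this monotonicity with the uniform upper bound $f(\Sscr_s) \leq m - 1$ (valid because $|\Sscr_s| = m$ by balancedness and the $Time$ fact, timestamped $T^{\ast}$, is always present and never future) and iterating from $\Sscr_{i+1}$ to $\Sscr_j$ yields
\[
f(\Sscr_{i+1}) + (j - i - 1) \ \leq \ f(\Sscr_j) \ \leq \ m - 1 ,
\]
from which the bound $j - i < m$ follows, sharpened by the observation that the first instantaneous rule in the subtrace must itself consume at least one non-future, non-$Time$ fact, so the non-future ``budget'' cannot already be saturated in $\Sscr_{i+1}$.

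The main obstacle I anticipate is the careful bookkeeping on the rule schema of Equation~\ref{eq:instantaneous}: one must verify that ``present'' created facts ($d_k = 0$) do not cancel the guaranteed future increment, that the preserved context $W_i$ is truly inert for the count, and that the applicability requirement of the first instantaneous step indeed tightens the inequality from $\leq m$ to strict $< m$. Each of these is immediate from the schema, but together they form the bridge between the syntactic PTS conditions and the numeric monovariant.
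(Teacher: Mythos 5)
Your core argument is the paper's argument in complementary form: the paper tracks the number of present-or-past facts other than $Time$ and shows that each instantaneous rule strictly decreases it (condition (ii) of Definition~\ref{def:progressing} says every consumed fact is non-future, condition (i) says at least one created fact is strictly future, and balancedness keeps the total at $m$), while you track the number of future facts and show each instantaneous rule strictly increases it. These are the same monovariant, and this part of your proof is correct and matches the paper's.

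The one place you diverge is the final ``sharpening'', and that step does not work. From the chain $f(\Sscr_{i+1}) + (j-i-1) \leq f(\Sscr_j) \leq m-1$ you would need a \emph{lower} bound $f(\Sscr_{i+1}) \geq 1$ to improve $j-i \leq m$ to the stated $j-i < m$; but your observation (the first instantaneous rule must consume a non-future, non-$Time$ fact) yields only the \emph{upper} bound $f(\Sscr_{i+1}) \leq m-2$, which tightens nothing. Indeed no such improvement is available: take $m=2$, let the first tick produce $\Sscr_{i+1} = \{Time@t,\, F@t\}$, and let the system contain the single PTS rule consuming $F$ and creating $G@(t+1)$; exactly one instantaneous rule fires between the two ticks, so $j-i = 2 = m$. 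What the monovariant honestly delivers --- in your version and in the paper's --- is ``at most $m-1$ instantaneous rules between consecutive ticks'', i.e.\ $j-i \leq m$ under the stated indexing, and that weaker bound is also all the paper uses downstream (Proposition~\ref{prop:progressing}, Lemma~\ref{lem:polysize}). So your instinct that the strict inequality requires an extra ingredient is sound, but the patch you propose bounds the wrong quantity in the wrong direction; the discrepancy sits in the statement's indexing, not in the counting argument.
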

\begin{proof}
The statement easily follows from Definition~\ref{def:progressing}. 
Let $\Pscr$ be an arbitrary trace in $\Tscr$ and 
$$
\Sscr_i \lra_{Tick}  \Sscr_{i+1} \lra \cdots \lra \Sscr_j \lra_{Tick}  \Sscr_{j+1} 
$$
an arbitrary subtrace of $\Pscr $ with exactly two instances of the $Tick$ rule.
All the rules between $Tick$ rules in the above subtrace are instantaneous. 

Since $\Tscr $ is a PTS, the application of any instantaneous rule creates at least one future fact 
and consumes at least one present or past fact. 
In other words, an application of an instantaneous rule reduces the total number of past and present facts in the configuration.

Since the system $\Tscr$ is balanced, all the above configurations $\Sscr_i, \dots, \Sscr_j$ have the same number of facts, $m$.
Recall also that the fact $Time$ does not change when the instantaneous rules are applied. 
Thus, since there are at most $m-1$ present or past facts different from $Time$ in any $\Sscr_k$, $i<k\leq j$,  a series of at most $m-1$ instantaneous rules can be applied between two $Tick$ rules.
\qed
\end{proof}

According to the above statement, in a PTS an unbounded number of instantaneous rules cannot be applied in a bounded interval of time.
Also, from the above result we can conclude that infinite traces in PTSes represent infinite time periods. In particular, this means that in traces of PTSes there are no phenomena similar to Zeno paradox.
This is stated in the following proposition.

\vspace{0.5em}

\begin{proposition} 
\label{prop:progressing}
Let $\Tscr $ be a PTS.  
All infinite traces of $\Tscr$ are infinite time traces, \ie, traces where time tends to infinity.
~\cite{kanovich16formats}
\end{proposition}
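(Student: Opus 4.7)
The plan is to argue by contraposition: I will assume that $\Pscr$ is an infinite trace that is \emph{not} an infinite time trace, and derive that $\Pscr$ must in fact be finite, yielding a contradiction. By Definition~\ref{def:infinity}, the negation of the infinite time trace property gives an explicit bound $N \in \mathbb{N}$ such that every configuration $\Sscr \in \Pscr$ has its $Time$ fact of the form $Time@T$ with $T \leq N$. Since $Time$ only changes through the $Tick$ rule, which increments it by exactly one, and since the initial timestamp $T_0$ is fixed, the number of applications of $Tick$ in $\Pscr$ is at most $N - T_0$, hence finite.

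Next, I will bound the number of instantaneous rules that can occur in $\Pscr$. Between any two consecutive $Tick$ applications, Proposition~\ref{prop:bounded-length} already gives a uniform bound of at most $m - 1$ instantaneous rules, where $m$ is the number of facts in $\Sscr_0$. Multiplying this by the finitely many $Tick$ instances handles all of $\Pscr$ except possibly (i)~the prefix before the first $Tick$ and (ii)~the suffix after the last $Tick$. For both of these tails, the argument from the proof of Proposition~\ref{prop:bounded-length} applies verbatim: because $\Tscr$ is balanced the total number of facts stays equal to $m$, while condition~(ii) of Definition~\ref{def:progressing} forces each instantaneous rule to consume at least one past-or-present fact and condition~(i) forces it to create at least one future fact. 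Hence each instantaneous rule strictly decreases the count of non-$Time$ past-or-present facts, so at most $m - 1$ can fire before no instantaneous rule is enabled.

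Combining these bounds, the total number of rule applications in $\Pscr$ is at most $(N - T_0 + 1) \cdot (m - 1) + (N - T_0)$, which is finite. This contradicts the assumption that $\Pscr$ is infinite, completing the proof. The step I expect to require the most care is the treatment of the suffix after the last $Tick$, since Proposition~\ref{prop:bounded-length} is stated only for subtraces bracketed by two $Tick$ applications; however, as noted above, the very same monotone-decrease argument on past/present facts applies without change, so this obstacle is conceptual rather than technical.
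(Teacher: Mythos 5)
Your proof is correct and follows essentially the same route as the paper's: bound the number of $Tick$ applications by the assumed time bound, invoke Proposition~\ref{prop:bounded-length} to bound the instantaneous rules between consecutive ticks, and conclude that the trace is finite, contradicting the hypothesis. Your explicit treatment of the prefix before the first $Tick$ and the suffix after the last $Tick$ (via the same decreasing count of past-or-present facts) is a point the paper's proof glosses over, but it is exactly the right fix and does not change the argument.
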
 
\begin{proof}
Assume that in some infinite trace $\Pscr$ of a PTS $\Tscr$ the current time does not exceed some value $M$. 
Then, since timestamps are natural numbers, and time advances by a single time unit,
there are at most $M$ time ticks in $\Pscr$. 

According to Proposition \ref{prop:bounded-length} there are at most $m-1$ instantaneous rules between any $Tick$ rule and the next $Tick$ rule in $\Pscr$. 

Consequently, in total, there are at most $(M+1)\cdot (m-1)+M$ rules in $\Pscr$, \ie,~$\Pscr$ is a finite trace. Contradiction.
\qed
\end{proof}

\vspace{1em}
Finally, notice that the PTS model has many syntactic conditions, \eg, balanced condition (Definition~\ref{def:balanced}), the form of time constraints (Eq.~\ref{eq:constraints}), the form of instantaneous rules (Eq.~\ref{eq:instantaneous}). Each of these conditions has been carefully developed.
As we have shown in our previous work~\cite{kanovich.mscs}, relaxing any of these conditions leads to undecidability of
important verification problems, such as the reachability problem, over finite traces. 
Clearly, these conditions are also needed for infinite traces.

The additional challenge in allowing infinite traces is to make sure that time advances in such a way that traces represent arbitrarily large time periods. Our definition of PTS is a simple and elegant way to enforce this. Moreover, as we show in Section~\ref{sec:progdrones}, it is still possible to specify many interesting examples with our PTS model, including our motivating example, and still prove the decidability of our verification problems involving infinite traces (Section~\ref{sec:complex}).

\vspace{1em}
\section{Programming Drone Behavior using PTS}
\label{sec:progdrones}

\begin{figure}[t]
\begin{scriptsize}
\[
\begin{array}{l}
Time@T, \,\Pscr(p_1,\ldots,p_n), \red{\,Dr(Id,X,Y,E+1)@T} \, \mid \, doMove\,(Id,X,Y,E+1,T,T_1,\ldots,T_n,north)  \\
\qquad \qquad \lra Time@T, \,\Pscr(p_1,\ldots,p_n), \blue{\,Dr(Id,X,Y+1,E)@(T+1)}\\[5pt]

Time@T, \, \Pscr(p_1,\ldots,p_n), \red{\,Dr(Id,X,Y+1,E+1)@T} 
 \lra \\
\qquad \qquad
\mid doMove\,(Id,X,Y+1,E+1,T,T_1,\ldots,T_n,south) \lra \\
\qquad \qquad \qquad Time@T, \,\Pscr(p_1,\ldots,p_n), \blue{\,Dr(Id,X,Y,E)@(T+1)}\\[5pt]

Time@T, \,\Pscr(p_1,\ldots,p_n), \red{\,Dr(Id,X+1,Y,E+1)@T}
 \lra \\
\qquad \qquad\mid doMove\,(Id,X+1,Y,E+1,T,T_1,\ldots,T_n,west) \lra \\
\qquad \qquad \qquad  Time@T, \,\Pscr(p_1,\ldots,p_n), \blue{\,Dr(Id,X,Y,E)@(T+1)}\\[5pt]

Time@T, \,\Pscr(p_1,\ldots,p_n), \red{\,Dr(Id,X,Y,E+1)@T} \mid doMove\,(Id,X,Y,E+1,T,T_1,\ldots,T_n,east) \\
\qquad \qquad \lra Time@T, \,\Pscr(p_1,\ldots,p_n), \blue{\,Dr(Id,X,Y,E)@(T+1)}\\[5pt]

Time@T, \,\Pscr(p_1,\ldots,p_n), \red{\,Dr(Id,x_b,y_b,E)@T} \mid doCharge\,(Id,E,T,T_1,\ldots,T_n) \lra \\
\qquad \qquad Time@T, \,\Pscr(p_1,\ldots,p_n), \blue{\,Dr(Id,x_b,y_b,E+1)@(T+1)}\\[5pt]

Time@T, \,P(p_1,X_1,Y_1)@T_1, \ldots, \red{\,P(p_i,X,Y)@T_i}, \ldots, P(p_n,X_n,Y_n)@T_n, \red{\,Dr(Id,X,Y,E)@T} \\[2pt]
\qquad \mid \ doClick\,(Id,X,Y,E,T,T_1,\ldots,T_i,\ldots,T_n) \lra  Time @T, \\[3pt]
\qquad \ \ \ P(p_1,X_1,Y_1)@T_1, \ldots, \blue{\,P(p_i,X,Y)@T}, \ldots, \,P(p_n,X_n,Y_n)@T_n, \blue{\,Dr(Id,X,Y,E-1)@(T+1)}\\[5pt]

Time@T, \red{\,Dr(Id,X,Y,E)@T} \mid  hasWind\,(X,Y,north) \lra Time@T, \blue{Dr(Id,X,Y+1,E)@(T+1)}\\[5pt]

Time@T, \red{\,Dr(Id,X,Y+1,E)@T} \mid  hasWind\,(X,Y,south) \lra Time@T, \blue{Dr(Id,X,Y,E)@(T+1)}\\[5pt]

Time@T, \red{\,Dr(Id,X+1,Y,E)@T} \mid  hasWind\,(X,Y,west) \lra Time@T, \blue{\,Dr(Id,X,Y,E)@(T+1)}\\[5pt]

Time@T, \red{\,Dr(Id,X,Y,E)@T} \mid  hasWind\,(X,Y,east) \lra Time@T, \blue{\,Dr(Id,X+1,Y,E)@(T+1)}
\end{array} 
\] 
\end{scriptsize}
\vspace{-2mm}
\caption{Macro rules specifying the scenario where drones take pictures of points of interest. Here $\Pscr(p_1,\ldots,p_n)$~ denotes ~$P(p_1,X_1,Y_1)@T_1, \ldots, P(p_n,X_n,Y_n)@T_n$. Moreover, we assume that the Drone stay in a grid of size ~$x_{max} \times y_{max}$~ and have at most~ $e_{max}$ ~energy units.}
\label{fig:rules-complete}
\end{figure}

Following \cite{kanovich16formats}, Figure~\ref{fig:rules-complete} depicts the macro rules of our motivating scenario where drones are moving on a fixed grid of size $x_{max} \times y_{max}$, have at most $e_{max}$ energy units and take pictures of some points of interest. We assume that there are $n$ such points $p_1, \ldots, p_n$, where $n$ is fixed, a base station is at position $(x_b,y_b)$, and that the drones should regularly take pictures so that all pictures are recent. That is, at any time, each of the points of interest should have been photographed in the last $M$ time units, for some given $M$.

\vspace{0.5em}
Clearly if drones non-deterministically choose to move in some direction without a particular strategy, they will fail to achieve the assigned goal. A strategy of a drone can be specified using time constraints. 

For this example, the strategy would depend on the difference $T-T_i$, for $1 \leq i \leq n$, specifying the elapsed time since the last picture of the point $p_i$ was taken. This can be specified with the following
set of time constraints:
\[
\Tscr(d_1,\ldots,d_n) = \{~T - T_1 = d_1, \ldots, T - T_n = d_n~\}
\]
where for all $1 \leq i \leq n$ we instantiate $d_i$ by values in $\{0,\ldots,M\}$.

For example, the macro rule with ~$doMove\,(Id,X,Y,E+1,T,T_1,\ldots,T_n,north)$ in Figure~\ref{fig:rules-complete} is replaced by the set of rules:
\[
\begin{small}
\begin{array}{l}
Time@T, \, \Pscr(p_1,\ldots,p_n)@T, \red{\,Dr(d1,0,0,1)@T} \ 
\\ \qquad \qquad \qquad \qquad \qquad \qquad\qquad \ \
\mid \ \Tscr(0,\ldots,0), \,DoMv\,(d1,0,0,1,0,\ldots,0,north) 
\\[2pt] 
\quad \qquad 
\lra Time@T, \, \Pscr(p_1,\ldots,p_n)@T, \blue{\,Dr(Id,0,1,0)@(T+1)}\\[5pt]

Time@T, \,\Pscr(p_1,\ldots,p_n)@T, \red{\,Dr(d1,0,0,1)@T} \ 
\\ \qquad \qquad \qquad \qquad\qquad \qquad\qquad \ \
\ \mid \ \Tscr(0,\ldots,1), \,DoMv\,(d1,0,0,1,0,\ldots,1,north) 
\\[2pt] 
\quad \qquad 
\lra Time@T, \,\Pscr(p_1,\ldots,p_n)@T, \blue{\,Dr(Id,0,1,0)@(T+1)}\\
\qquad
\cdots\\
Time@T, \,\Pscr(p_1,\ldots,p_n)@T, \red{\,Dr(d2,x_{max},y_{max}-1,e_{max})@T}\\
\qquad \qquad \qquad \mid \ \Tscr(M,\ldots,M), \,DoMv\,(d2,x_{max},y_{max}-1,e_{max},M,\ldots,M,north) \\[2pt]
\quad \qquad 
\lra \ Time@T, \,\Pscr(p_1,\ldots,p_n)@T, \blue{\,Dr(Id,x_{max},y_{max},e_{max}-1)@(T+1)}\\
\end{array}
\end{small}
\]
where $doMove$ returns a tautology or an unsatisfiable constraint depending on the desired behavior of the drone.

Finally, macro rules for moving the drone, taking a picture, charging, and macro specifying winds are similarly defined. 

While most of the rules have the expected result, we only explain the click and wind rules. The click rule is applicable if the drone is at the position of some point of interest. 
If applied, the timestamp of the fact $P(p_i,X,Y)$ is updated to the current time $T$. The wind rule is similar to the move rules moving the drone to some direction, but does not cause the drone to consume its energy.

In our implementation in~\cite{kanovich16formats} we used a more sophisticated approach described in~\cite{talcott16quanticol} using soft-constraints to specify a drone's strategy. It can be translated into a PTS that incorporates the strategy described above.


\paragraph{Other Examples} \ 
Besides examples involving drones, other exampels also seem to be progressing. For example, in our previous work~\cite{kanovich.mscs}, we specify a monitor for clinical trials using our timed MSR system framework with discrete time. This specification is progressing.

There are a number of other examples which we have been investigating and that are progressing. For example, ~\cite{talcott15wirsing} models a simplified version of a package delivery systems inspired by Amazon's Prime Air service, and ~\cite{talcott16quanticol} models a patrolling bot which moves from one point to another. All these examples seem to be progressing.

\section{Quantitative Temporal Properties}
\label{sec:timedprop}
Following \cite{kanovich16formats}, we begin the Section~\ref{subsec:critical} by discussing critical configurations,  a language used to define desirable properties of systems. This is a key concept in our framework, used to describe explicit timing constraints that a system should satisfy.
In Section~\ref{subsec:time-sampling} we discuss lazy time sampling, which is a condition on traces that intuitively enforces that systems react at the expected time. Then in Section~\ref{sec:problems}, we discuss a number of verification problems.

\subsection{Critical Configurations and Compliant Traces}
\label{subsec:critical}
Critical configurations specifications are used for specifying bad configurations that should be avoided by a system. 

\begin{definition}[Critical Configuration]
\emph{Critical configuration specification} is a set of pairs 
$$\CS = \{~\tup{\Sscr_1, \Cscr_1}, \ldots, \tup{\Sscr_n, \Cscr_n}~\} \ .$$
Each pair ~$\tup{\Sscr_j,\Cscr_j}$~ is of the form:
$$
\tup{~\{F_1@T_1, \ldots, F_{p_j}@T_{p_j}\}, \Cscr_j~} 
$$
\noindent
where $T_1, \ldots, T_{p_j}$ are time variables, $F_1, \ldots, F_{p_j}$ are facts (possibly containing variables) and $\Cscr_j$ is a set of time constraints involving only the variables $T_1, \ldots, T_{p_j}$. 

Given a critical configuration specification, $\CS$, we classify a configuration $\Sscr$ as \emph{critical} w.r.t. $\CS$ if for some $1 \leq i \leq n$, there is a grounding substitution, $\sigma$, such that:
\begin{itemize} 
\item $\Sscr_i \sigma \subseteq \Sscr$; 
\item All constraints in $\Cscr_i \sigma$ are satisfied.
\end{itemize} 
\end{definition}

The substitution application ($\Sscr \sigma$) is defined as usual~\cite{enderton}, \ie, by mapping time variables in $\Sscr$ to natural numbers, nonce names to nonce names (renaming of nonces), and non-time variables to terms.
Notice that nonce renaming is assumed, since the particular nonce name should not matter for classifying a configuration as critical.
Nonce names cannot be specified in advance, since they are freshly generated in a trace, \ie,~during the execution of the process being modelled.

\begin{example}
We can specify usual safety conditions which do not involve time. For example, a drone should never run out of energy. This can be specified by using the following set of critical configuration specification:
\begin{small}
\[
\{~\tup{~\{Dr(Id,X,Y,0)@T\},\emptyset~} \mid Id \in \{d1,d2\}, X \in \{0,\ldots,x_{max}\}, Y \in \{0,\ldots,y_{max}\} ~\} \ .
\]
\end{small}
\end{example}

\begin{example}
The following critical configuration specification specifies a quantitative property involving time:
\begin{small}
\[
\begin{array}{l}
\{~\tup{~\{P(p_1,x_1,y_1)@T_1,Time@T\}, \{ \,T > T_1 + M\, \}~}, \ldots \\ \qquad \qquad \ldots,
\tup{~\{P(p_n,x_n,y_n)@T_n,Time@T\}, \{ \,T > T_n + M\, \}~}~\} \ . 
\end{array}
\]
\end{small}%
Together with the specification in Figure~\ref{fig:rules-complete}, this critical configuration specification specifies that the last pictures of all points of interest ~( \ie, $p_1, \ldots, p_n$ located at $(x_1,y_1),\ldots, (x_n,y_n)$~) should have timestamps no more than $M$ time units old. 
\end{example}

\begin{example}
Let the facts $St(Id,x_b,y_b)@T_1$ and $St(empty,x_b,y_b)@T_1$ denote, respectively, that at time $T_1$ the drone $Id$ entered the base station located at $(x_b, y_b)$ to recharge, and that the station is empty. Moreover, assume that only one drone may be positioned in a station to recharge, which would be specified by adding the following rules specifying the drone landing and take off:
\[
\begin{array}{l}
Time@T,\red{\,Dr(Id,X,Y)@T},\red{\,St(empty, X,Y)@T_1} \lra\\ \quad
Time@T,\blue{\,Dr(Id,X,Y)@(T+1)},\blue{\,St(Id, X,Y)@T}\\[5pt]
Time@T,\red{\,Dr(Id,X,Y)@T},\red{\,St(Id,X,Y)@T_1} \lra \\ \quad
Time@T,\blue{\,Dr(Id,X,Y)@(T+1)},\blue{\,St(empty,X,Y)@T}\\ 
\end{array} 
\]
Then, the critical configuration specification 
$$\{~\tup{\{St(Id,X,Y)@T_1, Time@T\}, \{ \, T > T_1 + M_1\, \} \,} \mid Id \in \{d1,d2\}\,\}$$
specifies that one drone should not remain in a base station for too long (more than $M_1$ time units) preventing other drones to charge.
\end{example}

\begin{example} Fresh values may be  useful in specifying various critical configurations which may involve identification,  history of events 
or communication protocols. For example, drones may communicate between themselves to coordinate their flights. They may also use cryptographic protocols with other agents in the system, \eg, to send pictures of points of interest to be stored on the system data base.
Such applications and requirements are easily formalized using fresh values.\\ For example, drones must be uniquely identified, \ie, should not have the same $Id$:  
\begin{small}
\[
\{~\tup{{~\{Dr(Id,X,Y,E)@T,~\{Dr(Id,X',Y',E')@T'\},\emptyset~} }\} \ .
\]
\end{small}
Also,  in case recharging of batteries is separately managed and billed, even visits to the recharge stations should be uniquely identified for correct billing. Similarly, pictures of points of interest may require identification for documentation. In that case, rules given in Figure~\ref{fig:rules-complete} can easily be modified to include fresh values, \eg, by replacing $P(p_i,X,Y)$ facts with $P(n,p_i,X,Y)$
facts in all rules, and including creation of fresh value $n$ in the rule involving $doClick$ constraint.
\end{example}

\begin{definition}[Compliant Trace]
A trace $\Pscr$ of a timed MSR system is \emph{compliant} w.r.t. a given critical configuration specification $\CS$ if $\Pscr$ does not contain any configuration that is critical w.r.t. $\CS$. 
\end{definition}

Note that if the critical configuration specification is empty,  no configuration is critical, \ie, all traces are compliant.

For simplicity, when the corresponding critical configuration specification is clear from the context, we will elide it and use terminology {\em critical configuration}.
Also, when it is clear from the context, we often elide the timed MSR system and the critical configuration specification with respect to which we consider critical configurations, and simply say that a trace is compliant.

\subsection{Time Sampling}
\label{subsec:time-sampling}

Following \cite{kanovich16formats}, in order to define sensible quantitative verification properties, we need to assume some conditions on when the Tick rule is applicable. Otherwise, any MSR system allows traces containing only instances of $Tick$ rules:
$$
\Sscr_1 \lra_{Tick} \Sscr_2 \lra_{Tick} \Sscr_3 \lra_{Tick} \Sscr_4 \lra_{Tick} \cdots
$$
In such a trace, the system never acts to avoid critical configurations and would easily contain a critical configuration $\Sscr_j$, related to some constraint $T >T' + d$, involving global time $T$ and sufficiently large $j$.

Imposing a \emph{time sampling} is one way to avoid such traces where the time simply ticks. Time sampling is used, for example, in the semantics of verification tools such as Real-Time Maude~\cite{olveczky08tacas}. In particular,  time sampling dictates when the $Tick$ rule must be applied and when it cannot be applied. Such a treatment of time is used for both dense and discrete times in searching and model checking timed systems.

\begin{definition}[Lazy Time Sampling (\ltss)]
\label{def: lazy}
A (possibly infinite) trace $\Pscr$ of a timed MSR system $\Tscr$ uses \emph{lazy time sampling} if for any occurrence of the  $Tick$ rule $\Sscr_i \lra_{Tick} \Sscr_{i+1}$ in $\Pscr$, no instance of any instantaneous rule in $\Tscr$ can be applied to the configuration $\Sscr_i$. 
\end{definition}

In lazy time sampling 
instantaneous rules are given a higher priority than the $Tick$ rule. Under this time sampling, a drone should carry out one of the rules in Figure~\ref{fig:rules-complete} at each time while time can only advance when all drones have carried out their actions for that moment. This does not mean, however, that the drones will satisfy their goal of always having recent pictures of the points of interest as this would depend on the behavior of the system, \ie, the actions carried out by the drones.

In the remainder of this paper, we focus on the lazy time sampling. 
We leave it to future work to investigate whether similar results hold for other time sampling schemes.

\subsection{Verification Problems}
\label{sec:problems}

Four properties are discussed in this section: Realizability and  Survivability from \cite{kanovich16formats} and the new properties of  \newprop\ and \prop. Figure~\ref{fig: properties} illustrates these properties, which we define below.
Since the names of the properties sound similar in English, we also introduce one-letter names for the properties for better readability and differentiation.

The first property we discuss is realizability. It guarantees that the given system can achieve the assigned goal under the given time constraints and design specifications, \eg, that drones can repeatedly collect up-to-date images of the sensitive locations.

Realizability is useful for increasing confidence in a specified system, since a system that is not realizable cannot accomplish the given tasks (specified by a critical specification) and the designer would therefore have to reformulate it. 

However, if a system is shown to be realizable, the trace, $\Pscr$, that proves realizability could also provide insights into the sequence of actions that lead to accomplishment of the specified tasks. This can be used to refine the specification and reduce possible non-determinism.

\vspace{0.5em}
\begin{definition}[Realizability / \realiz property]
\label{def:feasibility}
A timed MSR system $\Tscr$ satisfies \emph{realizability}
with respect to an initial configuration $\Sscr_0$, a critical configuration specification $\CS$ and the \lts 
if there \emph{exists} a compliant infinite time trace from $\Sscr_0$ that uses the \lts
 \footnote{
For simplicity, in the rest of the paper, for properties of systems and configurations, we will not always explicitly state the critical configuration specification, initial configuration, and/or time sampling
with respect to which the property is considered.
For example, when it is clear from the context, we simply say that a system satisfies \realiz property or is \emph{realizable}.\\
Also, when for a property of an MSR $\Tscr$ we only consider traces that use lazy time sampling, we also say that $\Tscr$ \emph{uses the lazy time sampling}.}
 \cite{kanovich16formats}
\end{definition}

The \realiz property of a timed MSR $\Tscr$ w.r.t. $\Sscr_0, \CS$ and \lts can be expressed using the formula:
\begin{equation*}\label{realiz-formula}
    F_\text{\realiz}(\Tscr,\Sscr_0) := \exists \,t \in \textsf{T}^{\Tscr, \Sscr_0}. [ t \in \textsf{T}_{time}^{\Tscr} \cap \textsf{T}_{lts}^{\Tscr} \cap \textsf{T}_c^{\Tscr}],
    \end{equation*}
where ~$\textsf{T}^{\Tscr, \Sscr_0}$ is the set of all  traces of $\Tscr$ starting from $\Sscr_0$,
~$\textsf{T}_{time}^{\Tscr}$ is the set of all infinite time traces of $\Tscr$, ~$\textsf{T}_{lts}^{\Tscr}$ is the set of all traces of $\Tscr$ that use the \lts and ~$\textsf{T}_c^{\Tscr}$ is the set of all traces of $\Tscr$ compliant w.r.t. $\CS$.

Open distributed systems are inherently non-deterministic due to,  \eg, the influence of the environment with winds.
Therefore, it is important to know whether the system can avoid critical configurations despite non-determinism. We call this property \emph{survivability.} 

\begin{definition}[Survivability / \sur property]
\label{def:survivabilty}
A timed MSR $\Tscr$ satisfies \emph{survivability} 
w.r.t. an initial configuration $\Sscr_0$, a critical configuration specification $\CS$ and the \lts 
if it satisfies realizability with respect to $\Sscr_0$, $\CS$, and the \lts
and if \emph{all} infinite time traces from $\Sscr_0$ that use the \lts
are compliant.  
 \cite{kanovich16formats}
\end{definition}

Using the above notation, the \sur property of a timed MSR $\Tscr$ can be expressed with:
\begin{equation*}\label{sur-formula}
F_{\text{\sur}}(\Tscr,\Sscr_0)  :=  F_\text{\realiz}(\Tscr,\Sscr_0) \land \forall \,t \in \textsf{T}^{\Tscr,\Sscr_0}.[ t  \in \textsf{T}_{time}^{\Tscr}  \cap \textsf{T}_{lts}^{\Tscr} \Rightarrow t \in  \textsf{T}_c^{\Tscr}].
    \end{equation*}
    
Although survivability is a desirable property, much more so than realizability, it can sometimes be a rather severe requirement for a system, or even unachievable. Hence, when designing a system, one may want to compromise and consider less demanding properties. 
For example, one may want to avoid configurations that appear as \quotes{dead-ends}, \ie, configurations that necessarily lead to critical configurations. We call such configurations \emph{\pons}.
For example, drones should not fly so far that it is no longer possible to reach a recharging station due to energy consumption.

\begin{figure}[t]
\centering
  \centering
\includegraphics[width={\textwidth}]{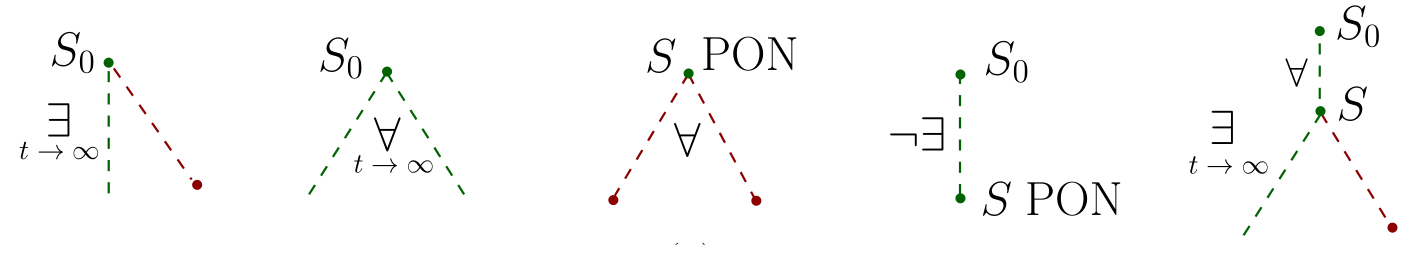}

\vspace{-5pt}
\hspace{0,5em} (a) ~\realiz property \hspace{0,5em} (b) ~\sur property \hspace{0.6 em} (c) \pon \hspace{1em} (d) ~\recov property \hspace{1,25em} (e) ~\rel property \hspace{1em}
\vspace{-2mm}
\label{fig:prop-1}
\caption{Illustration of  properties of (a) realizability, (b) survivability, (d) \prop,~and (e)~\newprop, as well as configurations that are a \pon~(c). 
Green lines represent compliant traces that use lazy time sampling, while red lines represent traces that use lazy time sampling, but are not compliant. Red circles represent critical configurations, while green circles are not critical.
 Quantification marked with $t \to \infty$ denotes quantification over infinite time traces.}
\vspace{-2mm}
\label{fig: properties}
\end{figure}

\begin{definition}[\PON]
\label{def:pon}
Given a timed MSR system $\Tscr$, 
a configuration $\Sscr$ is called a \emph{\pon} with respect to a critical configuration specification $\CS$ and the \lts
if $\Sscr$ is \emph{not} critical with respect to $\CS$, and if 
\emph{all} infinite  traces 
of $\Tscr$  starting with $\Sscr$ and using the \lts 
are not compliant with respect to $\CS$. 
\end{definition}

The set of all configurations that are \pons~of a timed MSR $\Tscr$, $\textsf{C}_{pon}^{\Tscr}$, can be described as 
 ~$ \textsf{C}_{pon}^{\Tscr}  := \{  
\Sscr 
\mid \Sscr \notin  \textsf{C}_{cr}^{\Tscr} \land \forall \,t. [ t \in \textsf{T}^{\Tscr, \Sscr} \cap \textsf{T}_{\infty}^{\Tscr} \cap  \textsf{T}_{lts}^{\Tscr} \ \Rightarrow \ t \notin  \textsf{T}_c^{\Tscr} ]
\},
$
    where 
   ~$\textsf{C}_{cr}^{\Tscr} $ is the set of all critical configurations of $\Tscr$
and ~$\textsf{T}_{\infty}^{\Tscr}$ is the set of all infinite traces of $\Tscr$.
    
There exists no compliant infinite  trace from a \pon ~that uses the \lts 
A \pon~ itself is not critical, but must eventually lead to a critical configuration on every infinite trace that uses the \lts 
Therefore, configurations such as \pons\ are not desirable w.r.t. goal achievement, \ie, \pons\ should be avoided when searching for (infinite) compliant traces.

\begin{remark}
A \pon~ represents the system that still satisfies the required conditions, but it will inevitably fall into a bad state where this is no longer the case.
Therefore, to better distinguish 
between \pons~ and critical configurations,
the condition that a \pon ~is not critical is included in the definition.
\end{remark}

Using the notion of \pons, we introduce new properties of our systems. 

\begin{definition}[\PROP~/ \recov property]
\label{def:prop-nonprop}
 A timed MSR system $\Tscr$, 
satisfies \emph{\prop} with respect to an initial configuration $\Sscr_0$, a critical configuration specification $\CS$ and the \lts
{if it satisfies realizability with respect to $\Sscr_0$, $\CS$ and the \lts
and} 
if no \pon ~
is reachable from $\Sscr_0$ on a compliant trace that uses the \lts
That is, if a configuration $\Sscr$ is reachable from $\Sscr_0$ on a compliant trace that uses the \ltss,
then $\Sscr$ is not a \pon. 
\end{definition}

The \recov property of a timed MSR $\Tscr$ can be expressed with the following formula:
\begin{equation*}\label{recover-formula}
    F_\text{\recov}(\Tscr,\Sscr_0)  := F_\text{\realiz}(\Tscr,\Sscr_0) \land [\forall \,t \in \textsf{T}^{\Tscr, \Sscr_0} \cap \textsf{T}_{c}^{\Tscr} \cap \textsf{T}_{lts}^{\Tscr}. \forall  \Sscr \in t . \Sscr  \notin \textsf{C}_{pon}^{\Tscr}].
  %
    \end{equation*}

Configurations that are \pons\ should be avoided. For example, a drone may enter an area where it may end up with empty batteries due to frequent high winds. Such points  should be avoided.
In fact, with the \recov property we want to ensure that all finite compliant traces from the initial configuration that use the \lts
can be extended to infinite compliant traces that use the \lts

Next, with the \newprop~property, we want to ensure that as long as one follows a compliant trace,  
there is a way to extend the trace to a compliant infinite time trace.
In our drone scenario, a reliable system should be designed so that as long as the drones follow instructions, including rules for flying in  high winds, there is always a way for the drones to avoid critical configurations.

\begin{definition}[\Newprop~/ \rel property]
\label{def:new-prop}
A timed MSR system $\Tscr$ satisfies \emph{\newprop} %
with respect to an initial configuration $\Sscr_0$, a critical configuration specification $\CS$, and the \lts 
{if it  satisfies realizability with respect to $\Sscr_0$,  $\CS$, and the \lts 
and} 
if for \emph{any} configuration $\Sscr$ reachable from $\Sscr_0$ on a \emph{compliant} trace that uses the \ltss, 
there exists a compliant infinite time trace from $\Sscr$ that
uses the \lts 
\end{definition}

The \rel property of a timed MSR $\Tscr$ can be expressed with the following formula:
\begin{equation*}\label{reliab-formula}
\begin{array}{l}
      F_\text{\rel}(\Tscr,\Sscr_0)  := 
    F_\textrm{\realiz}(\Tscr,\Sscr_0) \ \land \\
    \qquad \qquad \quad
    \lbrack \forall  t \in  \textsf{T}^{\Tscr,\Sscr_0} \cap \textsf{T}_{c}^{\Tscr} \cap \textsf{T}_{lts}^{\Tscr}.  
    \forall  \Sscr \in t.
    \exists \,t' \in \textsf{T}^{\Tscr,\Sscr}.  ~t' \in \textsf{T}_{c}^{\Tscr} \cap \textsf{T}_{lts}^{\Tscr} \cup \textsf{T}_{time}^{\Tscr}\rbrack.
    \end{array}
    \end{equation*}

A timed MSR system that satisfies the \rel property represents a system that is always able to avoid \pons. Such a system satisfies the \realiz property, but it may not satisfy the \sur property.
Indeed, the class of systems satisfying the \realiz property is a proper superclass of the class of systems satisfying the \rel property. 
Systems satisfying the \rel property also satisfy the \recov property, while the class of systems satisfying the \recov property ~is a proper superclass of the class of systems satisfying the \sur property. 
We present these results in Section~\ref{sec:relations-properties},  for general MSR systems and  PTSes.


\subsubsection{Time-Bounded Versions of Verification Problems} \ 

\vspace{1em}
Motivated by bounded model checking, we also investigate the time-bounded versions of the above problems.
Instead of infinite traces, in time-bounded versions of the verification problems we consider traces that have exactly a fixed number of occurrences of Tick rules. 
Time bounded version of realizability and survivability were introduced in \cite{kanovich16formats}, while time  bounded version of \newprop~is novel here.

\begin{definition}[$n$-Time Realizability / $n$-\realiz property]
\label{def:n-feasibility} 
A timed MSR system $\Tscr$ satisfies \emph{$n$-\realiz property} with respect to the \ltss, 
a critical configuration specification $\CS$, and an initial configuration $\Sscr_0$ if there \emph{exists} a compliant trace, $\Pscr$, from $\Sscr_0$  that uses the \lts 
such that 
global time advances by exactly $n$ time units in $\Pscr$.
 \cite{kanovich16formats}
\end{definition}

\begin{definition}[$n$-Time Survivability / $n$-\sur property]
\label{def:n-survivabilty}
A timed MSR system $\Tscr$ satisfies \emph{$n$-time survivability property} with respect to the \ltss, 
a critical configuration specification $\CS$ and an initial configuration $\Sscr_0$ if it satisfies {$n$-\realiz property} and if \emph{all} traces with exactly $n$ instances of the $Tick$ rule starting with $\Sscr_0$ and using the \lts 
are compliant. 
\end{definition}

Analogously, we define the $n$-time bounded version of the \newprop ~problem. We consider all compliant traces covering \emph{at most} $n$ time units, and 
extend them to compliant traces over \emph{exactly} $n$ time units.

\begin{definition}[$n$-Time  \Newprop / $n$-\rel property] 
\label{def:n-new-prop}
A timed MSR system $\Tscr$ satisfies \emph{\mbox{$n$-time} \newprop}
with respect to an initial configuration $\Sscr_0$, a critical configuration specification $\CS$, and the \lts 
{if it satisfies {$n$-\realiz property} with respect to $\Sscr_0$,  $\CS$, and the \lts 
and} 
if for any configuration $\Sscr$, reachable from $\Sscr_0$ on a compliant trace $\Pscr$ that uses the \lts 
and has at most $n$ instances of the $Tick$ rule, there exists a trace $\Pscr'$ that uses the \lts 
such that:
\begin{enumerate}
\item[i)] 
 $\Pscr'$ extends $\Pscr$;
\item[ii)] $\Pscr'$ is compliant; 
\item[iii)] $\Pscr'$ has exactly $n$ instances of the $Tick$ rule.
\end{enumerate}
\end{definition}

Since the notion of a \pon\ is defined to be inseparable from infinite traces, it is not appropriate for the time-bounded version of the verification problems. That is, time-bounded version of the \prop\ system problem makes little sense.
Moreover, as we show in Section~\ref{sec:relations-properties}, for PTSes 
problems of \newprop~and 
\prop~coincide.
Hence,  we do not consider the bounded version of \prop\ problem separately.

\section{Relations Among Properties of  Timed MSR} 
\label{sec:relations-properties}

In this section we formally relate all the different properties defined in Section \ref{sec:problems}. 

In order to compare these properties we review the machinery introduced in our previous work~\cite{kanovich.mscs} called $\delta$-representations. 
This machinery is also used  in Section~\ref{sec:complex} to obtain complexity results for the corresponding verification problems.

\subsection{$\delta$-representations}

Some of our results, for a given timed MSR 
$\Tscr$, an initial configuration $\Sscr_0$ and a critical configuration specification $\CS$, will mention the value $\Dmax$ which is an upper-bound on the natural numbers appearing in $\Sscr_0$, $\Tscr$ and $\CS$. The value of $\Dmax$ can be inferred syntactically by simply inspecting the timestamps of $\Sscr_0$, the $D$ values in timestamps of rules (which are of the form \ $T + D$) and constraints in $\Tscr$ and $\CS$ (which are of the form $T_1 > T_2 \pm D$, $T_1 = T_2 \pm D$ and\, $T_1 \geq T_2 \pm D$). For example, the $\Dmax = 1$ for the specification in Figure~\ref{fig:rules-complete}.

For our  results we  assume a bound on the size of facts.
For example, in our specification in Figure~\ref{fig:rules-complete}, we can take the bound
~$ k= |x_{max}| +|y_{max}|+ |e_{max}|+5$.

Notice, however, that we do not always impose an upper bound on the values of timestamps. 
Also, we allow an unbounded number of fresh values to appear in a trace.

\begin{definition}
\label{def: delta-representation}
Let  \, $ \Sscr = \{\,Q_1@t_1, \,Q_2@t_2, \ldots, \,Q_n@t_n \,\} $
be a configuration of a timed MSR $\Tscr$
written in canonical way where the sequence of
timestamps $t_1, \ldots, t_n$ is non-decreasing. 
(For the case of equal timestamps, we sort the facts in alphabetical order, if necessary.) 
The \emph{$\delta$-representation} of $\Sscr$ for a given $\Dmax$ is 
$$
\delta_{\Sscr,\Dmax} = [~Q_1,\,\delta_{Q_1,Q_2},\,Q_2, \ldots, \,Q_{n-1}, \,\delta_{Q_{n-1},Q_n}, \,Q_n~] \ .
$$
Here, for a given natural number $\Dmax$, \ $\delta_{P,Q}$ \ is the \emph{truncated time difference} of two timed facts
~$P@t_1$ and $Q@t_2$~ with \mbox{$t_1\leq t_2$}, defined as follows:
$$
\delta_{P,Q} = 
\left\{\begin{array}{ccl}
t_2 - t_1 & , & \ \textrm{ provided } ~t_2 - t_1 \leq \Dmax\\
\infty & , & \ \textrm{ otherwise } 
\end{array}\right. \ .
$$
\end{definition}

For simplicity, when $\Dmax$ is clear from the context, we sometimes write $ \delta_{\Sscr}$ instead of ~$ \delta_{\Sscr,\Dmax}$.

\vspace{0,5em}
In our previous work~\cite{kanovich12rta,kanovich.mscs}, we showed that a $\delta$-representation is an equivalence class on configurations. Namely, for a given $\Dmax$, we declare $\Sscr_1 $ and $ \Sscr_2$ equivalent, written ~$\Sscr_1 \equiv_{\Dmax} \Sscr_2$, 
if and only if their $\delta$-representations are exactly the same, up to nonce renaming, \ie,~$\Sscr_1 \sigma = \Sscr_2$, where 
$\sigma$ is a bijection on the set of nonce names.

\vspace{0,5em}
This equivalence relation is well-defined with respect to time constrains, \ie, ~configurations that have the same $\delta$-representation satisfy exactly the same set of constraints.
Here, when saying that configurations satisfy the same constraint, we implicitly mean that time variables of the constraint refer to the same facts in both configurations.
Therefore, we can say that a $\delta$-representation satisfies a constraint or does not. Similarly, we say that a $\delta$-representation is critical iff it is the $\delta$-representation of a critical configuration.

Also, the equivalence among configurations is well-defined with respect to application of rules, \ie, ~application of rules on $\delta$-representations is unambiguous. Therefore we can consider traces over $\delta$-representations. 
For details on the concrete procedure of how to apply a rule on a given $\delta$-representation see~\cite[Section 4.3]{kanovich.mscs}.

We naturally extend the notion of a compliant trace and say that a trace over \mbox{$\delta$-representations} is compliant iff it does not contain any critical $\delta$-representation.
Also, we say that a trace over $\delta$-representations uses the \lts 
if $Tick$ rule is applied to a $\delta$-representation in that trace only when no instantaneous rule is applicable.

Moreover, in~\cite[Theorem 4.1]{kanovich.mscs} we have shown that there is a bisimulation between (compliant) traces over configurations and (compliant) traces over their $\delta$-representations in the following sense:
\
~$\Sscr_1 \lra_* \Sscr_2$ \ \ iff \ \ $  \delta_{\Sscr_1} \lra_*   \delta_{\Sscr_2}$ .

When  considering concrete problems and corresponding bisimulations, the bound $\Dmax$ is inferred from numerical values appearing in the problem specification.
This ensures that all configurations in traces are \emph{future bounded}, \ie, do not contain facts $F@t_F$ such that $\delta_{Time,F}= \infty$. This is important for faithful representation of time advances. For more details see~\cite[Section 4.3]{kanovich.mscs}.

For self-containment of the paper, in the proof of the following result from~\cite{kanovich16formats} we present main proof ideas used in~\cite{kanovich.mscs} and, moreover, we additionally address the \lts

\begin{proposition}
\label{thm:delta configurations}
For any timed MSR 
$\Tscr$, a critical configuration specification $\Cscr\Sscr$ and an initial configuration $\Sscr_0$ the equivalence relation between configurations is well-defined with respect to the rules of the system (including time advances), the \lts 
and critical configurations. 
\\
Namely, to any compliant trace starting from the given initial configuration $\Sscr_0$ 
corresponds a compliant trace over $\delta$-representations starting from $\delta_{\Sscr_0}$. 
In particular, a trace over configurations uses the \lts
iff the corresponding trace over $\delta$-representations uses the \lts
~\cite{kanovich16formats}
\end{proposition}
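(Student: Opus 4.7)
The plan is to decompose the statement into three invariance claims under the $\delta$-equivalence $\equiv_{\Dmax}$: invariance of rule applicability (including $Tick$), invariance of being critical, and invariance of satisfying \lts The first two constitute the bisimulation established in~\cite[Theorem~4.1]{kanovich.mscs}, which I would only sketch; the genuinely new content is the third, which I expect to reduce to the first by a short observation.

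First I would spell out why $\Dmax$ is chosen as the largest numerical constant appearing in $\Sscr_0$, $\Tscr$, and $\CS$: with this choice, every constraint that can ever be evaluated during a trace (be it a rule guard or a critical pattern) has the form $T > T' \pm d$, $T = T' \pm d$, or $T \geq T' \pm d$ with $d \leq \Dmax$. Hence the truth value of such a constraint on any matching substitution depends only on whether $\delta_{P,Q}$ equals a specific value $\leq \Dmax$ or exceeds $\Dmax$; two $\delta$-equivalent configurations therefore satisfy exactly the same constraints under the corresponding matches (up to nonce renaming). From this, invariance of rule applicability follows (same matches, same guards true, and the post-states again $\equiv_{\Dmax}$ because the created facts $Q_j@(T + d_j)$ sit at fixed offsets $d_j \leq \Dmax$ from $Time$); invariance of being critical follows likewise for each pattern $\tup{\Sscr_i, \Cscr_i} \in \CS$. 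For the $Tick$ rule one additionally invokes that configurations remain future bounded in the sense of~\cite[Section~4.3]{kanovich.mscs}, so that incrementing the global time neither introduces nor erases a truncation to $\infty$ that any constraint could notice.

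The only genuinely new step is \lts By Definition~\ref{def: lazy}, a trace uses \lts precisely when every instance of $Tick$ is taken in a state where no instantaneous rule is enabled. The predicate ``some instantaneous rule of $\Tscr$ is enabled in $\Sscr$'' is, by the previous paragraph, invariant under $\equiv_{\Dmax}$, hence lifts to the $\delta$-representation. Consequently, a step $\Sscr \lra_{Tick} \Sscr'$ respects \lts if and only if the corresponding $\delta$-step $\delta_{\Sscr} \lra_{Tick} \delta_{\Sscr'}$ does, and the ``iff'' in the last sentence of the statement propagates step-by-step along the bisimulation. Combined with the other two invariances, this lifts any compliant \lts-using trace from $\Sscr_0$ to a compliant \lts-using trace from $\delta_{\Sscr_0}$, and vice versa.

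The main obstacle I anticipate is the $\Dmax$-truncation bookkeeping around boundary cases, for instance when $Tick$ turns a $\delta_{Time, F} = 1$ into $\delta_{Time, F} = 0$, or when a $\delta$-value sitting at $\Dmax$ crosses to $\infty$ under a time advance. I would handle these by appealing to future boundedness and by observing that the $\delta$-step associated with $Tick$ in~\cite[Section~4.3]{kanovich.mscs} is defined precisely to keep such boundary values in sync on both sides of the bisimulation; no fresh calculation is needed beyond careful citation of that construction.
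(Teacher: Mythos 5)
Your proposal is correct and follows essentially the same route as the paper: both arguments rest on the observation that $\equiv_{\Dmax}$-equivalent configurations satisfy the same constraints (hence enable the same rules and are critical together), verify that rule application—split into the $Tick$ case and the instantaneous case, using future boundedness for the former—preserves the equivalence, and then reduce the lazy-time-sampling claim to the invariance of the predicate ``some instantaneous rule is enabled.'' The only difference is one of emphasis: you cite the bisimulation of~\cite[Theorem~4.1]{kanovich.mscs} and concentrate on the \lts step, whereas the paper re-derives the $\delta$-arithmetic for both rule types in detail before making the same short \lts observation.
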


\begin{proof} 
We firstly show that application of rules on $\delta$-representations is independent of the choice of configuration from the same class. Assume 
$\Sscr_1$ and $\Sscr_2$ are equivalent configurations, and assume that $\Sscr_1$ is transformed to $\Sscr'_1$ by means of
a rule~$\alpha$, as shown in the diagram below.
Recall that equivalent configurations satisfy the same set of constraints.
Hence, the rule~$\alpha$ is  applicable to $\Sscr_2$ and will transform 
$\Sscr_2$ into some $\Sscr_2'$:
\[
\begin{array}{cccc}
\Sscr_1 & \to_{\alpha}& \Sscr_1'\\[4pt]
\biginterleave \ 
& & \\[4pt]
\Sscr_2 & \to_{\alpha} \ & \ \Sscr_2'
\end{array}
\]
It remains to show that $\Sscr_1'$ is equivalent to $\Sscr_2'$. 
We consider the two types of rules for $\alpha$, namely, 
time advances and instantaneous rules.

Let the time advance transform
${\Sscr_1}$ into~${\Sscr_1}'$, and $\Sscr_2$ to $\Sscr_2'$.
Since only the timestamp $T$ denoting the global time in $Time@T$ is increased by 1, and the rest of the configuration remains unchanged, 
only truncated time differences involving $Time$ change in the resulting $\delta$-representations. 
Because of the equivalence $S_1 \equiv_{\Dmax} S_2$ , for a fact $P@T_P^1$ in $\Sscr_1$ with $T_P^1\leq T^1$, $Time@T^1$ and $\delta_{P,Time}= t$, we have $ P@T_P^2$ with ${T}_P^2 \leq {T^2}$, $Time@{T^2}$ and $\delta_{P,Time}= t$ in $\Sscr_2$ as well. Therefore, we have
$$\delta_{P,Time}= 
\left\{\begin{array}{ccl}
t+1 & , & \ \textrm{ provided } \ t+1 \leq \Dmax\\
\infty & , & \ \textrm{ otherwise }
\end{array}\right.
$$ 
both in $\Sscr_1'$ and $\Sscr_2'$. On the other hand, for any future fact $Q@T^Q$
with $\delta_{Time,Q}= t$ in $\Sscr_1$ and in $\Sscr_2$, we get $\delta_{Time,Q}= t-1$ in both $\Sscr_1'$ and $\Sscr_2'$.
Therefore, ${\Sscr_1}'$ and $\Sscr_2'$ are equivalent. 
Recall that since all configurations in the trace are future bounded, $t < \infty$, so $t-1$ is well-defined.

The reasoning for the application of instantaneous rules is similar. Each created fact in $\Sscr_1'$ and $\Sscr_2'$ is of the form
$P@(T^1+d)$ and $P@(T^2+d)$ , where $T^1$ and $T^2$ represent
global time in $\Sscr_1$ and $\Sscr_2$, respectively. Therefore
each created fact has the same difference, $d$, to the global time in the corresponding configuration. This implies
that the created facts have the same truncated time
differences to the remaining (unchanged) facts. 
Namely, ~$\delta_{Time,P}=d< \infty$,
~hence for $P@t_P$, $R@t_R$ and $Time@t$ ~with 
~$t \leq t_R \leq t_P$, ~$$\delta_{R,P}= \delta_{Time,P}- \delta_{Time,R}\ . $$
Notice here that $\delta_{Time,R}<\infty$ because all configurations are future bounded, so the above difference is well-defined (finite). 
Similarly, when ~$t \leq t_P \leq t_R$, ~$$\delta_{P,R}= \delta_{Time,R}- \delta_{Time,P}\ .$$
Hence ${\Sscr_1}'$ and
$\Sscr_2'$ are equivalent.
Therefore, application of rules on $\delta$-representations defined through corresponding configurations is well-defined, \ie, the abstraction of configurations to $\delta$-representations w.r.t. application of rules is complete.

The abstraction is also sound. Namely, from a compliant trace over $\delta$-representations, we can extract a concrete compliant trace over configurations.
Although any given \mbox{$\delta$-representation} corresponds to an infinite number of configurations, for a given initial configuration $\Sscr_0$, we have the initial $\delta$-representation
\ $\delta_0= \delta_{\Sscr_0}$.
The existence of a trace over configurations corresponding to the given (possibly infinite) trace over $\delta$-representations is then easily proven by induction. 

Since equivalent configurations satisfy the same set of constraints, 
$\Sscr_1$ is a critical configuration if and only if $\Sscr_2$ is a critical configuration, \ie,~ if and only if $\delta_{\Sscr_1}$ is critical. 
By induction on the length of the (sub)trace, it  follows that, given a timed MSR and a critical configuration specification $\Cscr\Sscr$, 
any (possibly infinite) trace over configurations is compliant if and only if the corresponding trace over $\delta$-representations is compliant.

Notice that, using the \lts 
in a trace $\Pscr$, $Tick$ rule is applied to some $\Sscr_i$ in $\Pscr$ if and only if no instantaneous rule can be applied to 
$\Sscr_i$. Since $\Sscr_i$ and its $\delta$-representation, $\delta_{\Sscr_i}$, satisfy the same set of constraints, it follows that 
$Tick$ rule is applied to $\delta_{\Sscr_i}$ iff~$Tick$ rule is applied to $\Sscr_i$. Hence, a trace over configurations uses the \lts 
iff the corresponding trace over  $\delta$-representations uses the \lts 
\qed
\end{proof}

Following the above result, in the case of balanced timed MSRs, we can work on traces constructed using $\delta$-representations. Moreover, the following lemma~\cite{kanovich16formats} establishes a bound on the number of different $\delta$-representations.

\begin{lemma}\label{lemma:numstates}~\cite{kanovich16formats} 
Let  $\Tscr$ be a timed MSR constructed over a finite alphabet $\Sigma$ with $J$ predicate symbols and $E$ constant and function symbols.
Let  $m$ be the number of facts in the initial configuration $\Sscr_0$, $k$ an upper-bound on the size of the facts,  $\CS$ a critical configuration specification  and   $\Dmax$ an upper-bound on the numerical values of $\Sscr_0, \Tscr$, and $\CS$.
\\
The number of different $\delta$-representations, denoted by $L_\Sigma(m,\uSize,\Dmax)$, is such that
$$
L_\Sigma(m,k,\Dmax) \leq (\Dmax + 2)^{(m-1)} J^m (E + 2 m k)^{m k}.
$$
\end{lemma}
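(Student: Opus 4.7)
The plan is to count $\delta$-representations by splitting them into their two structural ingredients: the sorted sequence of $m$ facts $Q_1,\dots,Q_m$ and the sequence of $m-1$ truncated time-differences $\delta_{Q_i,Q_{i+1}}$. Since these two ingredients vary independently, an upper bound on the number of $\delta$-representations factors as the product of the number of choices for each.

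For the time-differences, each $\delta_{Q_i,Q_{i+1}}$ takes its value in $\{0,1,\dots,\Dmax\}\cup\{\infty\}$ by the definition of truncation with threshold $\Dmax$. That yields at most $\Dmax+2$ choices per position and hence $(\Dmax+2)^{m-1}$ choices for the whole sequence of differences. Multiplication of this by the count of fact-sequences is immediate, so the substantive work lies in bounding the number of possible $m$-tuples of ground facts.

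For the fact-sequences, each of the $m$ facts is of the form $P(u_1,\dots,u_r)$ with $P$ a predicate symbol and $u_1,\dots,u_r$ terms built from $\Sigma$ and nonces. Choosing the predicate head for each of the $m$ facts contributes $J^m$. For the argument symbols, the bound $k$ on the size of a fact means that across all $m$ facts at most $mk$ symbol occurrences need to be filled. The only subtlety, and the main obstacle, is bookkeeping for nonces: a trace may produce unboundedly many fresh names, but within a single configuration at most $mk$ distinct nonces can occur (one per symbol position), and since $\delta$-representations are identified up to nonce renaming we may canonically name the nonces using at most $mk$ distinct labels. Hence each symbol slot has at most $E$ choices from the fixed alphabet plus at most $mk$ choices of canonical nonce name (with a generous factor of $2$ absorbing the marker distinguishing \emph{new} versus \emph{previously seen} nonces as one scans the facts in order), for a total of $E+2mk$ choices per position.

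Combining, the number of legal fillings of the symbol positions is at most $(E+2mk)^{mk}$, so the overall count is
\[
L_\Sigma(m,k,\Dmax)\ \le\ (\Dmax+2)^{m-1}\, J^m\, (E+2mk)^{mk},
\]
which is the claimed bound. The only non-routine point is the nonce-renaming argument in the previous paragraph; the remaining counting is elementary and the three factors multiply cleanly because the choice of predicate heads, of argument symbols, and of time-differences are independent degrees of freedom in a canonical $\delta$-representation.
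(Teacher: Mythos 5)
Your proposal is correct and is essentially the standard counting argument behind this lemma (which the paper itself states without proof, citing \cite{kanovich16formats}): $(\Dmax+2)^{m-1}$ choices for the truncated differences, $J^m$ for the predicate heads, and $(E+2mk)^{mk}$ for the at most $mk$ symbol positions, with nonces handled by canonical renaming since a configuration of $m$ facts of size at most $k$ contains at most $mk$ distinct nonces. Your justification of the factor $2$ in $E+2mk$ is a bit informal, but since $E+mk\leq E+2mk$ the stated upper bound holds regardless, so this is not a gap.
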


\subsection{Time-Bounded v.s. Unbounded Verification Problems for Timed  MSR }
\label{sec: bounded vs unbounded}

It is obvious, by definition, that the \realiz property implies the \nrealiz property.
We now show that for a sufficiently large $n$, the converse implication also holds, \ie,  the \nrealiz property implies the \realiz property.
The same implications hold for the other properties.

\begin{proposition}[Realizability v.s. $n$-Time Realizability]
\label{th:b-n-bounded-real}
Let $\Tscr$ be a timed MSR that uses  the \ltss, 
$\Sscr_0$  an initial configuration and $\CS$  a critical configuration specification.
   Then,  ~$\Tscr$  satisfies the \realiz property
~ iff ~
$\forall n$, $\Tscr$  satisfies the \nrealiz property.

Moreover,  there exists $M$ such that if  ~$\Tscr$ satisfies the $M$-\realiz property, then ~$\Tscr$ satisfies  the \realiz property.
(In particular, the above claim holds for $M=L_\Sigma(m,\uSize,\Dmax)$.
\end{proposition}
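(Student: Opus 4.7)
The plan is to handle the two directions separately, with the bulk of the work going into the converse direction (which also proves the ``Moreover'' claim). The forward direction, that realizability implies $n$-realizability for every $n$, is immediate: given a compliant infinite time trace $\Pscr$ from $\Sscr_0$ using \ltss, the time-tending-to-infinity property provides, for any $n$, a finite prefix of $\Pscr$ containing exactly $n$ occurrences of the $Tick$ rule; this prefix is compliant and still uses \ltss, so it witnesses $n$-realizability.

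For the converse (and the ``In particular'' strengthening), I would run a pumping argument on the $\delta$-representation abstraction. Set $M = L_\Sigma(m,\uSize,\Dmax)$ and suppose $\Tscr$ satisfies $M$-realizability: let $\Pscr$ be the witnessing compliant trace from $\Sscr_0$ with exactly $M$ occurrences of $Tick$, using the \ltss. By Proposition~\ref{thm:delta configurations}, $\Pscr$ lifts to a compliant trace $\hat\Pscr$ over $\delta$-representations that starts at $\delta_{\Sscr_0,\Dmax}$ and uses the \ltss, with the same number of $Tick$ applications. Label the $M+1$ $\delta$-representations appearing immediately after each $Tick$ (together with the initial one) as $\delta^{(0)},\delta^{(1)},\dots,\delta^{(M)}$. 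By Lemma~\ref{lemma:numstates} there are at most $M$ distinct $\delta$-representations for $\Tscr$, so the pigeonhole principle yields indices $i<j$ with $\delta^{(i)} = \delta^{(j)}$ (up to nonce renaming).

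Then I would form an infinite trace by pumping the loop between $\delta^{(i)}$ and $\delta^{(j)}$: because rule application on $\delta$-representations is well defined (Proposition~\ref{thm:delta configurations}), the same finite sequence of rules that took $\delta^{(i)}$ to $\delta^{(j)}$ can be re-applied at $\delta^{(j)}$, then again at the resulting copy of $\delta^{(j)}$, and so on. The resulting infinite trace over $\delta$-representations is compliant (no critical $\delta$-representation appears, since none appears in the finite loop) and still uses \ltss~(the lazy scheduling pattern within the segment is preserved by repetition). It contains at least $j-i\ge 1$ $Tick$ rules per iteration, hence infinitely many $Tick$ rules overall, so it is an infinite time trace. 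Lifting back from $\delta$-representations to concrete configurations via the soundness part of Proposition~\ref{thm:delta configurations} yields the desired compliant infinite time trace from $\Sscr_0$, establishing realizability.

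The main obstacle is the correct treatment of nonces during the pumping step: the equivalence $\equiv_{\Dmax}$ identifies $\delta^{(i)}$ and $\delta^{(j)}$ only up to bijective renaming of nonce names, and each repetition of the loop may introduce rule instances that need fresh nonce values. When re-lifting to a concrete trace, I need to re-instantiate the existential variables of the looped rules with genuinely new nonces at every iteration, so that the global trace never reuses a nonce name. Once this bookkeeping is spelled out, the compliance and lazy-sampling properties transfer from the finite loop to its infinite unfolding without further subtlety, yielding the two parts of the proposition in one stroke.
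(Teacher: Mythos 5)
Your proposal is correct and follows essentially the same route as the paper: the forward direction by truncating the infinite time trace at the $n$-th $Tick$, and the converse by passing to $\delta$-representations, applying the pigeonhole over the $M+1$ blocks delimited by the $M$ $Tick$ occurrences to find a repeated $\delta$-representation with a $Tick$ in between, and pumping that loop into a compliant infinite time trace that still uses the lazy time sampling. Your extra care about re-instantiating nonces freshly at each iteration of the loop is a detail the paper leaves implicit in the \quotes{up to nonce renaming} equivalence, but it does not change the argument.
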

\begin{proof} Per definition, the \realiz property implies the \nrealiz property for any $n$.

We now prove the second statement. The first statement then easily follows.

From Proposition~\ref{thm:delta configurations} it follows that for the above problems we can consider traces constructed over $\delta$-representations. 
As per Lemma \ref{lemma:numstates}, the number of different $\delta$-representa\-ti\-ons
is bounded by ~$l = L_\Sigma(m,\uSize,\Dmax)$, where $m$ is the number of facts in $\Sscr_0$, $k$ is an upper-bound on the size of facts and  $\Dmax$ is an upper-bound on the numeric values of $\Sscr_0, \Tscr$ and $\CS$.

Assume $\Tscr$  satisfies the $M$-\realiz property, where ~$M=L_\Sigma(m,\uSize,\Dmax)$. 
Then,  there is a compliant trace $\Pscr$ from $ \delta_{\Sscr_0}$ that uses the \lts 
and contains  exactly $M$ $Tick$ rules.
Trace $\Pscr$ contains a series of instantaneous  rules separated by $Tick$ rules. That is, $\Pscr$ 
contains  $M+1$ blocks of $\delta$-representations, formed at each of the $M$ instances of $Tick$ rules in $\Pscr$. 
Since there are at most $M$ different $\delta$-representations in $\Tscr$,
 at least one $\delta$-rep\-re\-sen\-ta\-ti\-on $\delta_1$ appears in two blocks.
Therefore, a subtrace between the two appearances of $\delta_1$ contains a $Tick$ rule,
~$
 \delta_{1} \lra \cdots  \lra_{Tick} 
\cdots \lra  \delta_{1} \,
$,
and represents a loop in $\Pscr$.

The above subtrace is compliant, uses the \lts 
and contains a $Tick$ rule. Repeating this loop indefinitely results in a compliant infinite time trace that uses the \lts
The resulting trace shows that $\Tscr$ satisfies the \realiz property.
\qed
\end{proof}

\begin{proposition}[Survivability v.s. $n$-Time Survivability]
\label{th:b-n-bounded-survivability}
Let $\Tscr$ be a timed MSR that uses  the \ltss, 
$\Sscr_0$  an initial configuration 
and $\CS$  a critical configuration specification.
   Then,  ~$\Tscr$  satisfies the \sur property
~ iff ~
$\forall n$, $\Tscr$  satisfies the \nsur property.
\\
Moreover,  there exists $M$ such that if 
 ~$\Tscr$  satisfies the $M$-\sur property, then ~$\Tscr$  satisfies the \sur property.
\end{proposition}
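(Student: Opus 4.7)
The plan is to follow the template of Proposition~\ref{th:b-n-bounded-real}, augmenting its argument to handle the universal quantifier that distinguishes survivability from realizability.

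For the forward direction $S \Rightarrow \forall n.\,n$-$S$, the $n$-realizability clause is obtained by truncating the compliant infinite time lts-trace furnished by $S$ at its $n$-th Tick; this Tick exists because time tends to infinity along any infinite time trace. For the universal clause, I argue by contraposition: given any non-compliant $n$-Tick lts-trace $\Pscr_n$ containing some critical configuration $\Sscr^*$, I prolong $\Pscr_n$ by continuing to apply rules under the lts policy (instantaneous when enabled, Tick otherwise). By Proposition~\ref{prop:progressing}, this prolongation is an infinite time trace, and it contains $\Sscr^*$, which contradicts $S$.

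For the converse, I take $M = L_\Sigma(m,k,\Dmax)$. The realizability part of $S$ follows from Proposition~\ref{th:b-n-bounded-real} applied to $M$-$R$, which is a consequence of $M$-$S$. For the universal part, assume toward a contradiction that some non-compliant infinite time lts-trace $\Pscr'$ from $\Sscr_0$ exists, and let $\Sscr^*$ be its first critical configuration. Lift the prefix of $\Pscr'$ up through $\Sscr^*$ to $\delta$-representations via Proposition~\ref{thm:delta configurations}. If this prefix has more than $L_\Sigma(m,k,\Dmax)$ Ticks, it contains strictly more than $L_\Sigma(m,k,\Dmax)$ Tick-separated blocks of $\delta$-representations and, by pigeonhole, two blocks share the same $\delta$-representation; one can then excise the loop between them, choosing a loop strictly before $\Sscr^*$ so that $\Sscr^*$ is retained. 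Iterating, the shortened prefix has at most $M$ Ticks. Prolong it under lts (again using Proposition~\ref{prop:progressing}) until it contains exactly $M$ Ticks. This yields an $M$-Tick lts-trace that is non-compliant because $\Sscr^*$ is still present, contradicting $M$-$S$.

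The main obstacle is verifying that each truncation, loop-excision, and prolongation preserves both lts-compatibility and the validity of the transition sequence. This rests on the fact that equivalent $\delta$-representations satisfy the same constraints and enable the same set of rules (Proposition~\ref{thm:delta configurations}), so an lts-mandated transition at one block transfers intact to the analogous point in a modified trace. A secondary subtlety in the forward direction is ensuring that any lts-prolongation of $\Pscr_n$ is an infinite time trace and not a stagnant loop of instantaneous rules; in the paper's PTS setting this is granted by Proposition~\ref{prop:progressing}.
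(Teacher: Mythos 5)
Your converse direction follows the paper's proof in its essentials: assume a non-compliant infinite time lts-trace, pass to $\delta$-representations via Proposition~\ref{thm:delta configurations}, and use the bound $L_\Sigma(m,k,\Dmax)$ of Lemma~\ref{lemma:numstates} to excise loops from the prefix ending at the first critical configuration, obtaining a short non-compliant lts-trace that contradicts the $M$-$S$ property. Your remark that loop excision preserves lts-compatibility because equivalent $\delta$-representations enable exactly the same rules is precisely the justification the paper leaves implicit, and your insistence on producing a trace with \emph{exactly} $M$ Ticks is more careful than the paper, which stops at ``at most $M$ ticks'' and declares the contradiction there.

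The genuine gap is that both of your prolongation steps appeal to Proposition~\ref{prop:progressing}, which holds only for PTSes, whereas the present proposition is stated for an arbitrary timed MSR. In a non-progressing system an lts-prolongation can degenerate into an infinite sequence of instantaneous rules with no further Ticks; it is then an infinite trace but not an infinite \emph{time} trace, so in your forward direction it fails to contradict the universal clause of $S$ (which quantifies only over infinite time traces), and in your converse direction it need not ever reach the $M$-th Tick. For the converse the repair that stays within the stated generality is to splice rather than prolong: append to the loop-free prefix the suffix of the assumed non-compliant infinite time trace $\Pscr'$ beyond $\Sscr^*$ (legitimate because the endpoint of the excised prefix carries the same $\delta$-representation as $\Sscr^*$, hence admits the same continuation), and truncate at the $M$-th Tick, which exists because $\Pscr'$ has infinitely many Ticks. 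No analogous splice is available in the forward direction, since there the non-compliant $n$-Tick trace is all you are given; the paper sidesteps this by treating that implication as definitional, and an honest argument either restricts to PTSes (where your Proposition~\ref{prop:progressing} step is sound) or must otherwise justify that the offending finite trace extends to an infinite time lts-trace.
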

\begin{proof} \ 
Assume that ~$\Tscr$  satisfies the $M$-\sur property, where $M = L_\Sigma(m,\uSize,\Dmax)$. 
Hence, all traces with $M$ ticks are compliant.
Assume $\Tscr$ is does not satisfy the \sur property. Then there is an infinite time trace $\Pscr$ from $\Sscr_0$ that uses the \lts
which  is not compliant, \ie, there is a critical configuration $\Sscr_1$ in $\Pscr$. 
 Because ~$\Tscr$  satisfies the $M$-\sur property, there are more then $M$ ticks in the subtrace  ~$\Sscr_0 \lra \cdots \lra \Sscr_1$~ of $\Pscr$.

Since there are more than  $M$ rules (and $\delta$-representations) in the above subtrace $\Pscr'$, some $\delta$-representation appears at least twice in $\Pscr'$, \ie,
 there is a loop in $\Pscr'$. By removing all loops in $\Pscr'$  we obtain a  trace $\Pscr''$ from ~$\Sscr_0$ to $\Sscr_1$ that uses the \lts 
 and contains at most $M$ rules. Consequently, there are at most $M$ ticks  in $\Pscr''$. 

The trace $\Pscr''$ is not a compliant  since it contains $\Sscr_1$. This is in contradiction with the $M$-\sur property of $\Tscr$.

Per definitions, the \sur property implies the \nsur property for any $n$.
The first statement is then a simple consequence of the second statement and definitions.
\qed
\end{proof}

\begin{proposition}[\Newprop~v.s. $n$-Time \Newprop]
\label{th:b-n-bounded-recoverability}
Let $\Tscr$ be a   timed MSR that uses  the \ltss, 
$\Sscr_0$  an initial configuration 
and $\CS$  a critical configuration specification.
   Then,  ~$\Tscr$  satisfies the \rel property
~ iff ~
$\forall n$, $\Tscr$  satisfies the \nrel property.
\\
Moreover,  there exists $M$ such that if 
  ~$\Tscr$  satisfies the $M$-\rel property, then ~$\Tscr$  satisfies the \rel property.
\end{proposition}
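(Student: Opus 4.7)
The plan is to mirror the structure of Propositions~\ref{th:b-n-bounded-real} and~\ref{th:b-n-bounded-survivability}, working at the level of $\delta$-representations (Proposition~\ref{thm:delta configurations}) and exploiting Lemma~\ref{lemma:numstates}, which bounds the number of distinct $\delta$-representations by $L_\Sigma(m,\uSize,\Dmax)$. I take $M=L_\Sigma(m,\uSize,\Dmax)$. As in the companion propositions, the first statement of the equivalence follows easily once the second is established, so the work lies in the second statement.

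The forward implication, \rel\ implies \nrel\ for every $n$, is routine. Given a compliant trace $\Pscr$ from $\Sscr_0$ using the \lts with $k \leq n$ instances of the $Tick$ rule and ending at some $\Sscr$, apply \rel\ to obtain a compliant infinite time trace from $\Sscr$ using the \ltss. Since this trace has infinitely many $Tick$ rules, I take a prefix of it containing exactly $n-k$ additional $Tick$ occurrences; concatenating with $\Pscr$ gives a compliant extension $\Pscr'$ using the \lts with exactly $n$ ticks.

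For the backward implication, the $M$-\rel property implies the $M$-\realiz property, hence \realiz\ by Proposition~\ref{th:b-n-bounded-real}. It remains to produce, for every $\Sscr$ reachable from $\Sscr_0$ on a compliant trace using the \ltss, a compliant infinite time trace from $\Sscr$ using the \ltss. All reasoning happens on $\delta$-representations and is lifted back via Proposition~\ref{thm:delta configurations}. The key technical step is a continuation claim: every $\delta$-representation $\delta$ reachable from $\delta_{\Sscr_0}$ on a compliant trace using the \lts admits a compliant continuation using the \lts that contains at least one $Tick$ rule and ends at a $\delta$-representation again reachable from $\delta_{\Sscr_0}$ on a compliant trace using the \ltss. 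I first show that any such $\delta$ is actually reachable by a compliant trace of at most $M-1$ ticks: a compliant trace with $M$ or more ticks has at least $M+1$ tick-boundary configurations, so by pigeonhole with Lemma~\ref{lemma:numstates} two of them share a $\delta$-representation, and excising the loop between these two occurrences yields a strictly shorter compliant trace with the same endpoint still using the \ltss; iterate until at most $M-1$ ticks remain. Then apply the $M$-\rel property to the resulting trace of $k \leq M-1$ ticks to extend it to a compliant trace of exactly $M$ ticks, and read off its suffix starting at $\delta$ as a compliant continuation of $M-k \geq 1$ ticks ending at a $\delta$-representation reachable from $\delta_{\Sscr_0}$ on the $M$-tick compliant trace thus constructed.

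Iterating the continuation claim starting from $\delta_{\Sscr}$ produces an infinite compliant trace on $\delta$-representations using the \lts with infinitely many ticks, hence an infinite time trace; Proposition~\ref{thm:delta configurations} lifts it to a compliant infinite time trace from $\Sscr$ using the \ltss, as required. The main obstacle is verifying that loop excision preserves use of the \ltss, which holds because the rule firing immediately after the excised loop is enabled by the same $\delta$-representation and therefore by the same guards, and that the concatenation of successive continuations still uses the \ltss, which holds because this property is local to each $Tick$ step and each segment already satisfies it.
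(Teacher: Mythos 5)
Your proof is correct and follows the same overall strategy as the paper's: pass to $\delta$-representations via Proposition~\ref{thm:delta configurations}, set $M=L_\Sigma(m,\uSize,\Dmax)$, and use the pigeonhole bound of Lemma~\ref{lemma:numstates} to turn a compliant $M$-tick extension into a compliant infinite time trace. Where you diverge is in two places where the paper's own write-up is terse, and your version is the more careful one. First, the paper takes an arbitrary compliant trace $\Pscr$ from $\delta_{\Sscr_0}$ and immediately invokes the $M$-\rel property to extend it to exactly $M$ ticks; this is only licensed when $\Pscr$ has at most $M$ ticks, and your explicit loop-excision step (pigeonhole on the $M+1$ tick boundaries, cut between the two occurrences of a repeated $\delta$-representation, check that lazy time sampling survives the cut because each surviving rule fires from the same $\delta$-representation as before) is exactly what is needed to justify that reduction. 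Second, the paper concludes by pumping a single loop of the extension $\Pscr'$ indefinitely, which yields a compliant infinite time trace from $\delta_{\Sscr_0}$ but does not obviously pass through the particular reachable configuration $\Sscr$ one started from when the repeated $\delta$-representation occurs before $\Sscr$ in $\Pscr'$; your iterated continuation claim sidesteps this by always continuing from the current endpoint and guaranteeing at least one tick per round, so the resulting infinite trace genuinely starts at $\delta_{\Sscr}$ and has infinitely many ticks. Both arguments aim at the same statement, but yours closes these two gaps at the cost of a slightly longer inductive construction.
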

\begin{proof} 
As in Proposition \ref{th:b-n-bounded-real}, we consider traces constructed over $\delta$-representations. We show the second statement. The first statement follows from the second statement and definitions.

Let ~$M = L_\Sigma(m,\uSize,\Dmax)$.  
 Assume $\Tscr$  satisfies the $M$-\rel property. 
{Then $\Tscr$  satisfies the $M$-\rel property and, by the proof of Proposition~\ref{sec: bounded vs unbounded}, it satisfies the \rel property. 
Namely, a compliant trace $P$ with $M$ $Tick$ rules that uses the \lts
is its own compliant extension, showing the  $M$-\rel property.
 }
 
  We still need to show that an arbitrary compliant trace from $ \delta_{\Sscr_0}$ that uses the \ltss, 
  $\Pscr$,  can be extended to a compliant infinite time trace that uses the\lts 
  Let $m$ be the number of $Tick$ rules in $\Pscr$.

 
 Since $\Tscr$  satisfies the $M$-\rel property, $\Pscr$ can be extended to a compliant trace $\Pscr'$ that uses the \lts 
 and contains exactly $M$ $Tick$ rules.
  As in the proof of  Proposition \ref{th:b-n-bounded-real}, we can conclude that there is a \mbox{$\delta$-representation} $ \delta'$ in $\Pscr'$ that appears at least twice with a $Tick$ rule between the two appearances of  $\delta'$ in $\Pscr'$.
Repeating the subtrace of $\Pscr'$ between  two appearances of  $\delta'$ indefinitely, creates an infinite time compliant trace from $\delta_{\Sscr_0}$ that uses the \lts 
and shows the \rel property of $\Tscr$.
\qed
\end{proof}

\subsection{Relations Among Different Properties of Timed MSR and PTS}
\label{sec:relations-properties-infinite}

In this Section we formally relate different properties defined over infinite traces. In general, we can distinguish all these properties for timed MSR, but only some for PTSes, as stated below.

\begin{proposition}
\label{thm:deadlock-recover}
Let $\Tscr$ be a timed MSR system that uses the \ltss, 
$\Sscr_0$  an initial configuration  and $\CS$  a critical configuration specification.

If ~$\Tscr$ satisfies the \rel property, then $\Tscr$ satisfies the \recov property.

If ~$\Tscr$ satisfies the \recov property, then $\Tscr$ does not necessarily satisfy the \rel property.


\end{proposition}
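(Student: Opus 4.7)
The plan is to treat the two parts separately. For the forward implication, a direct contradiction argument should suffice: assume $\Tscr$ satisfies the \rel property and yet some \pon~$\Sscr$ is reachable from $\Sscr_0$ on a compliant trace $\Pscr$ using the \ltss. Applying the \rel property to $\Sscr$ (which, being reachable on a compliant l.t.s.\ trace, lies in the relevant quantifier domain) yields a compliant \emph{infinite time} trace $\Pscr'$ from $\Sscr$ using the \ltss. Since every infinite time trace is in particular an infinite trace, $\Pscr'$ is a compliant infinite l.t.s.\ trace starting in $\Sscr$, contradicting the definition of a \pon. The only bookkeeping point is to note that the \rel property already subsumes realizability (part of its definition), so we do not need to re-derive the realizability clause of \recov~separately. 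No PTS assumption enters this direction, and the argument is insensitive to balance or progress.

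For the second (non-)implication, the plan is to exhibit a concrete counterexample that must lie outside the PTS fragment, since Proposition~\ref{prop:progressing} collapses the gap between infinite traces and infinite time traces precisely for PTSes. I would take $\Sscr_0 = \{Time@0,\,A@0,\,C@0\}$, empty critical specification $\CS$, and three balanced instantaneous rules together with $Tick$:
\begin{itemize}
\item $R_1:\ Time@T, A@T', C@T'' \lra Time@T, B@T, C@T$,
\item $R_2:\ Time@T, A@T', C@T'' \lra Time@T, A@(T{+}1), C@T$,
\item $R_3:\ Time@T, B@T', C@T'' \lra Time@T, B@T, C@T$.
\end{itemize}
Rules $R_1$ and $R_3$ create no future fact, so $\Tscr$ fails condition~(i) of Definition~\ref{def:progressing} and is not a PTS. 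Realizability is witnessed by alternating $R_2$ with $Tick$ indefinitely, yielding the compliant infinite time l.t.s.\ trace $\Sscr_0 \to_{R_2} \to_{Tick} \to_{R_2} \to_{Tick} \cdots$.

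The decisive configuration is $\Sscr_B = \{Time@0, B@0, C@0\}$, reached from $\Sscr_0$ by a single application of $R_1$ along a compliant l.t.s.\ prefix (at $\Sscr_0$ an instantaneous rule is enabled, so the step is consistent with the \ltss). From $\Sscr_B$, rule $R_3$ is permanently enabled and no other instantaneous rule applies, so under the \lts the $Tick$ rule is blocked forever; the unique infinite l.t.s.\ trace from $\Sscr_B$ consists of $R_3$ applied indefinitely, which is vacuously compliant (since $\CS = \emptyset$) but contains no $Tick$ and therefore is not an infinite time trace. Consequently $\Sscr_B$ is \emph{not} a \pon, and the same analysis applied to any other reachable configuration shows no reachable \pon~exists, so the \recov property holds; yet there is no compliant infinite time l.t.s.\ trace starting from $\Sscr_B$, so the \rel property fails.

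The main obstacle is engineering the counterexample so that the bad configuration $\Sscr_B$ simultaneously (a) is genuinely reached on a compliant l.t.s.\ trace from $\Sscr_0$, (b) is not a \pon, because a compliant infinite (non-time) l.t.s.\ trace does exist from it, and (c) cannot support any compliant infinite time l.t.s.\ trace. The Zeno-like loop produced by $R_3$ — an instantaneous rule that is always re-enabled by its own output and thus starves $Tick$ under the \lts — is what aligns all three conditions, and the role of the context fact $C$ is purely to keep every rule balanced. Once this is set up, the separation of \recov~from \rel~falls out immediately from the absence of the PTS progress condition.
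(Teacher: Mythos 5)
Your first implication is exactly the paper's argument: a \pon\ reachable from $\Sscr_0$ on a compliant trace that uses the \lts would, by the \rel property, admit a compliant infinite time trace using the \ltss, and since every infinite time trace is in particular an infinite trace, this contradicts the definition of a \pon; and, as you note, the realizability clause of the \recov property is inherited from the \rel property. Your second part also follows the paper's strategy -- steer the system into a ``Zeno trap'' from which only non-time-infinite l.t.s.\ traces exist, and take $\CS=\emptyset$ so that the \recov property holds vacuously for want of critical configurations -- but your concrete system, as literally written, does not work.

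The problem is the realizability witness. Your rules $R_1$, $R_2$, $R_3$ carry no guards, and nothing in the rule-application semantics forbids consuming a \emph{future} fact: after $R_2$ produces $A@(T{+}1)$, both $R_1$ and $R_2$ still match $A@(T{+}1)$ and $C@T$, so under the \lts the $Tick$ step in your trace $\Sscr_0 \lra_{R_2} \cdot \lra_{Tick} \cdots$ is illegal. Worse, every configuration reachable from your $\Sscr_0$ contains either $A$ and $C$ or $B$ and $C$, so some instantaneous rule is always enabled, $Tick$ is never permitted anywhere, no infinite time l.t.s.\ trace exists at all, realizability fails, and hence \emph{both} properties fail -- the example then separates nothing. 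The fix is to add (or explicitly invoke the paper's standing convention supplying) the guards $T \geq T'$ and $T \geq T''$ on the consumed facts: then the future fact $A@(T{+}1)$ disables $R_1$ and $R_2$, the $Tick$ is forced, and the rest of your analysis goes through ($\Sscr_B$ still loops on $R_3$ forever, since $R_3$ recreates $B$ and $C$ at the current time, so $\Sscr_B$ is not a \pon\ yet admits no compliant infinite time l.t.s.\ trace). The paper's own example avoids this pitfall by writing the guard $T' \leq T$ explicitly on the rules consuming $C$ and by starting from $\{Time@0, C@1\}$ so that the first $Tick$ is forced. Your observation that any such counterexample must lie outside the PTS fragment is correct and is precisely what Proposition~\ref{thm:deadlock-recover-PTS} exploits.
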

\begin{proof}
Let $\Tscr$ be a  timed MSR system that satisfies  the \rel property.
\\
 Assume $\Tscr$ does not satisfy the \recov property.
Then, since $\Tscr$ satisfies  the \realiz property, there is a  compliant trace from $\Sscr_0$ to some \pon~$\Sscr_P$  that uses the \lts 
Since $\Tscr $ satisfies  the \rel property, there is a  compliant infinite time trace from $\Sscr_P$ that uses the \lts
As $\Sscr_P$ is a  \pon, this contradicts the notion of  \pon.
\\[3pt]
We  give an example of a timed MSR system, $\Tscr$, that satisfies  the \recov property, but does not that satisfy  the \rel property. 
\\[3pt]
 Let ~$\Sscr_0'= \{ Time@0, C@1\}$, ~$\CS'= \emptyset$,
and  let  $\Tscr'$ contain only the following instantaneous rules:
\begin{subequations}
\begin{align}
Time@T, \,\red{C@T'} \mid T' \leq T \ \lra \ \ Time@T,\, \blue{\,D@T}
\label{eq:ex03-0}
\\Time@T, \,\red{C@T'} \mid T' \leq T \ \lra \ \ Time@T,\, \blue{\,A@T}
\label{eq:ex03-1}
\\
Time@T, \, \red{A@T'} \  \ \lra \ \ Time@T,\, \blue{\,B@T}
\label{eq:ex03-2}
\\Time@T, \,\red{B@T'}\  \ \lra \ \ Time@T,\, \blue{\,A@T}
\label{eq:ex03-3}
\end{align}
\end{subequations}
The system $\Tscr'$ satisfies  the \realiz property since there is a compliant infinite time trace from $\Sscr_0'$ that uses the \ltss: 
\begin{equation}
\begin{small}
\begin{array}{ll}
\label{ex3: infinite c trace}
Time@0, \, {C@1} \ \lra_{Tick} \
Time@1, \, {C@1}\ \lra_{(\ref{eq:ex03-0})} \
Time@1, \, {D@1}\ \lra_{Tick} \
\\ \quad
Time@2, \, {D@1}\ \lra_{Tick} \
Time@3, \, {D@1}\ \lra_{Tick} \
Time@4, \, {D@2}\ \lra_{Tick} \
\dots 
\end{array}
\end{small}
\end{equation}
There is only one other infinite trace form $\Sscr_0$ that uses the \ltss: 
\begin{equation}
\begin{small}
\begin{array}{ll}
\label{ex3: infinite trace}
Time@0, \, {C@1} \ \lra_{Tick} \
Time@1, \, {C@1}\ \lra_{(\ref{eq:ex03-1})} \
Time@1, \, {A@1}\ \lra_{(\ref{eq:ex03-2})} \
\\ \quad
Time@1, \, {B@1}\ \lra_{(\ref{eq:ex03-3})} \
Time@1, \, {A@1}\ \lra_{(\ref{eq:ex03-2})} \
Time@1, \, {B@2}\ \lra_{(\ref{eq:ex03-3})} \
\dots 
\end{array}
\end{small}
\end{equation}
Its subtrace obtained from $\Sscr_0$ by applying the $Tick$ rule followed by the rule $({\ref{eq:ex03-1}})$
reaches the configuration $Time@1, A@1$. 
This subtrace is compliant but it cannot be extended to a compliant infinite time trace that uses the \lts 
Hence, $\Tscr'$ does not satisfy  the \rel property.
\\[2pt]
However, $\Tscr'$ trivially satisfies  the \recov property ~since there are no critical configurations and, hence, no \pons.
\qed
\end{proof}

\vspace{1em}
Notice that the properties of timed MSR defined in Section~\ref{sec:problems} involve infinite time traces that use the \lts 
Recall that for any given PTS $\Tscr$ and any configuration $\Sscr$, 
there exists an infinite time trace of $\Tscr$ that starts with $\Sscr$ and uses the \lts 

Although  \recov and \rel are different properties  of timed MSR systems in general, it turns out that  for the class of  PTSes these properties coincide.


\vspace{0,5em}
\begin{proposition}
\label{thm:deadlock-recover-PTS}
Let $\Tscr$ be a PTS that uses the \ltss, 
$\Sscr_0$  an initial configuration,  and $\CS$  a critical configuration specification.

System ~$\Tscr$  satisfies the \rel property ~ iff  ~$\Tscr$ satisfies  the \recov property.

\end{proposition}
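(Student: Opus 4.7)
The plan is to prove the two directions separately. The direction \emph{reliability implies recoverability} is already handled by Proposition~\ref{thm:deadlock-recover}, which is stated for arbitrary timed MSR systems and therefore applies to PTSes without further work. So the real content of the proposition is the converse direction, \emph{recoverability implies reliability}, which, in contrast to the general case, does hold for PTSes.

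For that direction, I would assume that $\Tscr$ is a PTS satisfying the \recov property w.r.t.\ $\Sscr_0$, $\CS$ and the \ltss, and verify the two conjuncts of $F_{\text{\rel}}(\Tscr,\Sscr_0)$ in turn. The realizability conjunct $F_{\text{\realiz}}(\Tscr,\Sscr_0)$ is immediate since it is built into the definition of \recov. For the remaining conjunct, fix an arbitrary configuration $\Sscr$ reachable from $\Sscr_0$ on some compliant trace $\Pscr$ that uses the \ltss; the task is to exhibit a compliant infinite time trace from $\Sscr$ using the \lts

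The key step is to combine two observations. First, $\Sscr$ cannot be critical, because $\Sscr$ appears in the compliant trace $\Pscr$. Second, by the \recov property, $\Sscr$ is not a \pon. Unfolding Definition~\ref{def:pon}: since $\Sscr$ is non-critical but also not a \pon, the universal clause forbidding compliant infinite traces using the \lts from $\Sscr$ must fail, so there exists at least one compliant infinite trace from $\Sscr$ using the \ltss.

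The final step is to upgrade this compliant infinite trace to a compliant infinite \emph{time} trace, which is precisely where the PTS hypothesis is essential. By Proposition~\ref{prop:progressing}, every infinite trace of a PTS is an infinite time trace, so the witness obtained from the non-\pon property of $\Sscr$ already satisfies the time-divergence requirement of \rel. This completes the reliability conjunct. The main (and essentially only) obstacle is recognising that, without the PTS assumption, infinite traces need not be infinite time traces, so the witness compliant infinite trace from $\Sscr$ cannot in general be promoted to an infinite time trace — which is consistent with the counterexample $\Tscr'$ separating the two properties in the proof of Proposition~\ref{thm:deadlock-recover}.
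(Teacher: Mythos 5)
Your proposal is correct and matches the paper's argument in all essentials: both proofs dispose of the easy implication via Proposition~\ref{thm:deadlock-recover} and establish the converse by combining the definition of a \pon\ with Proposition~\ref{prop:progressing}, which is exactly where the PTS hypothesis enters. The only difference is presentational --- the paper argues the converse by contraposition (failure of the \rel property yields a reachable \pon, after separately handling a critical $\Sscr_0$), whereas you argue it directly from the \recov assumption --- but the ingredients are identical.
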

\begin{proof}
Since a PTS is a timed MSR system, it follows from Proposition \ref{thm:deadlock-recover} that a PTS, which satisfies  the \rel property, also satisfies  the \recov property.
\\[3pt]
Assume that a PTS $\Tscr$ does not satisfy  the \rel property. If $\Sscr_0$ is critical, then $\Tscr$ does not satisfy  the \realiz property and consequently, does not satisfy the \recov property.
If $\Sscr_0$ is not critical, there is a compliant trace  from $\Sscr_0$ to some configuration 
$\Sscr_1$ that uses the \lts 
which cannot be extended to a compliant infinite time  trace that uses the \lts 

Then, $\Sscr_1$ is a \pon.
Namely, if $P$ is an infinite trace from $\Sscr_1$ that uses the \ltss,
by Proposition~\ref{prop:progressing}, $P$ is an infinite time trace that uses the \lts 
Then, $P$ is not compliant.  
%
Since the \pon~$\Sscr_1$ is reachable from $\Sscr_0$ on a  compliant trace using the \ltss, 
$\Tscr$ does not satisfy  the \recov property.
\qed
\end{proof}

We  show that the remaining properties are different even for PTSes. Furthermore, we  show relations among the 
properties for PTSes and  for timed MSR systems in general.
We first show that \rel~and \sur are different properties of PTSes, and, consequently,  different properties of timed MSR systems.

\begin{proposition}
\label{thm:deadlock-survive}
Let $\Tscr$ be a PTS that uses the \ltss, 
$\Sscr_0$  an initial configuration  and $\CS$  a critical configuration specification.

If ~$\Tscr$ satisfies  the \sur property, then $\Tscr$ satisfies  the \rel property.

If $\Tscr$ satisfies  the \rel property, it may not satisfy  the \sur property.


\end{proposition}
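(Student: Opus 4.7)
My plan is to handle the two halves of the proposition separately: the forward implication by direct trace manipulation together with the PTS-specific progress result (Proposition \ref{prop:progressing}), and the non-implication by exhibiting a small counter-example.

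For the first statement, assume $\Tscr$ satisfies the \sur property. Realizability is part of the definition of \sur property, so it suffices to verify the extension clause of Definition \ref{def:new-prop}. Fix any configuration $\Sscr$ reachable from $\Sscr_0$ on a compliant trace $\Pscr_0$ that uses the \ltss. I would argue the existence of some infinite trace $\Pscr_1$ from $\Sscr$ that uses the \lts by observing that in every configuration either some instantaneous rule is applicable (in which case lazy time sampling forces one to fire) or the $Tick$ rule is applicable (having no guard), so the trace can always be extended one step. By Proposition \ref{prop:progressing}, $\Pscr_1$ is then automatically an infinite time trace. Concatenating $\Pscr_0$ and $\Pscr_1$ at the shared configuration $\Sscr$ yields an infinite time trace from $\Sscr_0$ using the \ltss, which by the \sur hypothesis must be compliant; hence its suffix $\Pscr_1$ is compliant and serves as the witness required by the \rel clause at $\Sscr$.

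For the second statement, that the \rel property need not imply the \sur property, I would exhibit the following compact PTS. Take facts $F$ and $H$, initial configuration $\Sscr_0 = \{Time@0, F@0\}$, critical configuration specification $\CS = \{\langle \{H@T'\}, \emptyset\rangle\}$, and the two rules
\begin{equation*}
Time@T, \red{F@T'} \mid T' \leq T \ \lra \ Time@T, \blue{F@(T+1)},
\end{equation*}
\begin{equation*}
Time@T, \red{F@T'} \mid T' \leq T \ \lra \ Time@T, \blue{H@(T+1)}.
\end{equation*}
Both rules are balanced, consume only non-future facts, and create a strictly future fact, so this $\Tscr$ is a PTS. Alternating the first rule with $Tick$ yields a compliant infinite time lts-trace, establishing realizability. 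Firing the second rule produces a configuration containing $H@(T+1)$, which is already critical, so the only configurations reachable on compliant lts-traces are of the forms $\{Time@n, F@n\}$ and $\{Time@n, F@(n+1)\}$; from each of these, the ``stay-at-$F$'' strategy provides a compliant infinite time lts-trace, establishing the \rel property. In contrast, the branch that fires the second rule at time $0$ and then applies $Tick$ indefinitely is an infinite time lts-trace that contains critical configurations, showing that the \sur property fails.

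The main obstacle is the \rel verification in the counter-example, where one must confirm that the ``bad'' rule application is never a step of a compliant prefix and therefore cannot contribute a reachable-via-compliant-trace configuration from which a compliant extension would have to be salvaged. This is handled by a short case analysis on the reachable compliant configurations and the rules applicable to each under lazy time sampling, aided by the fact that $\CS$ declares the mere presence of an $H$-fact to be critical. The forward implication is otherwise largely routine: one has to check that concatenation preserves lazy time sampling, which is immediate since lts is a local side-condition on each $Tick$ step and both $\Pscr_0$ and $\Pscr_1$ satisfy it individually; the crucial input is Proposition \ref{prop:progressing}, which upgrades an infinite lts-trace in a PTS to an infinite time trace for free.
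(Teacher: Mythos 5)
Your proposal is correct and follows essentially the same route as the paper: the forward implication rests on the same two ingredients (any lts-trace of a PTS extends to an infinite trace, which Proposition~\ref{prop:progressing} upgrades to an infinite time trace, after which the \sur hypothesis forces compliance of the extension), with the paper merely phrasing it as a contradiction rather than directly. Your counterexample for the second half is smaller than the paper's three-rule example but structurally identical --- a single ``bad'' rule whose firing immediately yields a critical configuration, so compliant traces must avoid it while survivability fails along the bad branch --- and your verification that it is a balanced, progressing system and that the \rel property holds on the compliant-reachable configurations is sound.
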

\begin{proof}
Assume that $\Tscr$ satisfies  the \sur property, but does not satisfy  the \rel property. 
 Then, {since $\Tscr$ satisfies  the \realiz property,}  there exists a compliant trace, $\Pscr$, from $\Sscr_0$ to some configuration $\Sscr_1$  that cannot be extended to a compliant infinite time trace  that uses the \lts 
Let $\Pscr'$ be an infinite time trace which is an extension of $\Pscr$  that uses the \lts 
Such a trace $\Pscr'$ exists due to Proposition~\ref{prop:progressing}, but it is not compliant.
\\
Since $\Tscr$ satisfies  the \sur property, all infinite time traces from $\Sscr_0$ that use the \lts 
are compliant, including $\Pscr'$. 
 Contradiction.
\\[5pt]
The following example of a PTS  satisfies  the \rel property, but  does not satisfy  the \sur property. 

\noindent
Let ~$\Sscr_0= \{ Time@0, A@0, B@0\}$, ~$\CS= \{~\tup{~\{B@T, D@T'\},\emptyset~} \}$ and let PTS $\Tscr$ contain only the following instantaneous rules:
\begin{subequations}
\begin{small}
\begin{align}
Time@T, \,\red{A@T'}, \, B@T''  \mid   \{ T' \leq T, T'' \leq T\}  &\lra  \, Time@T,\, B@T'', \blue{\,C@(T + 1)}
\label{eq:ex-1}
\\
Time@T, \,\red{A@T'}, \, B@T''  \mid  \{ T' \leq T\}  &\lra \, Time@T,\, B@T'', \blue{\,D@(T + 1)}
\label{eq:ex-2}
\\
Time@T, \,\red{B@T'}, \, \red{C@T''}  \mid  \{T' \leq T, T''\leq T \}  &\lra  \, Time@T,\, \blue{A@T}, \blue{\,B@(T + 1)}
\label{eq:ex-3}
\end{align}
\end{small}
\end{subequations}
The following trace from $\Sscr_0$ uses the \lts 
and is not compliant:

\vspace{2pt} 
\(
Time@0, A@0, B@0\ \lra_{(\ref{eq:ex-2})} \ Time@0, B@0, D@1 \ .
\)\\[2pt]
Hence, $\Tscr$ does not satisfy the \sur property.

To show that $\Tscr$ satisfies the \rel property, we first show that $\Tscr$  satisfies the \realiz property. The following trace from $\Sscr_0$ is a compliant infinite time trace that uses the \ltss: 
$$
{\small
\begin{array}{l}
    Time@0, A@0, B@0 \ \lra_{(\ref{eq:ex-1})} \ \ Time@0, B@0, C@1 \lra_{Tick}    \\[2pt] 
    \ \ \lra_{Tick}     \ \  Time@1, B@0, C@1 \  \lra_{(\ref{eq:ex-3})} \ \ Time@1, A@1, B@2 \ \lra_{Tick}     
    \\[2pt] \ \ \ \  \lra_{Tick} \ \ 
       Time@2, A@1, B@2 \ \lra_{(\ref{eq:ex-1})} \ \ Time@2, B@2, C@3 \ \lra_{Tick}  \dots
\end{array}
}
$$
Next, assume $\Pscr$ is a compliant trace from $\Sscr_0$ to some $\Sscr_1$ that uses the \lts 
Then $\Pscr$ does not contain rule (\ref{eq:ex-2}), which always results in a critical configuration.
Hence, only rules (\ref{eq:ex-1}), (\ref{eq:ex-3}) and $Tick$  are used in $\Pscr$, so $\Sscr_1$ is either 
\ $
\{ \, Time@t, \, {A@t'}, \, B@t'' \, \}$  \text{ or } \ $\{ \, Time@t, \, B@t', \,{C@t''}\, \}.
$\
Using only the rules (\ref{eq:ex-1}), (\ref{eq:ex-3}) and $Tick$, the trace $\Pscr$ can be extended to a compliant infinite time trace that use the \lts 
Hence,  $\Tscr$ satisfies the \rel property.
\qed
\end{proof}

However, the above does not hold for general MSR systems, 
\ie,  MSR systems that satisfy the \sur property do not necessarily satisfy the \rel property.


\vspace{10pt}
\noindent
\begin{proposition} \label{thm:deadlock-survive-msr}
Let $\Tscr$ be a timed MSR that uses the \ltss, 
$\Sscr_0$  an initial configuration  and $\CS$  a critical configuration specification.

If ~$\Tscr$ satisfies the \sur property, it may not satisfy the \rel property.

If $\Tscr$ satisfies the \rel property, it may not satisfy the \sur property.


\end{proposition}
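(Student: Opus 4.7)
The proof requires two counterexamples. The second claim, that \rel does not imply \sur, is already witnessed by the PTS used in the proof of Proposition~\ref{thm:deadlock-survive}: since every PTS is a timed MSR, the same system and critical configuration specification settle the more general case at once.

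The interesting half is the first claim, that \sur does not imply \rel. The plan is to exploit the single feature distinguishing a general timed MSR from a PTS, namely the possibility of \emph{Zeno-like} behavior in which an infinite chain of instantaneous rules fires without any $Tick$. Under the \ltss, such a chain blocks $Tick$ permanently, so it forms an infinite trace that is \emph{not} an infinite time trace. Because \sur quantifies only over infinite time traces, Zeno loops are invisible to it; yet by trapping a reachable configuration they can prevent any infinite time trace from it at all, which is exactly what breaks \rel. This is the loophole created by dropping the progressing assumption, which in Proposition~\ref{thm:deadlock-survive-PTS} was ruled out via Proposition~\ref{prop:progressing}.

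Concretely, I would take $\Sscr_0=\{Time@0,\,A@0\}$ with empty $\CS$ and four instantaneous rules: (R1) $Time@T,\red{A@T'}\lra Time@T,\blue{D@(T+1)}$, (R2) $Time@T,\red{A@T'}\lra Time@T,\blue{B@T}$, (R3) $Time@T,\red{B@T'}\lra Time@T,\blue{E@T}$, and (R4) $Time@T,\red{E@T'}\lra Time@T,\blue{B@T}$. Rule R1 converts $A$ to a stable fact $D$ that disables every instantaneous rule and so lets $Tick$ run forever; rule R2 diverts the computation into a perpetual $B\leftrightarrow E$ cycle at the current time. Verification then splits into three items: (i)~\realiz via the trace that applies R1 and ticks thereafter; (ii)~\sur, by arguing that any infinite time trace from $\Sscr_0$ using the \lts must start with R1, since applying R2 first enters the $B\leftrightarrow E$ cycle and no rule ever re-creates $A$, so the cycle is permanent and $Tick$ is blocked forever; such a trace is compliant because $\CS=\emptyset$; (iii)~failure of \rel, because $\{Time@0,B@0\}$ is reachable compliantly from $\Sscr_0$ by one step of R2 under the \ltss, yet from this configuration every trace using the \lts is a Zeno oscillation keeping $Time@0$, so no infinite time trace from it exists.

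The main obstacle is item~(ii): one must confirm that no alternation of rules produces an infinite time trace missed by the case analysis. This reduces to the observations that $A$ is only consumed and never produced, and that rules R3 and R4 together preserve exactly one of $B,E$ in the configuration; hence the choice between R1 and R2 at the first step is irrevocable, and the two possibilities listed exhaust the infinite-time-trace behavior from $\Sscr_0$.
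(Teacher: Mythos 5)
Your proposal is correct and follows essentially the same route as the paper: for the second claim both cite the PTS of Proposition~\ref{thm:deadlock-survive}, and for the first claim your system (a branch between a terminal fact that lets time tick forever and a Zeno two-cycle of instantaneous rules, with $\CS=\emptyset$) is structurally identical to the system $\Tscr'$ from Proposition~\ref{thm:deadlock-recover} that the paper reuses, exploiting exactly the same loophole that survivability quantifies only over infinite time traces while the lazy time sampling lets an instantaneous loop block $Tick$ forever. Your verification of items (i)--(iii) is sound, so no changes are needed.
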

\begin{proof}
Let $\Tscr'$, $\Sscr_0'$ and $\CS'$ be as specified in the proof of Proposition \ref{thm:deadlock-recover}.
Recall that $\Tscr'$ does not satisfy the \rel property.

The system $\Tscr'$ satisfies the \sur property. Namely, there are only two infinite traces from $\Sscr_0'$ that use the \ltss, 
traces (\ref{ex3: infinite c trace}) and (\ref{ex3: infinite trace}) specified in  the proof of Proposition \ref{thm:deadlock-recover}.
However, trace (\ref{ex3: infinite trace}) is not an infinite time trace, so there is only one infinite time trace from $\Sscr_0'$ that uses the \ltss, 
trace (\ref{ex3: infinite c trace}).
Therefore, since trace (\ref{ex3: infinite c trace}) is compliant, $\Tscr'$ satisfies the \sur property.

By Proposition~\ref{thm:deadlock-survive} there is a PTS, and therefore an MSR, that satisfies the \rel property but does not satisfy the \sur property.
\qed
\end{proof}


Next, we show how the \recov property relates to the \realiz property.

\begin{proposition}
\label{thm:realize-recover}
Let $\Tscr$ be a timed MSR that uses the \ltss, 
$\Sscr_0$  an initial configuration,  and $\CS$  a critical configuration specification.

If ~$\Tscr$ satisfies the \recov property, then $\Tscr$ satisfies the \realiz property.

A system $\Tscr$ that satisfies the \realiz property may not satisfy the \recov property.


\end{proposition}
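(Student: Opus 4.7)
The first implication is essentially immediate from the definition of \prop. By Definition~\ref{def:prop-nonprop}, a system satisfying \prop~with respect to $\Sscr_0$, $\CS$, and the \lts satisfies, as an explicit conjunct, realizability with respect to $\Sscr_0$, $\CS$, and the \lts So the plan here is simply to cite Definition~\ref{def:prop-nonprop} and the formula $F_{\text{\recov}}(\Tscr,\Sscr_0)$, which contains $F_{\text{\realiz}}(\Tscr,\Sscr_0)$ as a conjunct.

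For the separation, the plan is to exhibit a small timed MSR $\Tscr$ with initial configuration $\Sscr_0$ and a critical configuration specification $\CS$ such that a \pon~is reachable on a compliant trace using the \ltss, while another branch still yields a compliant infinite time trace. A natural construction uses a nondeterministic branching from $\Sscr_0$: one branch enters a ``safe loop'' that produces a compliant infinite time trace witnessing realizability, while another branch produces, in one step, a non-critical configuration $\Sscr_P$ from which every instantaneous continuation is blocked and every tick produces a critical configuration (so $\Sscr_P$ is a \pon).

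Concretely, I would take $\Sscr_0 = \{Time@0, A@0\}$ and $\CS = \{\tup{\{Bad@T\}, \emptyset}\}$, and give $\Tscr$ the rules
\[
\begin{array}{l}
Time@T, \red{A@T'} \mid T' \leq T \lra Time@T, \blue{A@(T+1)}, \\[3pt]
Time@T, \red{A@T'} \mid T' \leq T \lra Time@T, \blue{P@(T+1)}, \\[3pt]
Time@T, \red{P@T'} \mid T' \leq T \lra Time@T, \blue{Bad@T}.
\end{array}
\]
The first rule, applied repeatedly and interleaved with ticks, produces a compliant infinite time trace using the \ltss, so $\Tscr$ satisfies realizability. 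The one-step trace using the second rule reaches $\{Time@0, P@1\}$, which is not critical. From that configuration the only instantaneous rule applicable (once the tick brings global time to $1$) produces $Bad@T$, and any use of the \lts forces either the instantaneous rule to fire or, if delayed, the configuration to contain $Bad@t$. A brief check shows every infinite trace from $\{Time@0, P@1\}$ using the \lts is non-compliant, so this configuration is a \ponss. Since it is reached from $\Sscr_0$ on a compliant trace using the \ltss, $\Tscr$ fails \prop.

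The only delicate point is verifying that the candidate \pon~is genuinely a \pon~under the \emph{\ltss}, i.e.\ that the laziness constraint does not accidentally permit a compliant infinite trace from $\Sscr_P$. I would therefore double-check which instantaneous rules are applicable at $\Sscr_P$ and confirm that the \lts forces the harmful rule to fire. Tuning the rules (e.g.\ adjusting timestamp constraints so the third rule is enabled exactly when time has advanced) is the only fiddly step; the overall architecture of the counterexample is standard.
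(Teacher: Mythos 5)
Your proposal is correct and follows essentially the same route as the paper: the first implication is read directly off Definition~\ref{def:prop-nonprop} (realizability is an explicit conjunct of \prop), and the separation is witnessed by a two-branch system with a safe loop establishing realizability and a doomed branch reaching a \pon~on a compliant trace. The paper's concrete example takes $\Sscr_0''=\{Time@0,A@0\}$, critical fact $D$, and rules $A\to B@(T+1)$, $A\to C@(T+1)$, $B\to A@(T+1)$, $C\to D@(T+1)$; yours merely collapses the $A\to B\to A$ loop into a self-loop and renames $C,D$ to $P,Bad$, and your verification under the \lts goes through exactly as you sketch (at $\{Time@0,P@1\}$ no instantaneous rule is enabled, so $Tick$ fires; at $\{Time@1,P@1\}$ the lazy sampling forces the $Bad$-producing rule). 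One detail is worth fixing: your third rule creates $Bad@T$ at the current global time, so it violates condition (i) of Definition~\ref{def:progressing}, and your system is a timed MSR but not a PTS. This is harmless for the proposition as stated, but the paper deliberately makes its counterexample a PTS (every rule creates a strictly future fact, e.g.\ $D@(T+1)$) because the same example is reused in Corollaries~\ref{thm:properties-PTS} and~\ref{thm:properties-n} to separate \recov~from \realiz~within the PTS class and for the time-bounded properties. Replacing $Bad@T$ by $Bad@(T+1)$ restores the PTS conditions without affecting criticality (which does not depend on the timestamp of $Bad$) and makes your example fully interchangeable with the paper's.
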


\begin{proof}
Assume $\Tscr$ satisfies the \recov property. 
Then, $\Tscr$ satisfies the \realiz property  by definition.

We prove the other statement by providing an example of a PTS that satisfies the \realiz property, but does not satisfy the \recov property.
Let ~$\Sscr_0''= \{ Time@0, A@0\}$, ~$\CS''= \{~\tup{\{D@T\},\emptyset~} \}$ and let PTS $\Tscr''$ contain only the following instantaneous rules:
\begin{subequations}
\begin{small}
\begin{align}
Time@T, \,\red{A@T'}\ \mid \ \, \{ T' \leq T\} \ \lra \ \ Time@T,\, \blue{\,B@(T + 1)}
\label{eq:ex1-3}
\\
Time@T, \,\red{A@T'} \ \mid \ \, \{ T' \leq T\} \ \lra \ \ Time@T,\, \blue{\,C@(T + 1)}
\label{eq:ex1-1}
\\
Time@T, \, \red{B@T'} \ \mid \ \, \{ T'\leq T \} \ \lra \ \ Time@T,\, \blue{\,A@(T + 1)}
\label{eq:ex1-4}
\\
Time@T, \,\red{C@T'} \ \mid \ \, \{ T' \leq T\} \ \lra \ \ Time@T,\, \blue{\,D@(T + 1)}
\label{eq:ex1-2}
\end{align}
\end{small}
\end{subequations}
The following trace, which uses the \ltss, 
shows the \realiz property of $\Tscr''$:

\vspace{2pt} 
\(
 \begin{small}
\begin{array}{l}
Time@0,\,{A@0} \, \lra_{(\ref{eq:ex1-3})} \, Time@0, {\,B@1} \ \lra_{Tick} \, Time@1, {\,B@1} 
\ \lra_{(\ref{eq:ex1-4})} \\
\ \ Time@1, {\,A@2} \  \lra_{Tick} \, Time@2, {\,A@2} \ \lra_{(\ref{eq:ex1-3})} \, Time@2, {\,B@3} \ \lra_{Tick} \ \dots
\end{array} 
 \end{small}
\)\\[2pt]
 The configuration $\widetilde{\Sscr}=\{Time@0, C@1\}$ is reachable from $\Sscr_0''$ by a compliant trace that uses the \ltss: 
~$
Time@0,\,\red{{A@0}} \ \lra_{(\ref{eq:ex1-1})} \ Time@0, \blue{{\,C@1} }  .
$
~$\widetilde{\Sscr}$ is a \pon~as rule 
(\ref{eq:ex1-2}) is the only instantaneous rule that can be applied after a $Tick$, so all infinite traces from $\widetilde{\Sscr}$ that use the \lts 
contain the critical configuration~ \mbox{$\{Time@1, D@2\}$}. 

Since $\widetilde{\Sscr}$ is a \pon, 
 $\Tscr''$ does not satisfy the \recov property.
\qed
\end{proof}

\begin{remark}
Notice that requiring the \realiz property in the definition of the \rel property~(Definition~\ref{def:new-prop}) is redundant in the sense that the \realiz property would follow from the \rel property anyway. 

Namely, the non-critical initial configuration $\Sscr_0$ is trivially reachable from $\Sscr_0$ on a compliant trace that uses the \lts 
Then, the \rel property of $\Tscr$ would imply the existence of a compliant infinite time trace from $\Sscr_0$ that uses the \lts 
%
 Hence, $\Tscr$ would satisfy the \realiz property.

Similarly, 
the \realiz property would follow from the \recov property, even if it was not required by definition (Definition~\ref{def:prop-nonprop}), provided that the initial configuration was not critical.
 Namely, the set of compliant traces from a critical configuration is empty, so in the case of a critical initial configuration, the system would not satisfy the \realiz property but
 the \recov property would hold because of universal quantification over the empty set.

 Assuming that $\Sscr_0$ is not critical, $\Sscr_0$ is trivially reachable from $\Sscr_0$ on a compliant trace that uses the \lts
 Then, the \recov property of $\Tscr$ implies that $\Sscr_0$ is not a \pon. 
Then, as per definition of a \pon, there is a compliant infinite trace, $\Pscr$, from $\Sscr_0$ that uses the \lts 
As per Proposition~\ref{prop:progressing}, $\Pscr$ is  a compliant infinite time trace from $\Sscr_0$ that uses the \lts 
 Hence, $\Tscr$ satisfies the \realiz property.

Therefore, it was not necessary to include the condition of the \realiz property in Definitions~\ref{def:prop-nonprop} and \ref{def:new-prop}. We do so to keep these notions intuitive in the sense that these properties are more restrictive that the \realiz property.
\end{remark}

Using transitivity of the subset relation, we can infer relations among all our properties for both PTSes and timed MSR systems in general.
We summarize our results in the following corollaries.

\begin{corollary}
\label{thm:properties}
Let ~$^{~}_{reali}Z_{ability}^{~MSR}$ ,  $^{~}_{re}L_{iability}^{~MSR}$ , $^{~}_{reco}V_{erability}^{~MSR}$~ and  ~$S_{urvivability}^{~MSR}$~ be  the classes of timed MSR systems  satisfying the \realiz, \rel, \recov and \sur properties, respectively, w.r.t. the \lts 
Then, the following 
relations hold:
$$
S_{urvivability}^{~MSR}
\ {\neq}
^{~}_{re}L_{iability}^{~MSR}
\ \subset \ 
^{~}_{reco}V_{erability}^{~MSR}
\ \subset \ 
^{~}_{reali}Z_{ability}^{~MSR}
$$
\end{corollary}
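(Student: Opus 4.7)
The plan is to obtain the corollary simply as a consequence of Propositions~\ref{thm:deadlock-recover}, \ref{thm:deadlock-survive-msr}, and~\ref{thm:realize-recover}; no new construction is needed. I would first state that each containment to be proved is of the form ``the membership condition of the smaller class implies the membership condition of the larger class,'' so it suffices to read off such an implication from the previously proved propositions, and that each strict inequality/inequivalence requires producing (or pointing to) a witness MSR already exhibited in an earlier proof.

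First, for $^{~}_{re}L_{iability}^{~MSR} \subseteq {}^{~}_{reco}V_{erability}^{~MSR}$, I would invoke the first statement of Proposition~\ref{thm:deadlock-recover}: any timed MSR that satisfies the \rel property also satisfies the \recov property. Strictness, $^{~}_{re}L_{iability}^{~MSR} \subsetneq {}^{~}_{reco}V_{erability}^{~MSR}$, follows from the explicit counterexample $\langle \Tscr', \Sscr_0', \CS' \rangle$ built in the second half of the proof of Proposition~\ref{thm:deadlock-recover}, where $\CS' = \emptyset$ makes \recov vacuous but the dead-end trace witnessed by rule~(\ref{eq:ex03-1}) blocks \rel.

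Second, for ${}^{~}_{reco}V_{erability}^{~MSR} \subseteq {}^{~}_{reali}Z_{ability}^{~MSR}$, I would cite the first statement of Proposition~\ref{thm:realize-recover}: the \recov property implies the \realiz property by definition. Strictness is witnessed by the PTS $\Tscr''$ constructed in Proposition~\ref{thm:realize-recover}, which is realizable via the $\{A,B\}$-cycle but reaches the \pon~$\{Time@0, C@1\}$ along a compliant trace, ruling out \recov. Since PTSes are in particular timed MSR systems, this is a valid witness for the MSR-level strictness.

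Finally, to justify $S_{urvivability}^{~MSR} \neq {}^{~}_{re}L_{iability}^{~MSR}$, I would appeal to both parts of Proposition~\ref{thm:deadlock-survive-msr}: the system $\Tscr'$ from the proof of Proposition~\ref{thm:deadlock-recover} sits in $S^{~MSR}\setminus L^{~MSR}$ (only one infinite time trace exists from $\Sscr_0'$ and it is compliant, giving \sur, yet it is not in $L^{~MSR}$ as already noted), while the PTS $\Tscr$ of Proposition~\ref{thm:deadlock-survive} sits in $L^{~MSR}\setminus S^{~MSR}$. Since I am only harvesting previously established facts, there is no real obstacle here; the only subtlety worth stating carefully is that the strict-containment witnesses must be genuine MSR systems (which PTSes are), and that $S^{~MSR}$ and $L^{~MSR}$ are incomparable rather than one containing the other, which is exactly what Proposition~\ref{thm:deadlock-survive-msr} gives in both directions.
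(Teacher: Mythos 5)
Your proposal is correct and follows exactly the paper's own route: the paper proves this corollary in one line by citing Propositions~\ref{thm:deadlock-survive-msr}, \ref{thm:deadlock-recover}, and~\ref{thm:realize-recover}, which is precisely the harvesting you describe. You merely make explicit which witnesses establish strictness (the system $\Tscr'$ for $V\setminus L$ and $S\setminus L$, the system $\Tscr''$ for $Z\setminus V$, and the PTS of Proposition~\ref{thm:deadlock-survive} for $L\setminus S$), and these identifications all match the constructions in the cited proofs.
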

\begin{proof}
The statement follows directly from  the Propositions {\ref{thm:deadlock-survive-msr}}, \ref{thm:deadlock-recover},  
 and \ref{thm:realize-recover}. 
\qed
\end{proof}

\begin{corollary}
\label{thm:properties-PTS}
 Let ~$^{~}_{reali}Z_{ability}^{~PTS}$ ,  $^{~}_{re}L_{iability}^{~PTS}$ , $^{~}_{reco}V_{erability}^{~PTS}$~ and  ~$S_{urvivability}^{~PTS}$~
be the classes of PTSes satisfying the \realiz, \rel, \recov and \sur properties, respectively, w.r.t. the \lts 
Then the following proper subset relations hold:
$$
S_{urvivability}^{~PTS}
\ \subset \
^{~}_{re}L_{iability}^{~PTS}
\ = \ 
^{~}_{reco}V_{erability}^{~PTS}
\ \subset \ 
^{~}_{reali}Z_{ability}^{~PTS}
$$
\end{corollary}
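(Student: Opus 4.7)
The plan is to assemble this corollary directly from the three preceding propositions, since each inclusion and each strictness witness has already been established in the PTS setting. First I would handle the middle equality $^{~}_{re}L_{iability}^{~PTS} = \ ^{~}_{reco}V_{erability}^{~PTS}$, which is precisely the content of Proposition~\ref{thm:deadlock-recover-PTS}, so nothing new is required there.

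For the left inclusion $S_{urvivability}^{~PTS} \subseteq \ ^{~}_{re}L_{iability}^{~PTS}$, I would invoke the forward direction of Proposition~\ref{thm:deadlock-survive}: any PTS satisfying survivability also satisfies reliability. To show that this inclusion is strict, I would point to the PTS exhibited in the proof of that same proposition, namely the system with $\Sscr_0 = \{Time@0, A@0, B@0\}$, $\CS = \{\langle \{B@T, D@T'\}, \emptyset\rangle\}$ and rules (\ref{eq:ex-1})--(\ref{eq:ex-3}), which was shown to be reliable but not survivable. Since this example is already a PTS, it immediately serves as the separating witness.

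For the right inclusion $^{~}_{reco}V_{erability}^{~PTS} \subseteq \ ^{~}_{reali}Z_{ability}^{~PTS}$, I would cite the forward direction of Proposition~\ref{thm:realize-recover}. Strictness follows from the PTS $\Tscr''$ constructed in the proof of that proposition, with $\Sscr_0'' = \{Time@0, A@0\}$, $\CS'' = \{\langle \{D@T\}, \emptyset\rangle\}$ and rules (\ref{eq:ex1-3})--(\ref{eq:ex1-2}); this system is realizable via the trace that alternates between $A$ and $B$, yet the configuration $\{Time@0, C@1\}$ is reachable on a compliant trace using lazy time sampling and is a point-of-no-return, so recoverability fails.

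The main obstacle is really only bookkeeping rather than any new argument: one must confirm that the two separating systems cited above genuinely satisfy the PTS conditions of Definition~\ref{def:progressing} (balanced, each instantaneous rule consuming only past/present facts and creating at least one strictly future fact). Direct inspection of rules (\ref{eq:ex-1})--(\ref{eq:ex-3}) and (\ref{eq:ex1-3})--(\ref{eq:ex1-2}) confirms this, after which transitivity of $\subseteq$ together with the distinctness witnesses yields the full chain.
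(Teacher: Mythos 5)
Your proposal is correct and follows essentially the same route as the paper, which likewise obtains the corollary directly from Propositions~\ref{thm:deadlock-survive} and~\ref{thm:deadlock-recover-PTS} together with the PTS counterexample constructed in the proof of Proposition~\ref{thm:realize-recover}. Your additional check that the separating examples satisfy the PTS conditions of Definition~\ref{def:progressing} is sound bookkeeping that the paper leaves implicit.
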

\begin{proof}
The statement follows directly from  the Propositions  \ref{thm:deadlock-survive} and \ref{thm:deadlock-recover-PTS}, and the proof of proposition \ref{thm:realize-recover}. %
\qed
\end{proof}

\begin{corollary}
\label{thm:properties-n}
 Let ~$^{~}_{reali}nZ_{ability}^{~PTS}$ ,  $^{~}_{re}nL_{iability}^{~PTS}$~ and  ~$nS_{urvivability}^{~PTS}$~
 be  the classes of PTSes  satisfying  the \nrealiz, \nrel and \nsur properties, respectively, w.r.t. the \lts 
Then, the following proper subset relations hold:
$$
n\Sscr_{urvivability}^{~PTS} 
 \ \subset \ ^{~}_{re}nL_{iability}^{~PTS}\ \subset \ ^{~}_{reali}nZ_{ability}^{~PTS} \ .
$$
\end{corollary}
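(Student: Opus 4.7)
My plan is to mirror the structure used for the unbounded versions (Corollary \ref{thm:properties-PTS}), obtaining the two inclusions directly from definitions plus the $n$-time versions of the propositions already established, and then exhibiting witnesses of strictness by adapting the counterexamples from Propositions \ref{thm:deadlock-survive} and \ref{thm:realize-recover} to the bounded setting.

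For the inclusion $nS^{~PTS}_{urvivability} \subseteq {}^{~}_{re}nL^{~PTS}_{iability}$, I would argue as follows. Let $\Tscr$ be $n$-survivable, so $\Tscr$ is $n$-realizable and every trace from $\Sscr_0$ with exactly $n$ instances of $Tick$ that uses the \lts is compliant. Take any compliant trace $\Pscr$ from $\Sscr_0$ using the \lts with at most $n$ instances of $Tick$, ending in some configuration $\Sscr$. Since $\Tscr$ is a PTS, Proposition \ref{prop:progressing} guarantees that from $\Sscr$ there exists an infinite time trace using the \lts; truncating this extension at the moment the total count of $Tick$ rules in the combined trace reaches exactly $n$ yields an extension $\Pscr'$ with exactly $n$ instances of $Tick$. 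By $n$-survivability $\Pscr'$ is compliant, which verifies the three conditions of Definition \ref{def:n-new-prop}. The inclusion ${}^{~}_{re}nL^{~PTS}_{iability} \subseteq {}^{~}_{reali}nZ^{~PTS}_{ability}$ is immediate from Definition \ref{def:n-new-prop}, which explicitly requires $n$-realizability.

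For the strict separations, I plan to reuse the two PTS examples already constructed in the paper. The PTS $\Tscr$ with $\Sscr_0 = \{Time@0, A@0, B@0\}$ and $\CS = \{\tup{\{B@T, D@T'\}, \emptyset}\}$ from the proof of Proposition \ref{thm:deadlock-survive} exhibits one-step non-compliant traces (via rule (\ref{eq:ex-2})), so it fails $n$-survivability for every $n \geq 0$; yet the infinite compliant trace exhibited there can be truncated at any $n$ and any compliant prefix can be steered (by avoiding rule (\ref{eq:ex-2})) into such a compliant trace of exactly $n$ ticks, giving $n$-reliability. This separates $nS$ from $nL$ for a sufficiently large fixed $n$ (in fact for all $n \geq 1$). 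Dually, the PTS $\Tscr''$ from the proof of Proposition \ref{thm:realize-recover} is $n$-realizable for every $n$ (the displayed alternating $A/B$ trace works) but the one-step compliant prefix $Time@0, A@0 \to_{(\ref{eq:ex1-1})} Time@0, C@1$ cannot be extended to any compliant trace reaching a larger number of $Tick$ rules, because the only continuation involves rule (\ref{eq:ex1-2}) which produces the critical fact $D$; hence $\Tscr''$ fails $n$-reliability for every $n \geq 1$, separating $nL$ from $nZ$.

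The only mildly subtle step is the truncation argument in the first inclusion: one must observe that a PTS, under lazy time sampling and starting from any configuration, admits an infinite time trace, so the extension needed by Definition \ref{def:n-new-prop} always exists and can be cut off exactly when the $Tick$ counter hits $n$. This relies crucially on Proposition \ref{prop:progressing}. Everything else is either a definitional unfolding or a direct re-use of the counterexamples already verified in the unbounded case, so I do not anticipate any genuinely new obstacle.
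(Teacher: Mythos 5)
Your proposal is correct and follows essentially the same route as the paper: both inclusions are obtained from the definitions together with the fact that a PTS under the \lts always admits an extension of any compliant prefix to a trace with exactly $n$ ticks (the paper invokes Proposition~\ref{prop:bounded-length} where you invoke Proposition~\ref{prop:progressing}, but these are interchangeable here), and strictness is witnessed by the same two examples from Propositions~\ref{thm:deadlock-survive} and~\ref{thm:realize-recover}. One small correction: the system $\Tscr''$ of Proposition~\ref{thm:realize-recover} actually satisfies the $1$-\rel property, since the offending compliant prefix ending in $\{Time@0, C@1\}$ still extends compliantly by a single $Tick$ to $\{Time@1, C@1\}$; the failure of the \nrel property only sets in at $n\geq 2$ (where the \lts forces rule~(\ref{eq:ex1-2}) and hence the critical fact $D$), which is all the separation requires.
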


\begin{proof}
Let a PTS $\Tscr$ satisfy the \nsur property. 
We  check that $\Tscr$ satisfies the \nrel property.
Let  $\Sscr$ be a configuration that is reachable from $\Sscr_0$ on a compliant trace $\Pscr$ that uses the \lts 
and has at most $n$ instances of the $Tick$ rule. Since $\Tscr$ is a PTS, only a bounded number of instantaneous rules can be applied before a $Tick$ rule appears in a trace that uses the \lts 
(Proposition~\ref{prop:bounded-length}). Hence, the trace $\Pscr$ can be extended to a compliant trace $\Pscr'$  that contains exactly $n$ instances of the $Tick$ rule and uses the \lts 
Since $\Tscr$   satisfies the \nsur  property,  $\Pscr'$ is compliant. Consequently,   $\Tscr$ satisfies the \nrel property.

Now, let  $\Tscr$ satisfy the \nrel property. Then, the trivial trace of length 1 from $\Sscr_0$ (containing only $\Sscr_0$) can be extended to a  compliant trace $\Pscr'$  that contains exactly $n$ instances of the $Tick$ rule and uses the \lts 
Hence, $\Tscr$ satisfies the \nrealiz property.

To show that the inclusions are proper, we give examples of PTSes that satisfy one, but not the other property.
The PTS given in the proof of Proposition~\ref{thm:realize-recover} is an example of a system that satisfies the \nrealiz property, $\forall n >0$, which does not satisfy even the $1$-\sur property.
Similarly, the PTS given in the proof of Proposition~\ref{thm:deadlock-survive} satisfies the \nrel property, $\forall n >0$, but it does not even satisfy the $1$-\sur property.
\qed
\end{proof}

\section{Complexity Results for PTSes } 
\label{sec:complex}
In this section we investigate the complexity of the verification problems  defined in Section~\ref{sec:problems} for progressing timed systems.
Recall that the \rel and \recov properties for PTS coincide.


\subsection{PSPACE-Completeness of Verification Problems for PTSes}

We again point out that in our previous work~\cite{kanovich13ic,kanovich.mscs,kanovich17jcs,kanovich21jcs} we only dealt with {finite traces}. The additional challenge in addressing the complexity of the \realiz and \sur problems
in \cite{kanovich16formats} and the new verification problems  \rel and  \recov 
in this paper,  is to deal with infinite traces.

\vspace{0,5em}
Assume throughout this section the following: 
\begin{itemize}
\item $\Sigma$ -- A finite alphabet with $J$ predicate symbols and $E$ constant and function symbols;
\item $\Tscr$ -- A progressing timed MSR 
constructed over $\Sigma$;
\item $\Sscr_0$ -- An initial configuration;
\item $m$ -- The number of facts in the initial configuration $\Sscr_0$;
\item $\CS$ -- A critical configuration specification constructed over $\Sigma$;
\item $k$ -- An upper-bound on the size of facts;
\item $\Dmax$ -- An upper-bound on the numeric values of $\Sscr_0, \Tscr$ and $\CS$.
\end{itemize}

\vspace{1em}
For the \realiz problem and the \sur problem, PSPACE-completeness for PTSes was proved in \cite{kanovich16formats,kanovich16arxivformats}. 
PSPACE-hardness of the \rel problem for PTSes 
can be shown by adequately adapting our previous work~\cite{kanovich11jar,kanovich16arxivformats}.

\vspace{0,5em}
\begin{proposition}[
\rel problem for PTSes is PSPACE hard]
\label{th: realizability-hard}
\ \\
The \rel problem
for PTSes 
that use the \lts 
is PSPACE-hard.
\end{proposition}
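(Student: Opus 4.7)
The plan is to adapt the PSPACE-hardness reduction for the \realiz problem from~\cite{kanovich16formats,kanovich16arxivformats} to the \rel problem. In that reduction, a polynomial-space bounded deterministic Turing machine $M$ together with its input $w$ is encoded as a PTS $\Tscr_M$, an initial configuration $\Sscr_0^M$ and a critical configuration specification $\CS_M$, such that $M$ accepts $w$ iff $\Tscr_M$ satisfies the \realiz property w.r.t.\ $\Sscr_0^M$, $\CS_M$ and the \ltss. I would first recall the encoding: tape cells are represented by facts carrying a cell index and a symbol; the head is represented by a single fact carrying state and position. Each transition of $M$ is simulated by one instantaneous rule that rewrites the head fact together with the relevant cell fact, and every such rule creates at least one future-timestamped fact and consumes only past/present facts, so as to satisfy the PTS conditions of Definition~\ref{def:progressing}. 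The critical configuration specification $\CS_M$ marks rejecting/halting configurations as critical, while an accepting configuration is equipped with a trivial loop consuming and producing the same marker fact at an increasing timestamp, so that $M$ accepts $w$ iff there is a compliant infinite time trace from $\Sscr_0^M$ using the \ltss.

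Second, I would argue that the encoding is essentially deterministic on non-critical configurations: because $M$ is deterministic, for every reachable non-critical configuration at most one instantaneous rule applies, so under the \lts the trace from $\Sscr_0^M$ is uniquely determined. Consequently, the set of compliant traces from $\Sscr_0^M$ is linearly ordered by the prefix relation, with a unique maximal compliant run. Every compliant prefix can therefore be extended to a compliant infinite time trace iff this maximal run is itself a compliant infinite time trace, i.e., iff $\Tscr_M$ satisfies \realiz. Hence on instances produced by the reduction the \rel property and the \realiz property coincide, so $M$ accepts $w$ iff $\Tscr_M$ satisfies the \rel property w.r.t.\ $\Sscr_0^M$, $\CS_M$ and the \ltss. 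Since the encoding is computable in polynomial time in $|M|+|w|$, this gives a polynomial-time many-one reduction from a PSPACE-hard problem to the \rel problem for PTSes.

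The main obstacle is ensuring that, on the constructed PTS, at most one instantaneous rule is applicable at every reachable non-critical configuration, so that the equivalence \realiz $\Leftrightarrow$ \rel really holds on the image of the reduction. If the original construction in~\cite{kanovich16formats,kanovich16arxivformats} introduces any auxiliary nondeterminism (for bookkeeping, laying out the tape, or guessing an accepting branch), I would eliminate it by introducing additional \emph{phase} facts together with guards that serialize the choice of rule; each such phase fact is consumed and recreated in a single rule so that the balanced form is preserved, and can be given a $T{+}1$ timestamp so that the condition of creating at least one future fact continues to hold. The correctness check then reduces to verifying (i) that the added phase facts do not block any legitimate transition of $M$, (ii) that the constraints remain of the form allowed by Eq.~\ref{eq:constraints}, and (iii) that, as in the realizability proof, the encoding of an accepting halting state admits a compliant loop while every non-accepting dead end forces a critical configuration within a bounded number of steps.
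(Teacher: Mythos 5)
Your proposal is correct in outline but takes a genuinely different route from the paper. The paper never re-opens the Turing-machine encoding: it observes that, for a PTS with a non-critical initial configuration, the \realiz property holds iff $\Sscr_0$ is \emph{not} a \pon, invokes PSPACE $=$ co-PSPACE to conclude that deciding whether a configuration is a \pon\ is PSPACE-hard, and then argues that the \rel problem is hard because it ``comprises'' \pon-checking. That argument is short and reuses the hardness of \realiz as a black box, but its last step is not a literal many-one reduction. You instead build an explicit reduction: you determinize the TM encoding so that under the \lts there is a unique maximal run, and then note that on single-run instances \realiz and \rel coincide (if the unique run is compliant and infinite time, every compliant prefix extends to it; otherwise \realiz already fails, and with it \rel). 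This buys a cleaner reduction-theoretic statement and shows, as a bonus, that hardness already holds for deterministic systems; the cost is that you must actually verify or enforce determinism of the encoding (your phase-fact serialization is a reasonable device, and your observation that freshly created facts carry timestamp $T{+}1$ and hence cannot be consumed before the next $Tick$ already does most of this work). Two points you should make explicit for the argument to close: (i) reduce from a polynomial-space \emph{decider} (or clock the machine), since a deterministic machine that loops forever without halting would yield a compliant infinite time trace and make the system spuriously realizable; and (ii) note that under the \lts a PTS run can never get stuck, because the $Tick$ rule is always applicable, so the ``unique maximal run'' is always an infinite time trace and the case analysis is exhaustive.
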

\begin{proof}
The \realiz problem is an instance of the problem of checking whether a configuration  is not a \pon. 
Recall that a system satisfies the \realiz property if there exists a compliant infinite time trace $\Pscr$ from the initial configuration in which global time tends to infinity.
Since $\Tscr$ is progressing, we obtain the condition on time (time tends to infinity) from Proposition~\ref{prop:progressing}.
Indeed, a system satisfies the \realiz property  if and only if the initial configuration is not a \pon.
Since PSPACE and co-PSPACE are the same complexity class 
and the \realiz property is PSPACE-hard, the problem of determining whether a configuration is a \pon ~ is PSPACE-hard.

Since the  \rel property 
problem comprises checking whether a configuration is a \pon, 
it is PSPACE-hard.
\qed
\end{proof}

\vspace{0,5em}
Infinite traces over configurations of PTSes are infinite time traces (Proposition~\ref{prop:progressing}). The same holds for traces over 
$\delta$-representations of PTSes.
For our verification problems we, therefore, need to construct an infinite compliant trace. The following lemma establishes a criterion.

\begin{lemma}
\label{lem:lengthPSPACE}
For a PTS $\Tscr$ assume $\Sigma, \Sscr_0, m,\CS,k,\Dmax$ as described above. If there is a compliant trace (constructed using $\delta$-representations) starting with the $\delta$-representation of $\Sscr_0$ with length $L_\Sigma(m,k,\Dmax)$, then there is an \emph{infinite} compliant trace starting with the $\delta$-representation of $\Sscr_0$. 
~\cite{kanovich16formats}
\end{lemma}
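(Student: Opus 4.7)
The plan is to use a pigeonhole argument on the given compliant trace, combined with the bound on the number of distinct $\delta$-representations established in Lemma~\ref{lemma:numstates}. Let $\Pscr$ be the given compliant trace starting from $\delta_{\Sscr_0}$ with length $L_\Sigma(m,\uSize,\Dmax)$. Then $\Pscr$ visits at least $L_\Sigma(m,\uSize,\Dmax)+1$ $\delta$-representations counted with multiplicity. Since by Lemma~\ref{lemma:numstates} there are at most $L_\Sigma(m,\uSize,\Dmax)$ pairwise distinct $\delta$-representations available, at least one $\delta$-representation $\delta^{*}$ must occur at two positions $i<j$ along $\Pscr$.

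Next, I would extract the cyclic subtrace $\delta^{*} \lra \cdots \lra \delta^{*}$ between positions $i$ and $j$, and form the infinite sequence obtained by concatenating the initial prefix from $\delta_{\Sscr_0}$ to the first occurrence of $\delta^{*}$ with infinitely many copies of this loop. Because each $\delta$-representation appearing in this new sequence already appears in the original compliant trace $\Pscr$, none of them is critical, so the constructed infinite trace is compliant. This is an infinite trace over $\delta$-representations starting from $\delta_{\Sscr_0}$, which is what the statement requires.

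The only point that needs care is that the loop genuinely advances the trace, i.e.\ that $j-i\geq 1$ so that repeating it produces an infinite sequence of rule applications (and not merely a stationary one). This is immediate from $i<j$. As a side remark, since $\Tscr$ is a PTS, Proposition~\ref{prop:progressing} guarantees that this infinite trace is automatically an infinite time trace; this is consistent with the fact that the loop cannot consist solely of instantaneous rules (each instantaneous rule in a PTS strictly decreases the number of present/past facts while preserving the total count, so a pure instantaneous loop returning to $\delta^{*}$ is impossible), forcing the loop to contain at least one $Tick$ application. I expect no substantive obstacle beyond correctly invoking the counting bound from Lemma~\ref{lemma:numstates}; the argument is essentially a state-space pigeonhole combined with the bisimulation between traces and their $\delta$-representations granted by Proposition~\ref{thm:delta configurations}.
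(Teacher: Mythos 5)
Your proposal is correct and follows essentially the same route as the paper's own proof: a pigeonhole argument via Lemma~\ref{lemma:numstates} to find a repeated $\delta$-representation, followed by indefinite repetition of the resulting loop, with compliance inherited from the original trace. Your side remark that the loop must contain a $Tick$ (so the infinite trace is an infinite time trace) is a correct extra observation that the paper leaves implicit via Proposition~\ref{prop:progressing}.
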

\begin{proof}
Since there are only $L_\Sigma(m,k,\Dmax)$ different \mbox{$\delta$-representations}, a trace of length greater than $L_\Sigma(m,k,\Dmax) $ necessarily contains the same \mbox{$\delta$-representations} twice, that is, there is a loop in the trace. By repeating the $\delta$-representations appearing in the loop, we can construct an infinite trace which is necessarily compliant. 
\qed
\end{proof}

For our complexity results we use some auxiliary functions with configurations or \mbox{$\delta$-representations} as their arguments, as suitable.
For our next results involving the four properties,
assume that for any given timed MSR $\Tscr$, an initial configuration $\Sscr_0$ and a critical configuration specification $\CS$
we have two functions, $\next$ and $\critical$, which check, respectively, whether a rule in $\Tscr$ is applicable to a given $\delta$-representation and whether a $\delta$-representation is critical with respect to $\CS$. Moreover, for a given timed MSR $\Tscr$, let \mustTick\ be a function implementing the \lts 
a $\delta$-representation of
system $\Tscr$, and returns 1 when the $Tick$ must be applied and 0 when it must not be applied according to the \lts 
We assume that $\next$, $\critical$, and \mustTick\ run in Turing space bounded by a polynomial in $m,k,\log_2(\Dmax)$. Notice that for our examples this is the case, as such functions can be constructed because the system is balanced and facts are of bounded size.

Because of Lemma~\ref{lem:lengthPSPACE}, we can show that the \realiz problem is in PSPACE by searching for compliant traces of length $L_\Sigma(m,k,\Dmax)$ (stored in binary). To do so, we rely on the fact that PSPACE and NPSPACE are the same complexity class~\cite{savitch}.

\begin{proposition}[Realizability for PTSes is in PSPACE]
\label{th:PSPACE-feasibility}
\ \\
For a PTS $\Tscr$ assume $\Sigma, \Sscr_0, m,\CS,k,\Dmax$,  $\next, \critical$, and \mustTick\ as described above. 

There is an algorithm that, given an initial configuration $\Sscr_0$, decides whether $\Tscr$ satisfies the \realiz property with respect to the \ltss, 
$\CS$, and $\Sscr_0$ and the algorithm runs in a space bounded by a polynomial in $m,k$ and $log_2(\Dmax)$. 

The polynomial is in fact ~$\log_2(L_\Sigma(m,k,\Dmax))$.
~\cite{kanovich16formats}
\end{proposition}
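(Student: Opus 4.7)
The plan is to reduce the question to searching, nondeterministically, for a compliant trace of bounded length over $\delta$-representations, then invoke Savitch's theorem ($\mathrm{NPSPACE} = \mathrm{PSPACE}$) to convert the nondeterministic procedure into a deterministic one of the same asymptotic space. By Proposition~\ref{thm:delta configurations}, traces of $\Tscr$ and traces of its $\delta$-representations are in bisimulation, preserving criticality and the lazy time sampling. By Proposition~\ref{prop:progressing}, any infinite trace of a PTS is automatically an infinite time trace. Thus $\Tscr$ is realizable iff there exists an infinite compliant trace over $\delta$-representations starting from $\delta_{\Sscr_0}$ and using the \ltss, and by Lemma~\ref{lem:lengthPSPACE} this is equivalent to the existence of a compliant trace over $\delta$-representations of length $L_\Sigma(m,k,\Dmax)$ starting from $\delta_{\Sscr_0}$ that uses the \lts

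The nondeterministic algorithm maintains only (i) the current $\delta$-representation $\delta$ and (ii) a binary counter $i$ from $0$ to $L_\Sigma(m,k,\Dmax)$. Initialization sets $\delta := \delta_{\Sscr_0}$, $i := 0$, and rejects if $\critical(\delta) = 1$. At each iteration, if $i = L_\Sigma(m,k,\Dmax)$ the algorithm accepts; otherwise it queries $\mustTick(\delta)$: if $\mustTick(\delta) = 1$ it applies the $Tick$ rule to $\delta$, and if $\mustTick(\delta) = 0$ it nondeterministically selects an applicable instantaneous rule instance via $\next$ and updates $\delta$ accordingly; then it checks $\critical(\delta) = 0$ (rejecting otherwise) and increments $i$. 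Correctness is immediate from the bisimulation and from Lemma~\ref{lem:lengthPSPACE}; no other bookkeeping is needed because we never need to reconstruct the prefix.

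For the space bound, a $\delta$-representation consists of $m$ facts of size at most $k$ together with $m-1$ truncated time differences, each representable in $O(\log_2(\Dmax+2))$ bits, so $|\delta|$ is polynomial in $m, k, \log_2(\Dmax)$. The counter $i$ requires $\log_2(L_\Sigma(m,k,\Dmax))$ bits, which, by Lemma~\ref{lemma:numstates}, is $O\!\left((m-1)\log_2(\Dmax+2) + m\log_2 J + mk\log_2(E+2mk)\right)$, i.e.\ polynomial in $m, k, \log_2(\Dmax)$. The helper predicates $\next$, $\critical$, and $\mustTick$ run in polynomial space by assumption. Invoking Savitch's theorem deterministically simulates this NPSPACE procedure in $\mathrm{PSPACE}$, with the space bound dominated by $\log_2(L_\Sigma(m,k,\Dmax))$ as claimed. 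The main subtlety is ensuring that nondeterministic \emph{choice} of an applicable rule instance is realized in the stated space: one must enumerate instantiations of the rule's variables by terms whose size is bounded by $k$, which is possible since the system is balanced, facts are size-bounded, and $\next$ handles applicability in polynomial space — so each guess is itself a polynomial-size object and the whole loop stays within the claimed bound.
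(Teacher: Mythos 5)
Your proposal is correct and follows exactly the route the paper intends: reduce to $\delta$-representations via Proposition~\ref{thm:delta configurations}, use Lemma~\ref{lem:lengthPSPACE} to replace the existence of a compliant infinite time trace by the existence of a compliant trace of length $L_\Sigma(m,k,\Dmax)$, guess such a trace nondeterministically while storing only the current $\delta$-representation and a binary counter, and close with Savitch's theorem. The space accounting and the remark on realizing rule-instance guesses within the bound match the algorithm the paper spells out for the analogous Proposition~\ref{th:PSPACE-newprop}.
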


\vspace{0,5em}
We now consider the \sur problem. Recall that in order to prove that $\Tscr$ satisfies the \sur property  with respect to the \ltss, 
a critical configuration specification $\CS$ and an initial configuration $\Sscr_0$, we must show that $\Tscr$ satisfies the \realiz property w.r.t. the \ltss, 
$\CS$ and $\Sscr_0$,
and that all infinite traces $\Pscr$ that use the \lts 
and start with $\Sscr_0$ are compliant with respect to $\CS$, and that the global time in $\Pscr$ tends to infinity (Definition~\ref{def:survivabilty}).

Checking that a system satisfies the \realiz property is PSPACE-complete as we have just shown. Moreover, the property that time tends to infinity follows from Proposition~\ref{prop:progressing} for progressing timed MSR. It remains to show that all infinite traces using the \lts 
are compliant, which reduces to checking that \emph{no critical configuration is reachable} from the initial configuration $\Sscr_0$ by a trace that uses the \lts 
This property can be decided in PSPACE by relying on the fact that PSPACE, NPSPACE and co-PSPACE are all the same complexity class~\cite{savitch}. Therefore, the \sur problem is also in PSPACE. 

\vspace{1em}
\begin{proposition}[Survivability for PTSes is in PSPACE]
\label{th:PSPACE-survivability}
\ \\
For a PTS $\Tscr$ assume $\Sigma, \Sscr_0, m,\CS,k,\Dmax$,  $\next, \critical$, and \mustTick\ as described above. 

There is an algorithm that decides whether $\Tscr$ satisfies the the \sur property with respect to the \ltss, 
$\CS$ and $\Sscr_0$ which runs in space bounded by a polynomial in $m,k$ and $log_2(\Dmax)$. 
~\cite{kanovich16formats}
\end{proposition}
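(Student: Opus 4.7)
The plan is to decompose survivability into two sub-problems and show each is in PSPACE, using closure of PSPACE under intersection (and under complement, by Savitch's theorem which gives PSPACE $=$ NPSPACE $=$ co-PSPACE). By Definition~\ref{def:survivabilty}, $\Tscr$ satisfies survivability w.r.t. $\Sscr_0$, $\CS$ and the \lts{} iff (i) $\Tscr$ is realizable w.r.t. the same parameters, and (ii) every infinite time trace from $\Sscr_0$ using the \lts{} is compliant. For PTSes, by Proposition~\ref{prop:progressing}, every infinite trace is automatically an infinite time trace; hence (ii) is equivalent to saying that no critical configuration is reachable from $\Sscr_0$ by a trace using the \ltss. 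So the algorithm first invokes the PSPACE decision procedure for realizability from Proposition~\ref{th:PSPACE-feasibility}, and then runs a separate PSPACE procedure for (ii) and returns the conjunction.

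For part (ii), I would work entirely on $\delta$-representations, justified by Proposition~\ref{thm:delta configurations}: critical configurations, applicability of rules, and the \lts{} are all preserved under the equivalence $\equiv_{\Dmax}$. The complement of (ii) is the reachability question ``does there exist a trace from $\delta_{\Sscr_0}$ using the \lts{} that reaches a critical $\delta$-representation?''. This is solved by a standard nondeterministic search: store the current $\delta$-representation (of size polynomial in $m, k, \log_2(\Dmax)$ by Definition~\ref{def: delta-representation}), together with a binary counter, and at each step guess the next rule instance, using \mustTick{} to enforce the \lts{} priority, $\next$ to check applicability and produce the successor $\delta$-representation, and $\critical$ to test whether the current $\delta$-representation is critical. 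By Lemma~\ref{lem:lengthPSPACE} (and the loop-removal argument used in Proposition~\ref{th:b-n-bounded-survivability}), if any critical $\delta$-representation is reachable at all then one is reachable within at most $L_\Sigma(m,k,\Dmax)$ steps, so a counter of $\log_2(L_\Sigma(m,k,\Dmax))$ bits — polynomial in $m, k, \log_2(\Dmax)$ by Lemma~\ref{lemma:numstates} — suffices to bound the search and guarantee termination. This puts the reachability question in NPSPACE, hence in PSPACE by Savitch's theorem, and its complement (ii) is then also in PSPACE.

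Combining the two PSPACE procedures sequentially on the same input yields a PSPACE algorithm for survivability, with the same polynomial bound $\log_2(L_\Sigma(m,k,\Dmax))$ as for realizability. I do not anticipate a genuine obstacle: the substantive work has already been done in earlier results (the $\delta$-representation bisimulation, the counting lemma for $L_\Sigma$, the bounded-length criterion in Lemma~\ref{lem:lengthPSPACE}, and the PSPACE algorithm for realizability). The only care needed is to ensure that the reachability search correctly enforces the \lts{} via \mustTick{} at every step so that the witnessed trace is a legitimate lts-trace, and to correctly tie together the two sub-procedures within polynomial space — both routine.
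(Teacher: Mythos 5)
Your proposal is correct and follows essentially the same route as the paper: decompose survivability into realizability (already in PSPACE by Proposition~\ref{th:PSPACE-feasibility}) plus the condition that all infinite time lts-traces are compliant, note that time tending to infinity is automatic for PTSes by Proposition~\ref{prop:progressing}, reduce the second condition to non-reachability of a critical configuration along an lts-trace, and settle that in PSPACE via the $\delta$-representation bound and PSPACE $=$ NPSPACE $=$ co-PSPACE. The paper's own argument is exactly this sketch; you have merely filled in the nondeterministic search over $\delta$-representations in more detail.
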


We now investigate the complexity of the \rel problem for PTSes, that is for the problem of deciding whether a given PTS satisfies the \rel property.
This problem coincides with the \recov problem for PTSes. 
Recall that a configuration $\Sscr$ is a \pon~ 
iff ~there is no compliant infinite time trace
from $\Sscr$ that uses the \lts 

\vspace{1em}
\begin{proposition}[\Newprop ~for PTSes is in PSPACE] 
\label{th:PSPACE-newprop}
\ \\
For a PTS $\Tscr$ assume $\Sigma, \Sscr_0, m,\CS,k,\Dmax$,  $\next, \critical$, and \mustTick\ as described above. 

There is an algorithm that, given an initial configuration $\Sscr_0$, decides whether $\Tscr$ satisfies the \rel property with respect to the \ltss,
$\CS$ and $\Sscr_0$ and the algorithm runs in space bounded by a polynomial in $m,k$ and $log_2(\Dmax)$. 
\end{proposition}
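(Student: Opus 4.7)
The plan is to exploit the coincidence of reliability and recoverability for PTSes (Proposition~\ref{thm:deadlock-recover-PTS}): a PTS satisfies the \rel property with respect to $\Sscr_0$, $\CS$ and the \lts if and only if (a) it satisfies the \realiz property and (b) no \pon is reachable from $\Sscr_0$ along a compliant trace using the \ltss. Since $\mathrm{PSPACE} = \mathrm{NPSPACE} = \mathrm{co\text{-}PSPACE}$ by Savitch's theorem, it suffices to give an NPSPACE algorithm for the complement, namely for detecting a reachable \pon ~(or a failure of \realiz, which is already in PSPACE by Proposition~\ref{th:PSPACE-feasibility}).

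Following the approach used for the other properties, I would work entirely over $\delta$-representations (Proposition~\ref{thm:delta configurations}), keeping in memory only the current $\delta$-representation together with a step counter whose values are bounded by $L_\Sigma(m,k,\Dmax)$. By Lemma~\ref{lemma:numstates}, this counter and the $\delta$-representation both fit in space polynomial in $m,k,\log_2(\Dmax)$. The outer nondeterministic procedure would start at $\delta_{\Sscr_0}$ and repeatedly guess the next rule to apply, using $\next$ to test applicability, $\critical$ to abort if the current $\delta$-representation is critical, and $\mustTick$ to enforce the \lts on $Tick$ transitions. At a nondeterministically chosen moment it halts with some current $\delta_\Sscr$ and runs the \pon~test on it: check that $\delta_\Sscr$ is non-critical, and check that \emph{no} compliant infinite time trace from $\delta_\Sscr$ using the \lts exists — this last condition is precisely the complement of the \realiz problem started at $\delta_\Sscr$, which lies in PSPACE by Proposition~\ref{th:PSPACE-feasibility} together with the PSPACE~=~co-PSPACE identity.

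Bounding the outer guessed prefix is done exactly as in Propositions~\ref{th:b-n-bounded-real} and~\ref{th:b-n-bounded-survivability}: if a \pon~is reachable from $\delta_{\Sscr_0}$ along some compliant trace using the \ltss, then by excising any repeated $\delta$-representation we obtain such a trace of length at most $L_\Sigma(m,k,\Dmax)$, whose compliance and lazy-sampling discipline are preserved by the shortcut. Thus the counter never needs to exceed $L_\Sigma(m,k,\Dmax)$, whose binary encoding is polynomial in $m,k,\log_2(\Dmax)$.

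The only delicate point I anticipate is the nesting of the two polynomial-space computations: while the outer walk is in progress, the inner \pon~subroutine must be invoked without accumulating extra space. This is handled in the standard way — PSPACE is closed under polynomial-space Turing reductions, so the outer procedure can pause, save its current $\delta_\Sscr$ and step counter, call the inner PSPACE routine for the complement of realizability from $\delta_\Sscr$ on a fresh polynomial-size tape, and then either reject (if a \pon~is confirmed) or resume. Combining these pieces gives an NPSPACE algorithm for the complement of reliability, which by Savitch's theorem yields a PSPACE algorithm, with the total workspace polynomial in $m, k$ and $\log_2(\Dmax)$ as required.
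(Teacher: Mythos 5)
Your proposal is correct and follows essentially the same route as the paper's proof: guess a compliant lazy-time-sampling trace over $\delta$-representations of length at most $L_\Sigma(m,k,\Dmax)$, test the reached $\delta$-representation for being a \pon\ by running the complement of the realizability algorithm (using $\mathrm{PSPACE}=\mathrm{co\text{-}PSPACE}$), and conclude via $\mathrm{PSPACE}=\mathrm{NPSPACE}$. The only cosmetic differences are that you phrase the outer procedure as an NPSPACE search for a reachable \pon\ (the complement) whereas the paper runs the \pon\ test at every step of the walk and accepts when none is found, and that you make explicit the loop-excision bound and the space reuse for the nested subroutine, which the paper leaves implicit.
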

\begin{proof}
We first propose an algorithm that, for a fixed system $\Tscr$ and a fixed critical configuration specification $\CS$,
checks whether some $\delta$-representation corresponds to a configuration that  is a \pon ~w.r.t. $\Tscr$ and $\CS$.

Following PSPACE-completeness of the \realiz problem~\cite{kanovich16formats,kanovich16arxivformats}, let $REAL$ denote the \realiz problem PSPACE algorithm over $\delta$-representations.
(For details, see the proof of \cite[Theorem 1]{kanovich16arxivformats}.) 
When given $\delta$-representation 
$\Wscr$, as input, the algorithm $REAL(\Wscr)$ returns ACCEPT if and only if there is an infinite time trace of $\Tscr$ that starts with $\Wscr$, uses the \ltss, 
and is compliant w.r.t. $\CS$,
and it
runs in polynomial space w.r.t. the given parameters, $m,k$ and $log_2(\Dmax)$.
Since PSPACE 
and co-PSPACE are the same complexity class~\cite{savitch}, we 
switch the ACCEPT and FAIL and obtain a deterministic algorithm $NOTREAL$ that runs in polynomial space w.r.t. $m,k$ and $log_2(\Dmax)$. The algorithm $NOTREAL(\Wscr)$ accepts if and only if 
there is no compliant infinite time trace from $\Wscr$ that uses the \lts 

Then, using $NOTREAL$ we construct the algorithm $PON$ that checks whether the given $\delta$-representation $\Wscr$ corresponds to a \pon. Let $PON$ be the following algorithm, which takes a $\delta$-representation $\Wscr$ as input:
\begin{quote}
\begin{enumerate}
\item If \ $\critical(W) = 1$, \ie, if \ $W$  represents a critical configuration, then return FAIL, otherwise continue;
\item If \ $NOTREAL(W) = 1$, 
\ie, if \ $W$  represents a \pon, 
then return ACCEPT, otherwise return FAIL.
\end{enumerate}
\end{quote}

When given $\delta$-representation 
$\Wscr$, as input, the algorithm $PON(\Wscr)$ accepts if and only if $\Wscr$ is a $\delta$-representation of a \pon ~w.r.t. $\Tscr$ and $\CS$. 
Since $\critical $ and $NOTREAL$ run in the polynomial space w.r.t. $m,k$ and $log_2(\Dmax)$,
$PON$ is a deterministic algorithm that also runs in such a polynomial space.

\vspace{2pt}
Next, we check that for any configuration $\Sscr$ reachable from $\Sscr_0$ using the \ltss, 
there is a compliant infinite time trace from $\Sscr$, \ie, 
that $\Sscr$ is not a \pon.
The following algorithm accepts when no \pon ~is reachable from $\Sscr_0$ in $\Tscr$ on a compliant trace that uses the \ltss, 
and fails otherwise.
It begins with $i=0$ and $W_0$ set to be the $\delta$-representation of $\Sscr_0$, and iterates the following sequence of operations:
\begin{quote}
\begin{enumerate}
\item If \ $W_i$  represents a critical configuration, \ie, if $\critical(W_i) = 1$, then return FAIL, otherwise continue;
\item If \ $W_i$ represents a \pon, \ie, if $PON(W_i) = 1$, then return FAIL, otherwise continue;
\item If \ $i > L_\Sigma(m,k,\Dmax) $, then ACCEPT;
else continue;
\item If \mustTick $(W_i)=1$, then replace $W_i$ by $W_{i+1}$ obtained from $W_i$ by applying the $Tick$ rule; Otherwise  non-deterministically guess an instantaneous rule, $r$, from $\Tscr$ applicable to $W_i$,
\ie, such a rule $r$ that $\next(r,W_i) = 1$. If so,
replace $W_i$ with the $\delta$-representation $W_{i+1}$
resulting from applying the rule $r$ to the $\delta$-representation
$W_i$. 
Otherwise, continue;
\item Set \ $i = i + 1$.
\end{enumerate}
\end{quote}

Since $PON$, $\next, \critical$ and \mustTick\ 
run in Turing space bounded by a polynomial in $m$, $k$ and $\log_2(\Dmax)$, 
it follows that the above algorithm runs in deterministic polynomial space.
\qed
\end{proof}


\begin{theorem}
Let $\Sigma$ be a finite alphabet, $\Tscr$ a PTS, 
$\Sscr_0$ an initial configuration, $m$ the number of facts in $\Sscr_0$, $\CS$ a critical configuration specification, $k$ an upper-bound on the size of facts, and $\Dmax$ an upper-bound on the numeric values in $\Sscr_0, \Tscr$ and $\CS$.
Let the functions $\next, \critical$, and \mustTick\ 
run in Turing space bounded by a polynomial in $m,k,\log_2(\Dmax)$ and return 1, respectively, when a rule in $\Tscr$ is applicable to a given $\delta$-representation, when a $\delta$-representation is critical with respect to $\CS$, and when a $Tick$ rule should be applied to the given $\delta$-representation using the \lts 

The \rel problem for PTSes that use the \lts 
is PSPACE-complete when assuming a bound on the size of facts.
\end{theorem}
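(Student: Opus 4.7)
The plan is to obtain the theorem as an immediate consequence of the two matching bounds already established in the excerpt. For PSPACE membership, I would invoke Proposition~\ref{th:PSPACE-newprop}, which already exhibits a deterministic algorithm running in space polynomial in $m$, $k$, and $\log_2(\Dmax)$. The essential ingredients there are: work on $\delta$-representations rather than concrete configurations, justified by the bisimulation in Proposition~\ref{thm:delta configurations}; use the count $L_\Sigma(m,k,\Dmax)$ from Lemma~\ref{lemma:numstates} together with Lemma~\ref{lem:lengthPSPACE} to bound how long a candidate compliant trace must be explored before one can conclude that an infinite compliant extension exists via loop repetition; and exploit Savitch's theorem (PSPACE $=$ NPSPACE $=$ coPSPACE) to turn the nondeterministic search for a witnessing infinite time trace into a deterministic test, and then to complement it into the subroutine $PON$ that recognizes points-of-no-return.

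For PSPACE-hardness, I would appeal to Proposition~\ref{th: realizability-hard}. That proposition argues that deciding whether the initial configuration is a point-of-no-return is PSPACE-hard (since it is the complement of realizability, and PSPACE is closed under complement). Because the reliability problem for a PTS entails, in particular, that the initial configuration lies on a compliant trace to itself and is therefore not a point-of-no-return, any algorithm for reliability immediately decides the initial-PON question on the same input. Hence the same hardness reduction transfers without modification; one just notes that on the instances produced by the reduction, reliability and non-PON of $\Sscr_0$ coincide, because Corollary~\ref{thm:properties-PTS} gives reliability~$\Leftrightarrow$~recoverability for PTSes, and recoverability on such instances reduces to the initial configuration being non-critical and non-PON.

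The main care point, rather than a genuine obstacle, will be to verify that the hardness reduction of Proposition~\ref{th: realizability-hard} yields a PTS that also satisfies the syntactic bounded-fact-size hypothesis of the theorem and whose $\Dmax$ remains polynomially bounded in the size of the original instance, so that the resulting PSPACE lower bound is in the same parameters as the upper bound of Proposition~\ref{th:PSPACE-newprop}. Combining the two bounds yields PSPACE-completeness, and I would close the proof with a one-line citation of both propositions.
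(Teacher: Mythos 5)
Your proposal matches the paper's own proof, which likewise obtains the theorem as an immediate corollary of Proposition~\ref{th: realizability-hard} (PSPACE-hardness) and Proposition~\ref{th:PSPACE-newprop} (PSPACE membership); the additional commentary you give on why the hardness reduction transfers and on the parameter bounds is a reasonable elaboration but not a different route. No gaps.
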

\begin{proof}
The result follows directly from Propositions 
\ref{th: realizability-hard}
and \ref{th:PSPACE-newprop}.
\qed
\end{proof}

\begin{remark}
The PSPACE-completeness of the \realiz and \sur problems
obtained in \cite{kanovich16formats} 
follows directly from Propositions ~\ref{th:PSPACE-feasibility}
and ~\ref{th:PSPACE-survivability}, and the lower bound proof from \cite[Appendix C]{kanovich16arxivformats}. 
\end{remark}

\subsection{Complexity Results for Time-Bounded Problems }

We now consider the $n$-time versions of the problems defined in Section \ref{sec:problems}.

\vspace{3pt}
The following lemma establishes an upper-bound on the length of traces with exactly $n$ instances of $Tick$ rules for PTSes. This bound is immediately obtained from Proposition~\ref{prop:bounded-length}.
\begin{lemma}
\label{lem:polysize}
Let $n$ be fixed.
 Let $\Tscr$ be a PTS,
$\Sscr_0$ an initial configuration and $m$ the number of facts in $\Sscr_0$. 
For all traces $\Pscr$ of $\Tscr$ that start with $\Sscr_0$ and contain exactly $n$ instances of the Tick rule, the length of $\Pscr$ is bounded by ~$(n+1)*m+n$.
~\cite{kanovich16formats}
\end{lemma}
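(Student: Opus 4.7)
The plan is to decompose the trace $\Pscr$ using the $n$ occurrences of the $Tick$ rule as separators and then apply Proposition~\ref{prop:bounded-length} (and the argument underlying it) to each resulting segment of instantaneous rules. The $n$ $Tick$ rules partition $\Pscr$ into $n+1$ contiguous blocks of instantaneous rules: one block preceding the first $Tick$, one block between each consecutive pair of $Tick$s, and one block following the last $Tick$. The bound on the length of $\Pscr$ will then be the sum of the number of $Tick$ rules ($n$) plus the total number of instantaneous rules in these $n+1$ blocks.

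First, I would recall the key observation from the proof of Proposition~\ref{prop:bounded-length}: since $\Tscr$ is a PTS, every instantaneous rule consumes at least one past/present fact and produces at least one future fact, while leaving the total number of facts in the configuration unchanged (balancedness). Consequently, within any block of instantaneous rules applied between successive $Tick$s, each step strictly decreases the number of past/present facts among the $m-1$ non-$Time$ facts in the current configuration. This already gives the standard bound of at most $m-1$ instantaneous rules per ``internal'' block, and the same counting argument applies to the initial block (before any $Tick$) and to the final block (after the last $Tick$), since in every case the global time remains fixed and each instantaneous rule decrements the count of past/present non-$Time$ facts.

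Combining these observations, the total number of instantaneous rules in $\Pscr$ is at most $(n+1)\cdot(m-1) \leq (n+1)\cdot m$, and adding the $n$ $Tick$ rules yields a total length bounded by $(n+1)\cdot m + n$, as claimed. No additional machinery is needed; the only mild care required is in treating the first and last blocks on the same footing as the internal ones, which is immediate because the counting argument from Proposition~\ref{prop:bounded-length} depends only on the fact that the global time is constant throughout such a block. I would expect no significant obstacle here, since the lemma is essentially a quantitative reformulation of Proposition~\ref{prop:bounded-length} obtained by summing across blocks, and the bound stated is deliberately slightly loose (using $m$ rather than $m-1$) so that the accounting does not need to be tight.
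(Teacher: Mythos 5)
Your proposal is correct and follows essentially the same route as the paper's own proof: the $n$ Ticks split $\Pscr$ into $n+1$ blocks of instantaneous rules, each bounded by $m-1$ via the counting argument of Proposition~\ref{prop:bounded-length}, giving at most $(n+1)(m-1)+n \leq (n+1)\cdot m + n$ rules. Your explicit remark that the initial and final blocks (not literally covered by the statement of Proposition~\ref{prop:bounded-length}, which assumes a Tick at both ends) satisfy the same bound because the argument only uses that global time is constant within a block is a small but worthwhile point of care that the paper's two-line proof leaves implicit.
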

\begin{proof}
Assume that there are exactly $n$ time ticks in $\Pscr$. 
As per Proposition \ref{prop:bounded-length} there are at most $m-1$ instantaneous rules between any $Tick$ and the next $Tick$ rule in $\Pscr$. Consequently, in total there are at most $(n+1)\cdot m + n$ rules in $\Pscr$.
\qed
\end{proof}

\begin{remark}
For our bounded versions of the \realiz, \rel and \sur problems
we use auxiliary functions with configurations as their arguments.
For simplicity, we still use the same notation, $\next, \critical$, and $\mustTick$ , as types of arguments are clear from the context.
We will assume that $\next$, $\critical$, and \mustTick\ run in time bounded by a polynomial in $m$ and $k$,
 and return 1, respectively, when a rule in $\Tscr$ is applicable to a given configuration, when a configuration is critical with respect to $\CS$, and when a $Tick$ rule should be applied to the given configuration using the \lts 
Notice that for our examples one can construct such functions.
\end{remark}

For NP-hardness, we encode the NP-hard problem 3-SAT as an the \nrealiz problem similar to our previous work~\cite{kanovich13esorics,kanovich10fcsPriMod}. 

\begin{proposition}
\label{th: n-realizability-hard}
The \nrealiz problem for PTSes
w.r.t. the \lts 
is NP-hard.
~\cite{kanovich16formats,kanovich16arxivformats}
\end{proposition}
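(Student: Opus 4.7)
The plan is to reduce 3-SAT in polynomial time to the $n$-realizability problem for PTSes under the lazy time sampling, following the template of our earlier encodings in \cite{kanovich13esorics,kanovich10fcsPriMod} but taking care that the constructed system satisfies all the PTS conditions (balanced, at least one created future fact per rule, consumption of present/past facts only) from Definition~\ref{def:progressing}.

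Given a 3-CNF formula $\varphi$ over Boolean variables $x_1,\dots,x_k$ with clauses $C_1,\dots,C_m$, I would construct a PTS $\Tscr_\varphi$, an initial configuration $\Sscr_0$, a critical configuration specification $\CS$, and a time bound $n$ polynomial in $|\varphi|$, in three conceptual phases. In the \emph{assignment phase}, the initial configuration contains a token $\mathit{Var}(i)@0$ for each variable $i$, and for each $i$ the PTS has two instantaneous rules that consume $\mathit{Var}(i)@T'$ (with the required $T'\le T$ guard) and produce either $\mathit{Assign}(i,\mathit{true})@(T+1)$ or $\mathit{Assign}(i,\mathit{false})@(T+1)$, creating exactly one future fact and preserving the fact count so the rules are balanced. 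In the \emph{propagation phase}, one $Tick$ per assignment is consumed under the lazy time sampling, after which a dedicated \emph{check} rule is enabled: it produces a marker $\mathit{Checked}@(T+1)$, having inspected the assignment facts. In the \emph{verification phase}, for each clause $C_j = \ell_{j,1} \lor \ell_{j,2} \lor \ell_{j,3}$ I put into $\CS$ a critical configuration pattern that fires exactly when all three literals of $C_j$ are falsified by the chosen $\mathit{Assign}$ facts together with the $\mathit{Checked}$ fact (so that $\CS$ only matters after the assignment has been fixed). Finally, I add an idle \emph{loop rule} that consumes some token $\mathit{Alive}@T'$ and produces $\mathit{Alive}@(T+1)$, so that after the assignment is fixed the system can extend any good partial trace up to time $n$ in a progressing way.

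The correctness argument would then be a straightforward pair of implications. If $\varphi$ is satisfiable, pick a satisfying assignment, let the non-deterministic assignment rules realize it during the first $k$ ticks, perform the check, and then loop with the $\mathit{Alive}$ rule until time $n$; every configuration visited avoids the critical patterns precisely because all clauses are satisfied, so we have a compliant trace with exactly $n$ occurrences of $Tick$, witnessing $n$-realizability. Conversely, any compliant trace with $n$ ticks must fix values for all $x_i$ in the assignment phase (once a variable is assigned it cannot be reassigned, because the $\mathit{Var}(i)$ fact is consumed), and compliance after the check forces none of the clause patterns in $\CS$ to match, which means the chosen assignment satisfies every $C_j$. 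Choosing $n$ of the form $k+\text{const}$ and verifying the PTS conditions (balanced form; at least one created future fact; guards $T'\le T$; $\Dmax$ a small constant) gives a polynomial-time many-one reduction from 3-SAT to the $n$-realizability problem.

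The main obstacle, and the reason the construction needs some care, is the interaction between the PTS discipline and the lazy time sampling: because every instantaneous rule must create at least one fact strictly in the future and must consume past/present facts, I cannot model the ``assignment $+$ verification'' loop naively; I must ensure that whenever no assignment action is pending the lazy sampling forces a $Tick$ and, symmetrically, that the check and loop rules remain enabled at the appropriate moments so that the adversarial scheduler implicit in ``using the lazy time sampling'' cannot force a $Tick$ that prevents the encoding from completing in exactly $n$ steps. The cleanest way to handle this is to introduce a small ``phase counter'' fact that is rewritten at each stage, guaranteeing at every moment exactly one instantaneous rule is applicable until the assignment and check are complete, after which only the $\mathit{Alive}$ loop rule is applicable and time advances by one unit per step until the bound $n$ is reached.
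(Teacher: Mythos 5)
Your reduction is correct, but it takes a genuinely different route from the paper's. The paper's proof keeps the critical configuration specification \emph{empty}: truth values are generated by rules consuming the $V_i$ tokens, the clause check is performed by rewrite rules that erase one satisfied disjunct per time unit, and satisfiability is shown equivalent to \emph{reaching} a configuration containing $I_{\emptyset}$ within $2n$ ticks. The step from that reachability statement to $2n$-realizability is left implicit, and it actually needs care: with $\CS=\emptyset$ every trace is compliant and, under the lazy time sampling, the $Tick$ rule is always applicable once no instantaneous rule is, so a trace that gets ``stuck'' on an unsatisfied disjunct can still tick $2n$ times --- one must additionally penalise stale $I_C$ facts in the critical specification for the equivalence to hold literally. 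Your construction sidesteps this entirely by pushing the clause checking into $\CS$: an assignment falsifying some clause makes the configuration critical before the first tick (lazy sampling forces all variables to be assigned before $Tick$ can fire), so unsatisfiability destroys compliance of \emph{every} trace, while a satisfying assignment plus your $\mathit{Alive}$ loop yields a compliant trace with exactly $n$ ticks; this matches Definition~\ref{def:n-feasibility} directly. Two details to tighten in your write-up: the $\mathit{Checked}$ marker is either redundant (the three falsifying $\mathit{Assign}$ facts can only coexist once the assignment phase is over) or, if you keep it inside the critical patterns, you must argue that lazy time sampling \emph{forces} its creation, since otherwise a trace could remain compliant simply by never producing it; and every instantaneous rule, including the check rule, must consume exactly one present-or-past fact and create exactly one future fact to satisfy the balanced and progressing conditions of Definition~\ref{def:progressing}. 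With those fixed, $n$ of the order of the number of variables and $\Dmax=1$ give the claimed polynomial many-one reduction from 3-SAT.
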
%
\begin{proof}
Assume we are given a formula \ $ \widetilde{C}_n = (l_{11} \lor l_{12} \lor l_{13}) \land \cdots \land (l_{n1} \lor l_{n2} \lor l_{n3})$
in 3-CNF, \ie,~in a conjunctive normal form with exactly 3 literals in each clause. Recall that each literal, $l_{ij}$, is either an atomic formula, $v_k$ or 
its negation, $\neg v_k$.

Let $p$ be the number of variables and $n$ the number of clauses in $F$. 

We construct an initial configuration $\Sscr_0$ and a progressing timed MSR $\Tscr$ that uses the \lts 
to check whether $F$ is satisfiable or not.

We set  ~$\CS  = \emptyset$~ and the initial configuration as $$\Sscr_0 = \{ \ Time@0,\,V_1@0,\ldots,\,V_p@0,\,I_{\widetilde{C}_n})@0 \ \}\ .$$
 Here, and henceforth, the fact $I_C$ represents a conjunctive
 normal form~$C$.

With $V_1$,\dots,$V_n$, we generate,
 non-deterministically, the truth
 values to be assigned to  $x_1$,\dots,$x_n$:  
\begin{equation}
\label{eq-vars}
\begin{array}{@{}l}
 Time@T,\ \red{V_i@T_i} \, \mid\, \{~T_i \leq T~\} \ \lra\
 Time@T,\ \blue{A_{i,1}@(T+1)}
\\ 
 Time@T, \ \red{V_i@T_i} \, \mid\, \{~T_i \leq T~\} \ \lra\
 Time@T,\ \blue{A_{i,0}@(T+1)}
\end{array}
\end{equation}%
 where $A_{i,1}$ stands for: {\em ``1~is assigned to~$x_i$, ''}
 and $A_{i,0}$ means: {\em ``0~is assigned to~$x_i$. ''}
Intuitively, these rules construct an interpretation for the variables in $F$.

 To cut our presentation, we introduce the following abbreviation
\begin{equation}%
    x_i^{\varepsilon} :=
\left\{\begin{array}{rl}{}%
        x_i,          &   \quad \mbox{for\ \ $\varepsilon = 1$}.
\\{}%
        \neg x_i,    &   \quad \mbox{for\ \ $\varepsilon = 0$}.
\end{array}\right.
                                         \label{eq-epsilon}
\end{equation}%
For each disjunct \mbox{$(x_{k_1}^{\varepsilon_1}\lor x_{k_2}^{\varepsilon_2}
 \lor x_{k_3}^{\varepsilon_3})$},
 let  $A_{k_1,\varepsilon_1}$,  $A_{k_2,\varepsilon_2}$,
  $A_{k_3,\varepsilon_3}$ be the facts that makes the disjunct ``true''.
Notice that there are only three such facts for each disjunct.

\noindent
 Now we ``compute'' the value of the CNF~$\widetilde{C}_n$
 starting from the left-most disjuncts by repeatedly erasing
 the correct disjuncts, replacing\
\mbox{$I_{(x_{k_1}^{\varepsilon_1}
 \lor x_{k_2}^{\varepsilon_2}
 \lor x_{k_3}^{\varepsilon_3})\land C}$}
 with the shorter~$I_{C}$,
 as follows
\begin{equation}
\label{eq-compute-true}
\begin{array}{@{}l}
 Time@T, \,\red{A_{k_1,\varepsilon_1}@T_1},\
 \red{I_{(x_{k_1}^{\varepsilon_1}
 \lor x_{k_2}^{\varepsilon_2}
 \lor x_{k_3}^{\varepsilon_3})\land C}@T_2 }
\, \mid \,\{\,T_1 \leq T,\ T_2 \leq T \,\} 
\\       \hspace*{\fill} \lra\
 Time@T, \,\blue{A_{k_1,\varepsilon_1}@T_1},\ \blue{I_{C}@(T+1)}
\\
 Time@T, \,\red{A_{k_2,\varepsilon_2}@T_1},\
 \red{I_{(x_{k_1}^{\varepsilon_1}
 \lor x_{k_2}^{\varepsilon_2}
 \lor x_{k_3}^{\varepsilon_3})\land C}@T_2} 
\, \mid \,\{\,T_1 \leq T,\ T_2 \leq T \,\} 
\\       \hspace*{\fill} \lra\
 Time@T, \,\blue{A_{k_2,\varepsilon_2}@T_1},\ \blue{I_{C}@(T+1)}
\\
 Time@T, \,\red{A_{k_3,\varepsilon_3}@T_1},\
 \red{I_{(x_{k_1}^{\varepsilon_1}
 \lor x_{k_2}^{\varepsilon_2}
 \lor x_{k_3}^{\varepsilon_3})\land C}@T_2} 
\, \mid \,\{\,T_1 \leq T,\ T_2 \leq T \,\} 
\\       \hspace*{\fill} \lra\
 Time@T, \,\blue{A_{k_3,\varepsilon_3}@T_1},\ \blue{I_{C}@(T+1)}
\end{array}
\end{equation}%

By inspection, the constructed $\Tscr$ is a progressing timed MSR. 

The CNF~$\widetilde{C}_n$ is satisfiable iff there is a 
 trace that in $2n$~time steps leads from~$S_0$
 into a configuration, which include \mbox{$I_{\emptyset}@T$}. Namely, 
 we spend $n$ time steps, see~\eqref{eq-vars},
 to generate the truth values,
 and the next $n$ time steps, see~\eqref{eq-compute-true},
 to get the correct values of all $n$ disjuncts.


Hence, the $2n$-\realiz property problem is NP-hard.
\qed
\end{proof}

\vspace{0.5em}
For the \nsur  problem we obtain a new lower bound complexity result.

\begin{proposition}
\label{th: n-survivability-hard}
The \nsur problem for PTSes w.r.t. the \lts 
is both NP-hard and coNP-hard.
\end{proposition}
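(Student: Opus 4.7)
The plan is to prove NP-hardness and coNP-hardness separately, in both cases by adapting the 3-SAT reduction used in Proposition~\ref{th: n-realizability-hard}. Recall that from a 3-CNF formula $\widetilde{C}_n$ with $p$ variables and $n$ clauses, that reduction builds a PTS $\Tscr$ and an initial configuration $\Sscr_0$ in which the non-deterministic rules (\ref{eq-vars}) generate a truth assignment and the rules (\ref{eq-compute-true}) peel off one clause at a time; the distinguished fact $I_\emptyset$ appears exactly along those branches whose chosen assignment satisfies $\widetilde{C}_n$.

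For NP-hardness, I would reuse $(\Tscr,\Sscr_0,\CS)$ from Proposition~\ref{th: n-realizability-hard} verbatim, with $\CS = \emptyset$. Because no configuration is then critical, every trace is vacuously compliant, so the universal conjunct of Definition~\ref{def:n-survivabilty} is automatically satisfied. Consequently $\Tscr$ satisfies $2n$-\sur iff it satisfies $2n$-\realiz, and NP-hardness transfers directly from Proposition~\ref{th: n-realizability-hard}.

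For coNP-hardness, the plan is to reduce from 3-UNSAT while keeping $\Tscr$ and $\Sscr_0$ unchanged and sharpening the critical-configuration specification to
\[
\CS' \;=\; \{\,\tup{\{I_\emptyset @ T\},\;\emptyset}\,\} \ ,
\]
so that the critical configurations are exactly those containing an $I_\emptyset$ fact. By the branch/assignment correspondence established in the proof of Proposition~\ref{th: n-realizability-hard}, a trace becomes non-compliant iff the assignment chosen by rules (\ref{eq-vars}) along that trace satisfies $\widetilde{C}_n$. Hence every $2n$-time trace is compliant iff no assignment satisfies $\widetilde{C}_n$, i.e., iff $\widetilde{C}_n$ is unsatisfiable. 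For the realizability conjunct, whenever $\widetilde{C}_n$ is unsatisfiable no branch ever produces $I_\emptyset$, and by Proposition~\ref{prop:bounded-length} once the instantaneous rules are exhausted only $Tick$ can fire, so a compliant trace with exactly $2n$ $Tick$s is easily obtained. Putting these together, $\Tscr$ satisfies $2n$-\sur iff $\widetilde{C}_n$ is unsatisfiable, and coNP-hardness follows from coNP-completeness of 3-UNSAT.

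I expect the main subtlety to be verifying that the modified system remains a well-formed PTS and that the branch/assignment correspondence of Proposition~\ref{th: n-realizability-hard} is unaffected by the change in $\CS$: both should be essentially routine, since $\CS$ interacts with no rewrite rule and the progressingness conditions of Definition~\ref{def:progressing} depend only on the rules. A secondary edge case to check is when $\widetilde{C}_n$ is a tautology: then every branch eventually produces $I_\emptyset$, so both conjuncts of $2n$-\sur fail, and the reduction still returns the correct answer (UNSAT is false, since any tautology is satisfiable).
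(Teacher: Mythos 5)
Your proposal is correct, and both halves track the paper's proof closely: the NP-hardness part (reuse the 3-SAT encoding of Proposition~\ref{th: n-realizability-hard} with $\CS=\emptyset$, so compliance is vacuous) is identical, and the coNP-hardness part uses the same idea of declaring configurations containing $I_{\emptyset}$ critical, so that non-compliant $2n$-tick traces correspond exactly to satisfying assignments. The one genuine difference is how $2n$-realizability is secured in the unsatisfiable case: the paper augments the system with dummy rules $H_j@T_j \lra H_{j+1}@(T+1)$ and adds $H_0$ to the initial configuration precisely so that a compliant $2n$-tick trace exists ``trivially,'' whereas you keep the system unchanged and argue directly that once the instantaneous rules are exhausted (all $V_i$ consumed, the $I$-fact either stuck on an unsatisfied clause or fully reduced), lazy time sampling permits the always-enabled $Tick$ rule to fire, and Proposition~\ref{prop:bounded-length} guarantees only finitely many instantaneous steps between ticks, so every trace extends to one with exactly $2n$ ticks. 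That direct argument is sound for this particular encoding, so the dummy rules are not strictly necessary; what the paper's construction buys is robustness --- realizability holds by construction without any case analysis on how the computation terminates --- at the cost of slightly enlarging the system. Your handling of the tautology edge case is also correct (the second conjunct of $2n$-survivability already fails, so the reduction answers correctly regardless of the first conjunct).
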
%
\begin{proof}
Recall the 3-SAT encoding from the proof of Proposition \ref{th: n-realizability-hard}, showing that the \nrealiz problem is NP-hard.
Since in the encoding the critical configuration specification is empty,  $\CS = \emptyset$, all traces are necessarily compliant.
 Therefore,
 the \mbox{$n$}-\sur problem is NP-hard as well.

\paragraph{\bf coNP-hardness}
In order to prove that the \nsur problem is coNP-hard,
 we will modify our original system in the following way.

Let  all configurations that include \mbox{$I_{\emptyset}@T$}
  are declared critical, \ie, let \ $\CS'= \{ I_{\emptyset}@T \} $.
 To circumvent the \realiz property issues, we add the
 ``dummy''  rules, \mbox{$j=0,1,2,\dots 2n$}:
\begin{equation}
\label{eq-H}
\begin{array}{@{}l}
 Time@T,\ H_j@T_j \, \mid\, \{~T_j \leq T~\} \ \lra\
 Time@T,\ H_{j+1}@(T+1)
\end{array}
\end{equation}%
 Also, ~$H_0$ is added to
 the initial configuration:
\begin{equation}%
 S_0'=\{\ Time@0,\ V_1@0,\ ...,\ V_n@0,\ I_{\widetilde{C}_n}@0,\ H@0 \}
                                              \label{eq-init-H}
\end{equation}%

 The \mbox{$2n$}-\realiz problem is trivial because of~\eqref{eq-H}.

 If CNF~$\widetilde{C}_n$ is satisfiable then there is a 
 trace~$\tau$ that in $2n$~time steps leads from~$S_0$
 into a configuration~$S$, which include \mbox{$I_{\emptyset}@T$}. Naemly, we spend $n$ time steps, see~\eqref{eq-vars},
 to generate the truth values,
 and the next $n$ time steps, see~\eqref{eq-compute-true},
 to get the correct values of all $n$ disjuncts.

 Since $S$ is critical, the trace~$\tau$ is not compliant,
 and, hence, our updated system does not satisfy \mbox{$2n$}-\sur property.

 If CNF~$\widetilde{C}_n$ is not satisfiable then there is no
 trace that leads from~$S_0$
 into a configuration~$S$, which include \mbox{$I_{\emptyset}@T$}.
 Hence, our system satisfies \mbox{$2n$}-\sur property,
 and the \mbox{$n$}-\sur problem is coNP-hard.

\noindent
 Bringing the things together we conclude that 
 the \nsur problem is both NP-hard and coNP-hard.
\qed
\end{proof}

\vspace{0.5em}
The \nrealiz problem is in NP as stated in the following Proposition. 

\begin{proposition}[\nrealiz problem is in NP]
\label{thm:np-feasible}
\ \\
For a PTS $\Tscr$ assume $\Sigma, \Sscr_0, m,\CS,k,\Dmax$,  $\next, \critical$, and \mustTick\ as described above. 
\\
The \nrealiz problem for $\Tscr$ w.r.t. the \ltss, 
$\CS$, and $\Sscr_0$ is NP-complete.
%
~\cite{kanovich16formats,kanovich16arxivformats}
\end{proposition}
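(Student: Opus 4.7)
The plan is to exhibit a nondeterministic polynomial-time algorithm that guesses and verifies a witnessing trace, since NP-hardness has already been established in Proposition~\ref{th: n-realizability-hard}. I would guess a sequence of configurations $\Sscr_0 \lra \Sscr_1 \lra \cdots \lra \Sscr_N$ together with the rule applied at each step, and then verify in polynomial time that this sequence is a compliant trace from $\Sscr_0$, uses the \ltss, and contains exactly $n$ instances of the $Tick$ rule.

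First, I would bound the bit-size of such a witness. By Lemma~\ref{lem:polysize}, any trace with exactly $n$ $Tick$ rules has length $N \leq (n+1)\cdot m + n$, which is polynomial in $n$ and $m$. Because $\Tscr$ is balanced, every $\Sscr_i$ in the trace contains exactly $m$ facts, each of size at most $k$. The only remaining concern is the magnitude of timestamps: the global time is incremented only by $Tick$ rules, starts at a value bounded by $\Dmax$, and therefore never exceeds $n + \Dmax$; every fact created by an instantaneous rule has timestamp $T + d$ with $d \leq \Dmax$, so no timestamp in the trace exceeds $n + 2\Dmax$. Consequently every timestamp can be encoded in $O(\log n + \log \Dmax)$ bits, and the whole trace fits in space polynomial in $n$, $m$, $k$, and $\log_2(\Dmax)$.

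Second, I would describe the polynomial-time verifier. For each step $i$, use $\next$ to check that the guessed rule is applicable to $\Sscr_i$, apply it to obtain $\Sscr_{i+1}$, and use $\critical$ to check that $\Sscr_i$ is not critical. To enforce the \ltss, if the guessed rule is $Tick$ verify that $\mustTick(\Sscr_i) = 1$ (equivalently, that no instantaneous rule applies at $\Sscr_i$), and if it is instantaneous verify that $\mustTick(\Sscr_i) = 0$. Finally, count $Tick$ occurrences along the guessed sequence and accept iff the count equals $n$ and all checks succeeded. Since $\next$, $\critical$, and $\mustTick$ each run in time polynomial in $m$, $k$, and $\log_2(\Dmax)$, and since the trace has polynomially many steps, the total verification time is polynomial in $n$, $m$, $k$, and $\log_2(\Dmax)$.

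Combining the polynomial-size witness with the polynomial-time verifier places the \nrealiz problem in NP; together with Proposition~\ref{th: n-realizability-hard} this gives NP-completeness. The main subtlety is the timestamp bookkeeping: although timestamps are a priori unbounded, in any $n$-time-bounded trace of a PTS they remain polynomially bounded in the input parameters and therefore admit a succinct binary encoding. Everything else is a routine guess-and-check, so no further machinery (such as $\delta$-representations) is actually needed for the upper bound in the bounded setting.
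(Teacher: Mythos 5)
Your proposal is correct and follows essentially the same route as the paper: a nondeterministic guess-and-check over concrete configurations, using the length bound $(n+1)\cdot m+n$ from Lemma~\ref{lem:polysize} and the polynomial-time functions $\next$, $\critical$, and $\mustTick$ to validate each step and the lazy time sampling, combined with the NP-hardness of Proposition~\ref{th: n-realizability-hard}. Your explicit bookkeeping of timestamp magnitudes (bounded by $n+2\Dmax$, hence succinctly encodable) is a welcome detail that the paper leaves implicit, and your observation that $\delta$-representations are unnecessary in the bounded setting matches the paper's own practice for the time-bounded problems.
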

%
%
%
%
%

\vspace{0,5em}
We now provide the upper bound complexity  result for the \nsur problem for progressing timed MSRs.

Recall that for the \nsur property, we need to show that:
\begin{enumerate}
\item $\Tscr$ satisfies \nrealiz property with respect to $\CS$;
\item All traces using the \lts 
with exactly $n$ ticks are compliant with respect to $\CS$.
\end{enumerate}

\begin{proposition}[\nsur problem is in $\Delta_2^p$ ]
\label{th: n-time-survivability}
\ \\
For a PTS $\Tscr$ assume $\Sigma, \Sscr_0, m,\CS,k,\Dmax$,  $\next, \critical$, and \mustTick\ as described above. 

The \nsur problem
for $\Tscr$
w.r.t.  the \ltss, 
$\CS$ and $\Sscr_0$ is in the class 
$\Delta_2^p$ of the polynomial hierarchy ($P^{NP}$) with input $\Sscr_0$.
~\cite{kanovich16formats,kanovich16arxivformats}
\end{proposition}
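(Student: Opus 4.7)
The plan is to decompose the $n$-\sur problem into two subproblems that can each be handled by a single NP oracle call, and then combine them through a deterministic polynomial-time procedure with oracle access to NP, yielding membership in $P^{NP} = \Delta_2^p$.

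First, observe that by Definition~\ref{def:n-survivabilty}, $\Tscr$ satisfies the \nsur property with respect to $\Sscr_0$, $\CS$, and the \lts if and only if \emph{both} of the following hold:
\begin{enumerate}
\item[(A)] $\Tscr$ satisfies the \nrealiz property w.r.t.\ $\Sscr_0$, $\CS$, and the \ltss; and
\item[(B)] every trace of $\Tscr$ starting from $\Sscr_0$, using the \lts and containing exactly $n$ instances of the $Tick$ rule, is compliant w.r.t.\ $\CS$.
\end{enumerate}
Condition (A) is in NP by Proposition~\ref{thm:np-feasible}, so it is decidable by a single NP oracle call.

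For condition (B), I would consider its negation: there exists some trace $\Pscr$ from $\Sscr_0$ using the \lts with exactly $n$ $Tick$ rules that contains a critical configuration. By Lemma~\ref{lem:polysize}, any such trace has length at most $(n+1)\cdot m + n$, which is polynomial in the input size. Each configuration in the trace has at most $m$ facts of size at most $k$, so each configuration (with its polynomially bounded timestamps) has polynomial size. Therefore a non-deterministic machine can: (i) guess the sequence of rule instances and witness substitutions forming $\Pscr$, whose total encoding is polynomial; (ii) verify in polynomial time, using $\next$ and $\mustTick$, that each step is a legal application respecting the \lts; (iii) count the number of $Tick$ rules and verify it equals $n$; (iv) check via $\critical$ that some configuration along $\Pscr$ is critical. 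Since $\next$, $\mustTick$, and $\critical$ each run in time polynomial in $m$, $k$, and $\log_2(\Dmax)$, the whole verification is polynomial, so the negation of (B) is in NP and hence (B) is in coNP.

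Putting the pieces together, a deterministic polynomial-time Turing machine can decide \nsur as follows: make one NP oracle query for (A); make one NP oracle query for the negation of (B); accept iff the first query accepts and the second rejects. This uses a constant number of oracle calls and polynomial-time local computation, so the problem lies in $P^{NP} = \Delta_2^p$. The only subtle point, and the step I would pay closest attention to, is ensuring that the polynomial bound on trace length from Lemma~\ref{lem:polysize} indeed carries over when rules are applied directly to configurations (not $\delta$-representations), and that the timestamps encountered in such a trace remain representable in polynomial space---this is straightforward because at most $n$ $Tick$ rules are applied and every created timestamp is of the form $T + d_i$ with $d_i \le \Dmax$, so all timestamp magnitudes stay bounded by a polynomial in the input, keeping each configuration's encoding polynomial.
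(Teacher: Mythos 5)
Your proposal is correct and follows essentially the same route the paper takes: it decomposes $n$-\sur into the \nrealiz check (NP, by Proposition~\ref{thm:np-feasible}) plus the coNP check that all \lts traces with exactly $n$ ticks are compliant (whose complement is verified by guessing a trace of length at most $(n+1)\cdot m+n$ via Lemma~\ref{lem:polysize}), and combines the two oracle calls in $P^{NP}=\Delta_2^p$ --- exactly the two-condition setup the paper states immediately before the proposition and defers to~\cite{kanovich16formats,kanovich16arxivformats}. Your added remark about timestamps remaining polynomially bounded is a reasonable and correct piece of diligence.
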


Finally, we provide the complexity 
upper bound for the \nrel problem for PTSes.

\begin{proposition}[\nrel problem is in $\Pi_2^p$ ]
\label{th: n-time-newprop}\ \\
For a PTS $\Tscr$ assume $\Sigma, \Sscr_0$, $m$, $\CS$, $k$, $\Dmax$,  $\next, \critical$, and \mustTick\ as described above. 

The \nrel  problem 
for $\Tscr$ w.r.t. the \ltss, 
$\CS$ and $\Sscr_0$ is in  
 \mbox{$\Pi_2^p$}, the second class 
of the polynomial hierarchy 
with input $\Sscr_0$.
\end{proposition}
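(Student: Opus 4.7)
The plan is to exhibit the \nrel problem as a $\forall\exists$ decision problem over polynomial-size certificates with polynomial-time verification, which is the structural definition of $\Pi_2^p$. The enabling fact is Lemma~\ref{lem:polysize}: since $\Tscr$ is progressing, any trace from $\Sscr_0$ that uses the \lts and contains at most $n$ instances of the $Tick$ rule has length at most $(n+1)m+n$, a polynomial in the input. Hence every trace mentioned in the definition of the \nrel property admits a polynomial-size encoding, and the same bound holds for the extending trace $\Pscr'$, which by definition contains exactly $n$ ticks.

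The algorithm begins with a deterministic polynomial-time preprocessing step: evaluate $\critical(\Sscr_0)$; if $\Sscr_0$ is critical then the \nrealiz property fails and hence the \nrel property fails, so we reject. Otherwise, the algorithm checks the $\forall\exists$ clause: for every candidate trace $\Pscr$ from $\Sscr_0$ of polynomial length, either $\Pscr$ fails to be a compliant lazy trace with at most $n$ ticks (in which case nothing need be witnessed), or there must exist a trace $\Pscr'$ of polynomial length extending $\Pscr$ that is compliant, uses the \ltss, and has exactly $n$ instances of $Tick$. Validating a concrete pair $(\Pscr,\Pscr')$ reduces to walking the sequences of configurations while invoking $\next$, $\critical$, and $\mustTick$ at each step and counting ticks; the fact-size bound $k$ keeps each configuration of polynomial size, so the verifier runs in deterministic polynomial time, placing the whole problem in $\Pi_2^p$.

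One subtle point deserves care so that the complexity bound stays tight. A priori, the definition of the \nrel property is the conjunction of the NP condition \nrealiz with a $\forall\exists$ condition, which could seem to live above $\Pi_2^p$. However, once $\Sscr_0$ has been checked to be non-critical in the preprocessing step, the \nrealiz conjunct is subsumed by the $\forall\exists$ clause applied to the trivial one-configuration trace $\Pscr$ consisting of $\Sscr_0$ alone: this $\Pscr$ is compliant, uses the \lts vacuously, and has zero ticks, so an existential extension $\Pscr'$ with exactly $n$ ticks is precisely a witness for the \nrealiz property. Thus the entire problem reduces cleanly to the $\forall\exists$ clause without leaving $\Pi_2^p$.

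The main obstacle is not any deep combinatorial or algorithmic difficulty but rather the bookkeeping around the quantifier structure, and in particular two items: first, faithfully re-expressing the outer universal quantification of Definition~\ref{def:n-new-prop} (phrased over configurations $\Sscr$ reachable from $\Sscr_0$ by compliant lazy traces with at most $n$ ticks) as a universal quantification over polynomial-size trace certificates, which is sound because each such $\Sscr$ is uniquely identified with the trace leading to it; and second, folding in the \nrealiz conjunct via the trivial-trace observation so that the conjunction does not escape $\Pi_2^p$. Once these points are settled, the result follows directly from the syntactic match with the $\forall\exists$ characterization of $\Pi_2^p$.
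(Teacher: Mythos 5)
Your proposal is correct and follows essentially the same route as the paper: both arguments use the polynomial length bound of Lemma~\ref{lem:polysize} to encode the universally quantified compliant prefix and the existentially quantified compliant extension as polynomial-size certificates checkable in polynomial time, yielding the $\forall\exists$ form characterizing $\Pi_2^p$. You are somewhat more explicit than the paper about tick counting, the lazy time sampling checks, and absorbing the $n$-$Z$ conjunct via the trivial one-configuration trace (the paper instead packages the extension as a chain of one-step extensions $L_k \rightarrow \exists S_{k+1}\, L_{k+1}$), but these are presentational differences, not a different proof.
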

\begin{proof}
 We formalize the bounded reliability problem
 with the following predicate \ \mbox{$L_n(S_0,S_1,\dots,S_n)$},
 which means that
\begin{quote}
 `` The configurations $S_0$, $S_1$, $S_2$, dots, $S_n$, form a trace
 $$ S_0 \rightarrow S_1 \rightarrow S_2 \rightarrow \dots
 \rightarrow S_n $$%
 in which all $S_0$, $S_1$, $S_2$, dots, $S_n$ are not critical.''
\end{quote}

\noindent
 With the initial~$S_0$,
 then the system is n-L reliable iff (realizability)
$$ \exists S'_1\ \exists S'_2\ \exists\ \dots\ \exists S'_n
 L_n(S_0,S'_1,\dots,S'_n) $$%
 and (reliability) for any~$k$ such that \mbox{$k=0,1,\dots,n$},
 the following holds: 
$$ \forall S_1\  \forall S_2\  \forall\dots\ \forall S_k\ 
( L_k(S_0,S_1,\dots,S_k) \longrightarrow
\exists S_{k+1}\ L_{k+1}(S_0,S_1,\dots,S_k,S_{k+1}))
 $$%
 Hence we can rewrite the above statement in the form of $\Pi_2^p$,
 the second class in the polynomial hierarchy, 
$$ \forall x_1\  \forall x_2\  \forall\dots\ \forall x_m\
   \exists y_1\  \exists y_2\  \exists\dots\  \exists y_m\
   Q(x_1, x_2, \dots, x_m, y_1, y_2, \dots, y_m)
$$%
 where\  \mbox{$Q(x_1, x_2, \dots, x_m, y_1, y_2, \dots, y_m)$}\
 is recognizable in polynomial time.

\qed
\end{proof}

The obtained complexity result for the time-bounded verification problem of \rel problem is stated in the next theorem.

\begin{theorem}
\label{thm:np-comp}
Assume $\Sigma$ a finite alphabet, $\Tscr$ a PTS, 
$\Sscr_0$ an initial configuration, $m$ the number of facts in $\Sscr_0$, $\CS$ a critical configuration specification and $k$ an upper-bound on the size of facts.
Let functions $\next, \critical$, and \mustTick\ 
run in Turing time bounded by a polynomial in $m$ and $k$ 
 and return 1, respectively, when a rule in $\Tscr$ is applicable to a given configuration, when a configuration is critical with respect to $\CS$, and when a $Tick$ rule should be applied to the given configuration.
 
 The  \nrealiz problem for $\Tscr$ w.r.t. the \ltss, 
 $\CS$, and $\Sscr_0$ is NP-complete with input $\Sscr_0$.

The \nsur  problem for $\Tscr$ w.r.t. the \ltss, 
$\CS$, and $\Sscr_0$ is both NP-hard and coNP-hard with input $\Sscr_0$. Furthermore, the \nsur problem for $\Tscr$ w.r.t. the \ltss, 
$\CS$, and $\Sscr_0$ is in the
class $\Delta_2^p$ of the polynomial hierarchy ($P^{NP}$) with input $\Sscr_0$.

The  \nrel problem for $\Tscr$ w.r.t. the \ltss, 
$\CS$ and $\Sscr_0$ is in the class 
$\Pi_2^p$ of the polynomial hierarchy 
with input the $\Sscr_0$.

%
%
%
\end{theorem}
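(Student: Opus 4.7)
The plan is to assemble this theorem as a direct corollary of the propositions proved immediately before it, since each clause restates a result already established in Section~\ref{sec:complex}. No new argument is required; the theorem simply collects the lower and upper bounds under the common hypotheses on $\Sigma, \Tscr, \Sscr_0, m, \CS, k$ and on the auxiliary functions $\next, \critical, \mustTick$.

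First, for the $n$-\realiz part, I would combine Proposition~\ref{th: n-realizability-hard}, which gives NP-hardness via the reduction from 3-SAT (guessing a truth assignment using the $V_i$-rules~\eqref{eq-vars} and verifying satisfaction with the clause-elimination rules~\eqref{eq-compute-true} in $2n$ time steps), with Proposition~\ref{thm:np-feasible}, which gives NP-membership by guessing a trace of polynomial length $(n+1)m+n$ (Lemma~\ref{lem:polysize}) and verifying it with $\next$, $\critical$, and $\mustTick$ in polynomial time. Together these yield NP-completeness.

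Second, for the $n$-\sur part, the NP-hardness and coNP-hardness are exactly Proposition~\ref{th: n-survivability-hard}: NP-hardness is inherited from the 3-SAT reduction (which uses $\CS=\emptyset$, so every trace is compliant), and coNP-hardness follows from the modified encoding with $\CS'=\{I_\emptyset@T\}$ and the dummy $H_j$-rules~\eqref{eq-H}, which forces a critical trace to exist iff $\widetilde{C}_n$ is satisfiable. The $\Delta_2^p$ upper bound is Proposition~\ref{th: n-time-survivability}: one first calls an NP oracle to decide $n$-\realiz, and then uses the oracle to check that no configuration reachable in at most $n$ ticks is critical, all within polynomial time using $\next, \critical, \mustTick$.

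Third, for the $n$-\rel part, I would invoke Proposition~\ref{th: n-time-newprop}, where $n$-\rel is formalized as the $\Pi_2^p$ formula
\[
\forall S_1 \cdots \forall S_k \bigl( L_k(S_0,S_1,\dots,S_k) \rightarrow \exists S_{k+1} \; L_{k+1}(S_0,S_1,\dots,S_{k+1}) \bigr),
\]
combined with the existential witness for realizability, the matrix being polynomial-time checkable using the same auxiliary functions and the length bound of Lemma~\ref{lem:polysize}. Since there is no genuine new obstacle, the only thing to be careful about in the write-up is to ensure the hypothesis on $\next, \critical, \mustTick$ running in polynomial time (not just polynomial space, as in the unbounded theorem) is exactly the hypothesis each cited proposition uses; this is indeed the formulation introduced just before the time-bounded results.
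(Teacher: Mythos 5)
Your proposal is correct and matches the paper's own treatment: Theorem~\ref{thm:np-comp} is simply a collection of Propositions~\ref{th: n-realizability-hard}, \ref{thm:np-feasible}, \ref{th: n-survivability-hard}, \ref{th: n-time-survivability} and \ref{th: n-time-newprop}, assembled under the common polynomial-\emph{time} hypothesis on $\next$, $\critical$ and $\mustTick$ stated in the remark preceding the time-bounded results. Your summaries of how each cited proposition is established are also faithful to the paper's arguments.
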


\section{Related and Future Work } 
\label{sec:related}

In this paper, we study a subclass of timed MSR systems called progressing timed systems introduced in \cite{kanovich16formats}, which is defined by imposing syntactic restrictions on MSR rules. 

We discuss two verification problems, namely realizability and survivability, introduced in \cite{kanovich16formats}, and also consider two new properties, \newprop~and  \prop, defined over infinite traces. We show that these problems are PSPACE-complete for progressing timed systems, and when we additionally impose a bound on time, the realizability becomes NP-complete,  the survivability is in $\Delta_2^p$, the \newprop~is in $\Pi_2^p$ class of the polynomial hierarchy, and  the survivability is both NP-hard and coNP-hard. The lower bound for the $n$-time reliability is left for future work.

These problems involve quantitative temporal properties of timed systems and explicit time constraints,  and to the best of our knowledge have not been studied in the rewriting literature.
%
%
We review some of the formalisms for specifying quantitative temporal properties of timed systems 
such as timed automata, temporal logic, and rewriting. 
%

Others have proposed languages for specifying properties that allow explicit time constraints. We review some of these formalisms, such as timed automata, temporal logic, and rewriting. 

Our progressing condition is related to the {finite-variability assumption} used in the temporal logic and timed automata literature~\cite{faella08entcs,laroussinie03tcs,lutz05time,alur91rex,alur04sfm}, requiring that  in any bounded interval of time, there can be only finitely many observable events or state changes. Similarly, progressing systems have the property that only a finite number of instantaneous rules can be applied in any bounded interval of time (Proposition~\ref{prop:bounded-length}). Such a property seems to be necessary for the decidability of many temporal verification problems.

The work~\cite{alpern87dc,clarkson10jcs} classifies (sets of) traces as safety, liveness, or properties that can be reduced to subproblems of safety and liveness. Following this terminology, properties that relate to our verification problems over infinite traces contain both elements of safety and elements of liveness. 
 We do not see how this can be expressed precisely in terms of~\cite{alpern87dc,clarkson10jcs}. 
 We leave this investigation to future work.

As discussed in detail in the Related Work section of our previous work~\cite{kanovich.mscs}, there are some important differences between our timed MSR model and timed automata~\cite{alur91rex,alur04sfm}, both in terms of expressive power and decidability proofs. For example, a description of a timed MSR system uses first order formulas with variables, whereas timed automata can only refer to transitions on ground states. That is, timed MSR is essentially a first-order language, while timed automata are propositional. Replacing a first order description of timed MSR by all its instantiations, would lead to an exponential explosion. Furthermore, in contrast with the timed automata paradigm, in timed MSR we can naturally manipulate facts both in the past, in the future, as well as in the present.

The temporal logic literature has proposed many languages for the specification and verification of timed systems. 
Many temporal logics contain quantitative temporal operators, \eg,~\cite{lutz05time,laroussinie03tcs}, including time-constrained temporal operators. 
 Metric Temporal Logic (MTL)~\cite{koymans90} involves (bounded or unbounded) timing constraints on temporal operators similar to our time constraints. The growing literature on MTL explores the expressivity and  decidability of fragments of such temporal logics
~\cite{Ouaknine06TACAS}.
However, the temporal logic literature does not discuss notions similar to \eg, realizability or survivability.  
In addition to that, an important feature of our model is that the specifications are executable. 
As we have shown through experiments in~\cite{kanovich16formats}, it is feasible to analyze fairly large progressing systems using the rewriting logic tool Maude~\cite{clavel-etal-07maudebook}.

Real-Time Maude~\cite{olveczky08tacas} is a tool for simulation and analysis of real-time systems. Rewrite rules are partitioned into
instantaneous rules and rules that advance time, with
instantaneous rules taking priority. 
Our lazy time sampling is inspired by such management of time in traces.
Time advance rules in Real-Time Maude
may place a bound on the amount of time to advance, but do
not determine a specific amount, thus, allowing the system to be continuously observed.
 Time sampling strategies are
used to implement search and model-checking analysis.
{\"{O}}lveczky and Messeguer~\cite{olveczky07entcs} investigate conditions under
which the maximal time sampling strategy used in Real-Time
Maude is complete. One of the required conditions is
tick-stabilizing, which is similar to progressing and the
finite variability assumption in that one assumes a bound
on the number of actions that can be applied in a finite time.

Cardenas~\etal~\cite{cardenas08icdcs} discuss possible verification problems of cyber-physical systems in the presence of malicious intruders. They discuss surviving attacks, such as denial of service attacks on control mechanisms of devices. We believe that our progressing timed systems can be used to define meaningful intruder models and formalize the corresponding survivability notions. This may lead to automated analysis of such systems similar to the successful use of the Dolev-Yao intruder model~\cite{DY83} for protocol security verification. Given the results of this paper, the decidability of any security problem would most likely involve a progressing timed intruder model.
We intend to investigate the security aspects of this work in the future. For example, the introduction of timed intruder models~\cite{kanovich13ic}, and resource-bounded intruder models~\cite{kanovich21jcs} may enable verification of whether intruders can cause PTSes to reach hazardous situations, \eg, harm to people or crashes.

We believe that some of our properties, in particular survivability, can be interpreted using game theory. 
We find that our model has some features that are better suited for applications relating to TSDSes, in particular explicit time, quantitative time conditions and nonces.
It would be interesting to investigate connections and differences between our rewriting approach to these problems and the game theory approach.

Finally, we have already done some preliminary research into ways to extend this work to dense time domains. We expect our results to hold for dense time domains as well, given our previous work~\cite{kanovich17jcs,kanovich19csf,kanovich21jcs}. There, instead of the $Tick$ rule (Eq.~\ref{eq:tick}), we assume a $Tick$ rule of the form ~$Time@T \lra Time@(T + \varepsilon)$, where $\varepsilon$ can be instantiated by any positive real number.  
The assumption of dense time is a challenge that considerably increases the machinery needed to prove our results, but we are confident of finding ways to combine the results of~\cite{kanovich17jcs} with those presented in this paper.  
Similarly, for our future work, we intend to investigate extensions of our models with probabilities.

\vspace{5mm}

\emph{Acknowledgments:}
 Ban Kirigin is supported in part by the Croatian Science Foundation under the project UIP-05-2017-9219. 
 The work of Max Kanovich was partially supported by EPSRC Programme Grant EP/R006865/1: “Interface Reasoning for Interacting Systems (IRIS).”
 Nigam is partially supported by NRL grant N0017317-1-G002, and CNPq grant 303909/2018-8. 
 Scedrov was partially supported by the U. S. Office of Naval Research under award numbers N00014-20-1-2635 and N00014-18-1-2618. 
 Talcott was partially supported by the U. S. Office of Naval Research under award numbers  N00014-15-1-2202 and N00014-20-1-2644, and NRL grant N0017317-1-G002.




\begin{thebibliography}{10}

\bibitem{alpern87dc}
B.~Alpern and F.~B. Schneider.
\newblock Recognizing safety and liveness.
\newblock {\em Distributed Computing}, 2(3):117--126, 1987.

\bibitem{alur91rex}
R.~Alur and T.~A. Henzinger.
\newblock Logics and models of real time: {A} survey.
\newblock In {\em Real-Time: Theory in Practice, {REX} Workshop}, pages
  74--106, 1991.

\bibitem{alur04sfm}
R.~Alur and P.~Madhusudan.
\newblock Decision problems for timed automata: A survey.
\newblock In {\em SFM}, pages 1--24, 2004.

\bibitem{cardenas08icdcs}
A.~A. C{\'{a}}rdenas, S.~Amin, and S.~Sastry.
\newblock Secure control: Towards survivable cyber-physical systems.
\newblock In {\em {ICDCS}}, pages 495--500, 2008.

\bibitem{cervesato99csfw}
I.~Cervesato, N.~A. Durgin, P.~Lincoln, J.~C. Mitchell, and A.~Scedrov.
\newblock A meta-notation for protocol analysis.
\newblock In {\em CSFW}, pages 55--69, 1999.

\bibitem{clarkson10jcs}
M.~R. Clarkson and F.~B. Schneider.
\newblock Hyperproperties.
\newblock {\em Journal of Computer Security}, 18(6):1157--1210, 2010.

\bibitem{clavel-etal-07maudebook}
M.~Clavel, F.~Dur\'an, S.~Eker, P.~Lincoln, N.~Mart\'i-Oliet, J.~Meseguer, and
  C.~Talcott.
\newblock {\em All About Maude: A High-Performance Logical Framework}, volume
  4350 of {\em LNCS}.
\newblock Springer, 2007.

\bibitem{DY83}
D.~Dolev and A.~Yao.
\newblock On the security of public key protocols.
\newblock {\em IEEE Transactions on Information Theory}, 29(2):198--208, 1983.

\bibitem{durgin04jcs}
N.~A. Durgin, P.~Lincoln, J.~C. Mitchell, and A.~Scedrov.
\newblock Multiset rewriting and the complexity of bounded security protocols.
\newblock {\em Journal of Computer Security}, 12(2):247--311, 2004.

\bibitem{enderton}
H.~B. Enderton.
\newblock {\em A mathematical introduction to logic}.
\newblock Academic Press, 1972.

\bibitem{faella08entcs}
M.~Faella, A.~Legay, and M.~Stoelinga.
\newblock Model checking quantitative linear time logic.
\newblock {\em Electr. Notes Theor. Comput. Sci.}, 220(3):61--77, 2008.

\bibitem{kanovich10fcsPriMod}
M.~Kanovich, T.~{Ban Kirigin}, V.~Nigam, and A.~Scedrov.
\newblock Progressing collaborative systems.
\newblock In {\em FCS-PrivMod}, 2010.

\bibitem{kanovich13esorics}
M.~Kanovich, T.~{Ban Kirigin}, V.~Nigam, and A.~Scedrov.
\newblock Bounded memory protocols and progressing collaborative systems.
\newblock In {\em ESORICS}, pages 309--326, 2013.

\bibitem{kanovich13ic}
M.~Kanovich, T.~{Ban Kirigin}, V.~Nigam, and A.~Scedrov.
\newblock Bounded memory dolev--yao adversaries in collaborative systems.
\newblock {\em Information and Computation}, 238:233--261, 2014.

\bibitem{kanovich15post}
M.~Kanovich, T.~{Ban Kirigin}, V.~Nigam, A.~Scedrov, and C.~Talcott.
\newblock Discrete vs. dense times in the analysis of cyber-physical security
  protocols.
\newblock In {\em Principles of Security and Trust - 4th International
  Conference, {POST}}, pages 259--279, 2015.

\bibitem{kanovich16formats}
M.~Kanovich, T.~{Ban Kirigin}, V.~Nigam, A.~Scedrov, and C.~Talcott.
\newblock Timed multiset rewriting and the verification of time-sensitive
  distributed systems.
\newblock In {\em 14th International Conference on Formal Modeling and Analysis
  of Timed Systems (FORMATS)}, 2016.

\bibitem{kanovich16arxivformats}
M.~Kanovich, T.~{Ban Kirigin}, V.~Nigam, A.~Scedrov, and C.~L. Talcott.
\newblock Timed multiset rewriting and the verification of time-sensitive
  distributed systems: Technical report.
\newblock Available at http://arxiv.org/abs/1606.07886, 2016.

\bibitem{kanovich17jcs}
M.~Kanovich, T.~{Ban Kirigin}, V.~Nigam, A.~Scedrov, and C.~L. Talcott.
\newblock Time, computational complexity, and probability in the analysis of
  distance-bounding protocols.
\newblock {\em Journal of Computer Security}, 25(6):585--630, 2017.

\bibitem{kanovich12rta}
M.~Kanovich, T.~{Ban Kirigin}, V.~Nigam, A.~Scedrov, C.~L. Talcott, and
  R.~Perovic.
\newblock A rewriting framework for activities subject to regulations.
\newblock In {\em RTA}, pages 305--322, 2012.

\bibitem{kanovich.mscs}
M.~Kanovich, T.~{Ban Kirigin}, V.~Nigam, A.~Scedrov, C.~L. Talcott, and
  R.~Perovic.
\newblock A rewriting framework and logic for activities subject to
  regulations.
\newblock {\em Mathematical Structures in Computer Science}, 27(3):332--375,
  2017.

\bibitem{kanovich11jar}
M.~Kanovich, P.~Rowe, and A.~Scedrov.
\newblock Collaborative planning with confidentiality.
\newblock {\em J. Autom. Reasoning}, 46(3-4):389--421, 2011.

\bibitem{koymans90}
R.~Koymans.
\newblock Specifying real-time properties with metric temporal logic.
\newblock {\em Real-time systems}, 2(4):255--299, 1990.

\bibitem{laroussinie03tcs}
F.~Laroussinie, P.~Schnoebelen, and M.~Turuani.
\newblock On the expressivity and complexity of quantitative branching-time
  temporal logics.
\newblock {\em Theor. Comput. Sci.}, 297(1-3):297--315, 2003.

\bibitem{lutz05time}
C.~Lutz, D.~Walther, and F.~Wolter.
\newblock Quantitative temporal logics: {PSPACE} and below.
\newblock In {\em {TIME}}, pages 138--146, 2005.

\bibitem{olveczky07entcs}
P.~C. {\"{O}}lveczky and J.~Meseguer.
\newblock Abstraction and completeness for real-time maude.
\newblock {\em Electr. Notes Theor. Comput. Sci.}, 176(4):5--27, 2007.

\bibitem{olveczky08tacas}
P.~C. {\"{O}}lveczky and J.~Meseguer.
\newblock The real-time maude tool.
\newblock In {\em {TACAS} 2008}, pages 332--336, 2008.

\bibitem{Ouaknine06TACAS}
J.~Ouaknine and J.~Worrell.
\newblock Safety metric temporal logic is fully decidable.
\newblock In {\em International Conference on Tools and Algorithms for the
  Construction and Analysis of Systems}, pages 411--425. Springer, 2006.

\bibitem{papadimitriou07book}
C.~H. Papadimitriou.
\newblock {\em Computational complexity}.
\newblock Academic Internet Publ., 2007.

\bibitem{savitch}
W.~J. Savitch.
\newblock Relationship between nondeterministic and deterministic tape classes.
\newblock {\em Journal of Computer and System Sciences}, 4:177--192, 1970.

\bibitem{talcott15wirsing}
C.~Talcott, F.~Arbab, and M.~Yadav.
\newblock Soft agents: Exploring soft constraints to model robust adaptive
  distributed cyber-physical agent systems.
\newblock In {\em Software, Services, and Systems - Essays Dedicated to Martin
  Wirsing}, pages 273--290, 2015.

\bibitem{talcott16quanticol}
C.~Talcott, V.~Nigam, F.~Arbab, and T.~Kapp\'e.
\newblock Formal specification and analysis of robust adaptive distributed
  cyber-physical systems.
\newblock In M.~Bernardo, R.~D. Nicola, and J.~Hillston, editors, {\em Formal
  Methods for the Quantitative Evaluation of Collective Adaptive Systems},
  LNCS. 2016.
\newblock 16th edition in the series of Schools on Formal Methods (SFM),
  Bertinoro (Italy), 20-24 June 2016.

\bibitem{kanovich21jcs}
A.~Urquiza, M.~A. Alturki, T.~{Ban Kirigin}, M.~Kanovich, V.~Nigam, A.~Scedrov,
  and C.~Talcott.
\newblock Resource and timing aspects of security protocols.
\newblock {\em Journal of Computer Security}, 29(3):299--340, 2021.

\bibitem{kanovich19csf}
A.~Urquiza, M.~A. AlTurki, M.~Kanovich, T.~{Ban Kirigin}, V.~Nigam, A.~Scedrov,
  and C.~Talcott.
\newblock Resource-bounded intruders in denial of service attacks.
\newblock In {\em 2019 IEEE 32nd Computer Security Foundations Symposium
  (CSF)}, pages 382--396. IEEE, 2019.

\end{thebibliography}


\end{document}